\keywords{Separation Logic, Program Analysis/Verification, Decidability, Arrays, Lists}
\def\T#1{\hbox{\color{green}{$\clubsuit$ #1}}}
\begin{document}
%\sloppy \hbadness=10000 \vbadness=10000

\def\prooflineskip{\def\arraystretch{2.5}}

\def\SKIP{10pt}
\def\SSKIP{1pt} % short skip

% Math symbols
\def\iffDef{\overset{{\rm def}}{\Longleftrightarrow}}
\def\eqDef{\overset{{\rm def}}{=}}
\def\Underscore{\_}

% System names
\def\SLAR{\hbox{\bf SLA}}
\def\SLLA{\hbox{\bf SLAL}}
\def\SLG{\hbox{\bf G}}
\def\SLGL{\hbox{${\bf G}^+$}}
\def\Checker{\textbf{SLar}}
\def\ZZZ{\textbf{Z3}}

% SL symbols
%\def\Disj{\hbox{Disj}}
\def\Vars{\hbox{Vars}}
\def\Loc{\hbox{Loc}}
\def\Val{\hbox{Val}}
\def\FV{\hbox{FV}}
\def\Nil{\hbox{nil}}
\def\Emp{\hbox{Emp}}
\def\Pto{\mapsto}
\def\Arr{{\rm Arr}}
\def\Ls{{\rm ls}}
\def\Dll{{\rm dll}}
\def\Dom{{\rm Dom}}
\def\Ran{{\rm Ran}}
\def\Term{{\rm Tm}}
\def\Apply#1{{\rm Apply_{#1}}}
\def\Repeat#1{{\rm Repeat_{#1}}}
\def\Entl{\mathcal{E}}
\def\Pow{\mathcal{P}}
\def\R{\mathcal{R}}
\def\Store{{\rm Store}}
\def\Heap{{\rm Heap}}
\def\Sorted{{\rm Sorted}}
\def\mkPi{{\rm mkPi}}
\def\Perm{{\rm Perm}}
\def\pt{{\it pt}}
\def\T{\mathcal{T}}

% Hoare Logic
\def\Tri#1#2#3{\{{#1}\}\hskip 0.5ex{#2}\hskip 0.5ex\{{#3}\}}

% Program symbols
\def\Malloc{{\tt malloc}}
\def\Free{{\tt free}}
\def\If{{\tt if}\ }
\def\Else{\ {\tt else}\ }
\def\While{{\tt while}\ }

\def\False{\hbox{False}}
\def\True{\hbox{True}}

\def\Deriv#1#2#3{({\rm #1},#2,#3)}
\def\Search{{\tt search}}
\def\Fail{{\tt fail}}
\def\Valid{{\tt valid}}
\def\Invalid{{\tt invalid}}

\def\th{\widetilde{h}}
\def\wild{\hbox{-}}
\def\Cell#1#2{#1\to\downarrow#2}

\def\AssignK#1#2{#1\ :=\ #2}
\def\WhileK#1{{\bf while}(#1)}
\def\EndWhileK{{\bf end\ while}}
\def\FunctionK{{\bf function} }
\def\EndFunctionK{{\bf end\ function}}
\def\TakeK#1{{\bf take} #1}
\def\UnfinishedK{{\rm Unfinished}}
\def\FinishedK{{\rm Finished}}
\def\ReturnK#1{{\bf return}(#1)}
\def\IfK#1{{\bf if}\ (#1)}
\def\ElseK{{\bf else}\ }
\def\ElifK#1{{\bf else if}\ (#1)}
\def\EndIfK{{\bf end\ if}}

\def\qqOne{\quad}
\def\qqTwo{\quad\qqOne}
\def\qqThr{\quad\qqTwo}
\def\qqFor{\quad\qqThr}
\def\qqFiv{\quad\qqFor}
\def\qqSix{\quad\qqFiv}
\def\qqSev{\quad\qqSix}
\def\qqEig{\quad\qqSev}
\def\qqNin{\quad\qqEig}
\def\qqTen{\quad\qqNin}
\def\qqEle{\quad\qqTen}
\def\qqTwe{\quad\qqEle}
\def\qqTht{\quad\qqTwe}

\def\Deg{{\rm deg}}
\def\numPto{{\sharp_{\Pto}}}
\def\numList{{\sharp_{\rm lists}}}
\def\degUnfold#1#2{|#1|^{\rm Unfold}_{#2}}
\def\degEM#1#2{|#1|^{\rm EM}_{#2}}
\def\Rank#1#2{||#1||_{#2}}

%\def\Proof{\noindent{\em Proof.}\quad}
%\def\COND{(\star)}

% Change Indexes of eqation
\renewcommand{\theequation}{\alph{equation}}
%%%

% For tikz
\newcommand\mkcell[3]{{%
 \node[draw=black] (b1) at ($1.5*(#1,0)$) {\tiny #2};
 \node[draw=black, below = 0 of b1] {\tiny #3};
}}
\newcommand\mkarrow[2]{{%
 \draw[->] ($1.5*(#1,0)+(0.1,0)$) -- ++(1,0);
 \node [text width = 16pt] at ($1.5*(#1,0)+(0.7,0.2)$) {\tiny #2};
}}
\newcommand\mksarrow[2]{{%
 \draw[->] ($1.5*(#1,0)+(0.3,0)$) -- ++(0.8,0);
 \node [text width = 16pt] at ($1.5*(#1,0)+(0.7,0.2)$) {\tiny #2};
}}
\newcommand\mklarrow[2]{{%
 \draw[->] ($1.5*(#1,0)+(0.1,0)$) -- ++(2.5,0);
 \node [text width = 50pt] at ($1.5*(#1,0)+(1.3,0.2)$) {\tiny #2};
}}
\newcommand\mkarrowr[2]{{%
 \draw[<-] ($1.5*(#1,0)+(0.3,-0.45)$) -- ++(1.1,0);
 \node [text width = 16pt] at ($1.5*(#1,0)+(0.7,-0.65)$) {\tiny #2};
}}
\newcommand\mksarrowr[2]{{%
 \draw[<-] ($1.5*(#1,0)+(0.3,-0.45)$) -- ++(0.8,0);
 \node [text width = 16pt] at ($1.5*(#1,0)+(0.7,-0.65)$) {\tiny #2};
}}
\newcommand\mkdots[1]{{%
 \node at ($1.5*(#1,0)$) {\tiny ...};
}}
\newcommand\mkarrowout[2]{{%
 \draw[<-] ($1.5*(#1,0)+(0.3,-0.45)$) -- ++(0.8,0);
 \node [text width = 16pt] at ($1.5*(#1,0)+(0.7,-0.65)$) {\tiny #2};
}}

\title[
Decidability for Entailments of 
Symbolic Heaps with Arrays
]{
Decidability for Entailments of 
Symbolic Heaps\texorpdfstring{\\}{ }with Arrays
}
\author[Daisuke Kimura]{Daisuke Kimura\rsuper{a}}
\address{\lsuper{a}Toho University, Chiba, Japan}
\email{kmr@is.sci.toho-u.ac.jp}

\author[Makoto Tatsuta]{Makoto Tatsuta\rsuper{b}}
\address{\lsuper{b}National Institute of Informatics, Tokyo, Japan}
\email{tatsuta@nii.ac.jp}

\maketitle
% should be before abstract

\begin{abstract}
This paper presents two decidability results on the validity checking problem
for entailments of symbolic heaps
in separation logic with Presburger arithmetic and arrays. 
The first result is for a system
with arrays and existential quantifiers. 
The correctness of the decision procedure is proved under the
condition that sizes of arrays in the succedent are not existentially quantified.
This condition is different from that proposed by Brotherston et al. in 2017
and one of them does not imply the other. 
The main idea is a novel translation from
an entailment of symbolic heaps into a formula in Presburger arithmetic. 
The second result is the decidability for a system with both arrays and lists. 
The key idea is to extend 
the unroll collapse technique proposed by Berdine et al. in 2005
to arrays and arithmetic as well as double-linked lists.
\end{abstract}

\iffalse
Key words: 
separation logic, decision procedure, array, Presburger arithmetic, symbolic heap
\fi

\section{Introduction}

Separation logic~\cite{Reynolds02} has been successfully used to verify/analyze heap-manipulating imperative programs with pointers
\cite{OHearn05,Brotherston17,OHearn06,OHearn11,Predator11},
and in particular it is successful for verify/analyze memory safety.
The advantage of separation logic is modularity brought by the frame rule,
with which we can independently verify/analyze each function that may
manipulate heaps~\cite{OHearn11}. 

In order to develop an automated analyzer/verifier of pointer programs based on separation logic,
symbolic-heap systems, which are fragments of separation logic, 
are often considered~\cite{OHearn04,OHearn05,OHearn11}. 
Despite of its simple and restricted form, symbolic heaps have enough expressive power, for example,
Infer~\cite{Infer11} and HIP/SLEEK
are based on symbolic-heap systems. 
Symbolic heaps
are used as assertions $A$ and $B$ in Hoare-triple 
$\{A\}P\{B\}$.
For program analysis/verification,
the validity of entailments need to be checked automatically. 

Inductive definitions in a symbolic-heap system is important,
since they can describe recursive data structures such as lists and trees.
Symbolic-heap systems with inductive predicates have been studied intensively~\cite{OHearn04,OHearn05,Kanovich14,Brotherston14,Cook11,Enea13,Enea14,Iosif13,Iosif14,Tatsuta15}. 
Berdine et al.~\cite{OHearn04,OHearn05} introduced a symbolic-heap system with hard-coded list and tree predicates, and showed the decidability of its entailment problem. Iosif et al.~\cite{Iosif13,Iosif14} considered a system with general inductive predicates, and showed its decidability under the bounded tree-width condition. Tatsuta et al.~\cite{Tatsuta15} introduced a system with general monadic inductive predicates.

Arrays are one of the most important primitive data structures of programs.
It is also important to verify that there is no buffer overflow in programs with arrays.
In order to verify/analyze pointer programs with arrays, 
this paper introduces two symbolic-heap systems with the array predicate.

The first symbolic-heap system, called $\SLAR$ (Separation Logic with Arrays), contains the points-to and the array predicates as well as existential quantifiers in spatial parts.
The entailments of this system also have disjunction in the succedents.
Our first main theorem is the decidability of the entailment problem.
For this theorem we need the condition:
the sizes of arrays in the succedent of an entailment do not contain any existential variables.
It means that the size of arrays in the succedent is completely determined by the antecedent. 

The basic idea of our decision procedure for $\SLAR$ is a novel translation of a given entailment into an equivalent formula in Presburger arithmetic. 
We use ``{\em sorted}'' symbolic heaps as a key idea for defining the translation. 
A heap represented by a sorted symbolic heap has addresses sorted in the order of the spatial part.
If both sides of a given entailment are sorted, 
the validity of the entailment is decided
by comparing spatial parts on both sides starting from left to right. 

The second system, called $\SLLA$ (Separation Logic with Arrays and Lists), contains the points-to predicate, the array predicate, the singly-linked list predicate, and the doubly-linked list predicate.
They also have disjunction in the succedents.
The entailments of this system are restricted to quantifier-free symbolic heaps.
This restriction comes from a technical reason, namely, our key idea (the unroll collapse technique) does not work in the presence of existential quantifiers.
Although the restriction reduces the expressive power of entailments, the quantifier-free entailments are useful for verify/analyze memory safety~\cite{OHearn04,OHearn05}. 

The second main theorem of this paper is the decidability of the entailment problem for $\SLLA$. 
Our decision procedure is split into the two stages (a) and (b): 
(a) the first stage eliminates the list predicates from 
the antecedent of a given entailment 
by applying {\em the unroll collapse technique}. 
Originally the unroll collapse for acyclic singly-linked list segments is 
invented by Berdine et al.~\cite{OHearn04}. 
We extend the original one in two ways;
the unroll collapse in $\SLLA$ is extended to arithmetic and arrays
as well as doubly-linked list segments.
(b) the second stage eliminates list predicates from 
the succedent of the entailment by the proof search technique. 
To do this, we introduce a sound and complete proof system that has valid quantifier-free entailments in $\SLAR$ as axioms,
which are checked by the first decision procedure for $\SLAR$.

As related work, as far as we know there are 
two papers about symbolic-heap systems that have arrays as primitive.
Calcagno et al.~\cite{OHearn06} studied program analysis based on a symbolic-heap system in the presence of pointer arithmetic.
They assumed a decision procedure for entailments with arrays and did not propose it.
Brotherston et al.~\cite{Brotherston17} considered the same system as $\SLAR$, and investigated several problems about it.

In \cite{Brotherston17}, they proposed a decision procedure for the entailment problem (of $\SLAR$) by
giving an equivalent condition to the existence of a counter-model for a given entailment, then 
checking a Presburger formula that expresses the condition. 
In order to do this, they imposed the restriction that the second argument of
the points-to predicate in the succedent of an entailment is not existentially quantified.
Their result decides a different class of entailments from the class decided by ours, and
our class neither contains their class nor is contained by it. 

When we extend separation logic with arrays, it may be different from previous array
logics in the points that (1) it is specialized for memory safety, and
(2) it can scale up by modularity. 
Bradley et al.~\cite{Bradley06}, Bouajjani et al.~\cite{Bouajjani09}, and Lahiri et al.~\cite{Lahiri08} 
discussed logics for arrays but their systems are totally
different from separation logic. So we cannot apply their techniques to our
case.
Piskac et al.~\cite{Piskac13} proposed a separation logic system with list
segments, and it can be combined with various SMT solvers, including
array logics. However, when we combine it with array logics, the arrays
are external and the resulting system does not describe the arrays by
spatial formulas with separating conjunction. 

The first result of this paper is based on \cite{Kimura17} but
we give detailed proofs of the key lemmas (stated in Lemma~\ref{lemma:correct}) for the result. 

The paper is organized as follows. 
Section 2 introduces the first system $\SLAR$.
Section 3 defines and discusses the decision procedure of the entailment problem for $\SLAR$.
In Section 4,
we show the first main theorem, namely, the decidability result of $\SLAR$.
In Section 5, we introduce the second system $\SLLA$. 
Section 6 shows the unroll collapse property.
Section 7 gives a decision procedure for $\SLLA$.
Section 8 shows the second main theorem, namely, the decidability result of $\SLLA$.
We conclude in Section 9.

%\iffalse
\section{Separation Logic with Arrays}\label{sect:sla}

This section defines the syntax and semantics of our separation logic with arrays. 
We first give the separation logic $\SLG$ with arrays in the ordinary style. 
Then we define the symbolic-heap system $\SLAR$ as a fragment of $\SLG$. 

\subsection{Syntax of System $\SLG$ of Separation Logic with Arrays}

We use the following notations in this paper. 
Let $(p_j)_{j \in J}$ be a sequence indexed by a finite set $J$. 
We write $\{p_j\ |\ j \in J\}$ for this sequence. 
This sequence will sometimes be abbreviated by $\Vec{p}$. 
 We write $q \in \Vec{p}$ when $q$ is an element of $\Vec{p}$. 

We have first-order variables $x,y,z,\ldots\in \Vars$ and constants $0,1,2,\ldots$. 
The syntax of $\SLG$ is defined as follows: 
\begin{align*}
\hbox{Terms}
\quad 
&
t ::= x\ |\ 0\ |\ 1\ |\ 2\ | \cdots |\ t+t. 
\\
\hbox{Formulas}
\quad 
&
F ::= t = t\ |\ F \land F\ |\ \neg F\ |\ \exists xF\ |\ \Emp\ |\ t \mapsto (t,\ldots,t)\ |\ \Arr(t,t)\ |\ F * F.
\end{align*}

Atomic formulas $t\mapsto (u_1,\ldots,u_{\pt})$ and $\Arr(t,u)$ are called 
a points-to atomic formula and an array atomic formula, respectively.
The points-to predicate $\mapsto$ is $(\pt+1)$-ary predicate
(the number $\pt$ is fixed beforehand). 
We sometimes write $t \mapsto u$ instead of $t \mapsto (u)$ when $\mapsto$ is a binary predicate. 
We use the symbol $\wild$ to denote an unspecified term. 

Each formula is interpreted by a variable assignment and a heap: 
$\Emp$ is true when the heap is empty; 
$t \mapsto (\Vec{u})$ is true when
the heap has only a single memory cell of address $t$ that contains the value $\Vec{u}$; 
$\Arr(t,u)$ is true when the heap has only an array 
of index from $t$ to $u$; 
a separating conjunction $F_1 * F_2$ is true when
the heap is split into some two disjoint sub-heaps, 
$F_1$ is true for one, and $F_2$ is true for the other. 
The formal definition of these interpretations is given in the next subsection.

A term $t$ that appears in either $t \mapsto (\wild)$, $\Arr(t,\wild)$ or $\Arr(\wild,t)$ of $F$ 
is called an address term of $F$. 
The set of free variables (denoted by $\FV(F)$) of $F$ is defined as usual. 
We also define $\FV(\Vec{F})$ as the union of $\FV(F)$, where $F \in \Vec{F}$. 

We use abbreviations $F_1 \vee F_2$, $F_1 \to F_2$, and $\forall xF$ defined in the usual way.
We also write $t \neq u$, $t \le u$, $t < u$, $\True$, and $\False$ 
for $\neg (t = u)$, $\exists x(u = t + x)$, $t + 1 \le u$, 
$0 = 0$, and $0\neq 0$, respectively. 

A formula is said to be \textit{pure} if it 
is a formula of Presburger arithmetic.

We implicitly use the associative and commutative laws for the connectives $*$ and $\land$, 
the fact that $\Emp$ is the unit of $*$ and $\True$ is the unit of $\land$, 
and permutation of the existential quantifiers. 

For $I$ a finite set and $\{F_i\}_{i\in I}$ a set of formulas, 
we write $\bigwedge_{i \in I} F_i$ for the conjunction of the elements of $\{F_i\}_{i\in I}$.
If $I$ is empty, it is defined as $\True$. 

\subsection{Semantics of System $\SLG$ of Separation Logic with Arrays}

Let $N$ be the set of natural numbers. We define the following semantic domains: 
\[
\Val \eqDef N, 
\quad
\Loc \eqDef N\setminus\{0\}, 
\quad
\Store \eqDef \Vars \to \Val, 
\quad
\Heap \eqDef \Loc \to_{\rm fin} \Val^\pt.
\]

$\Loc$ means addresses of heaps. $0$ means Null. 
An element $s$ in $\Store$ is called a \textit{store}
that means a valuation of variables. 
The update $s[x_1:=a_1,\ldots,x_k:=a_k]$ of $s$ is defined by 
$s[x_1:=a_1,\ldots,x_k:=a_k](z) = a_i$ if $z = x_i$, 
otherwise $s[x_1:=a_1,\ldots,x_k:=a_k](z) = s(z)$. 
An element $h$ in $\Heap$ is called a \textit{heap}. 
The domain of $h$ (denoted by $\Dom(h)$) means the memory addresses which are currently used. 
$h(n)$ means the value at the address $n$ if it is defined.
We sometimes use notation $h_1+h_2$ for the disjoint union of $h_1$ and $h_2$,
that is, it is defined when $\Dom(h_1)$ and $\Dom(h_2)$ are disjoint sets, and 
$(h_1+h_2)(n)$ is $h_i(n)$ if $n \in \Dom(h_i)$ for $i = 1,2$. 
The restriction $h|_X$ of $h$ is defined by $\Dom(h|_X) = X \cap \Dom(h)$ 
and $h|_X(m) = h(m)$ for any $m \in \Dom(h|_X)$. 
A pair $(s,h)$ is called a \textit{heap model}. 

The interpretation $s(t)$ of a term $t$ by $s$ is defined by 
extending the definition of $s$ by $s(n) = n$ for each constant $n$, 
and $s(t+u) = s(t)+s(u)$.
We also use the notation $s(\Vec{u})$ defined by $s(u_1,\ldots,u_k) = (s(u_1),\ldots,s(u_k))$. 

The interpretation $s,h\models F$ of $F$ under the heap model $(s,h)$ 
is defined inductively as follows: 

$s,h\models t = u$
\ iff\ 
$s(t) = s(u)$, 

$s,h\models F_1\land F_2$
\ iff\ 
$s,h\models F_1$ and $s,h\models F_2$, 

$s,h\models \neg F$
\ iff\ 
$s,h\not\models F$, 

$s,h\models \exists xF$
\ iff\ 
$s[x:= a],h\models F$ for some $a \in \Val$, 

$s,h\models \Emp$
\ iff\ 
$\Dom(h) = \emptyset$, 

$s,h\models t \mapsto (\Vec{u})$
\ iff\ 
$\Dom(h) = \{s(t)\}$ and $h(s(t)) = s(\Vec{u})$, 

$s,h\models \Arr(t,u)$
\ iff\ 
$s(t) \le s(u)$ and $\Dom(h) = \{ x \in N\ |\ s(t) \le x \le s(u)\}$, 

$s,h\models F_1*F_2$
\ iff\ 
$s,h_1\models F_1$, $s,h_2\models F_2$, and $h = h_1+h_2$ for some $h_1$, $h_2$. 

We sometimes write $s\models F$ if $s,h\models F$ holds for any $h$. 
This notation is mainly used for pure formulas, 
since their interpretation does not depend on the heap-part of heap models. 
We also write $\models F$ if $s,h\models F$ holds for any $s$ and $h$.

The notation $F_1 \models F_2$ is an abbreviation of $\models F_1 \to F_2$, that is, $s,h\models F_1$ implies $s,h\models F_2$ for any $s$ and $h$. 

\subsection{Symbolic-Heap System with Arrays}

The symbolic-heap system $\SLAR$ is defined as a fragment of $\SLG$. 
The syntax of $\SLAR$ is given as follows. 
Terms of $\SLAR$ are the same as those of $\SLG$. 
Formulas of $\SLAR$ (called \textit{symbolic heaps}) have the following form:
\[
\phi ::= \exists \Vec x(\Pi \land \Sigma)
\]
\noindent where $\Pi$ is a pure formula of $\SLG$ 
and $\Sigma$ is the spatial part defined by 
\[
\Sigma ::= \Emp\ |\ t \mapsto (t,\ldots,t)\ |\ \Arr(t,t)\ |\ \Sigma * \Sigma.
\]

In this paper, we use the following notations for symbolic heaps. 
The symbol $\sigma$ is used to denote an atomic formula of $\Sigma$. 
We write $x \mapsto \Underscore$ for
$\exists z(x \mapsto z)$ where $z$ is fresh. 
We also write $\exists\Vec{x}(\Pi\land\Sigma)\land\Pi'$ for $\exists\Vec{x}(\Pi \land \Pi'\land\Sigma)$, 
write $\exists\Vec{x}(\Pi\land\Sigma)*\Sigma'$ for $\exists\Vec{x}(\Pi\land \Sigma*\Sigma')$, 
and 
write $\exists\Vec{x}(\Pi\land\Sigma) * \exists\Vec{x'}(\Pi'\land\Sigma')$ for 
$\exists\Vec{x}\Vec{x'}(\Pi\land\Pi'\land \Sigma*\Sigma')$.

In this paper, we consider \textit{entailments} of $\SLAR$ that have the form: 
\[
\phi \prove \{\phi_i\ |\ i \in I\}
\qquad 
\hbox{($I$ is a finite set)}. 
\]
The left-hand side of the symbol $\prove$ is called the antecedent.
The right-hand side of the symbol $\prove$ is called the succedent.
The right-hand side $\{\phi_i\ |\ i \in I\}$ of an entailment means the disjunction of the symbolic heaps $\phi_i$ ($i\in I$). 

An entailment $\phi \prove \{\phi_i\ |\ i \in I\}$ is said to be \textit{valid}
if
$\phi \models \bigvee\{\phi_i \ |\ i \in I\}$ holds. 

A formula of the form $\Pi\land\Sigma$ is called a \textit{QF symbolic heap} (denoted by $\varphi$).
Note that existential quantifiers may appear in the pure part of a QF symbolic heap. 
We can easily see that
$\exists\Vec{x}\varphi\models\Vec\phi$ is equivalent to
$\varphi\models \Vec\phi$.
So we often assume that the left-hand sides of entailments are QF symbolic heaps. 

We call entailments of the form $\varphi \prove \{\varphi_i\ |\ i \in I\}$ \textit{QF entailments}.

\subsection{Analysis/Verification of Memory Safety}

We intend to use our entailment checker as a part of our
analysis/verification system for memory safety.
We briefly explain it for motivating our entailment checker.

The target programming language is essentially the same as
that in \cite{Reynolds02} except we extend the allocation command $\Malloc$, 
which returns the first address of the allocated memory block or 
returns $\Nil$ if it fails to allocate. 
We define our programming language in programming language C style.
\begin{align*}
\hbox{Expressions}\quad 
&
e ::= x \ |\  0 \ |\  1 \ |\  2 \ldots \ |\ e+e.
\\
\hbox{Boolean expressions}\quad
&
b ::= e==e \ |\ e<e \ |\ b \&\& b \ |\ b \| b \ |\ !b.
\\
\hbox{Programs}\quad 
&
P ::= x=e; \ |\ \If (b) \{ P \} \Else \{ P \}; \ |\
\While (b) \{ P \}; \ |\ 
P\ P
\\
&
\qquad\qquad
\ |\ x=\Malloc(y); \ |\
x=*y; \ |\
*x=y; \ |\
\Free(x);.
\end{align*}

$x=\Malloc(y);$ allocates $y$ cells and set $x$ to
the pointer to the first cell. Note that this operation may fail
if there is not enough free memory.

Our assertion language is a disjunction of symbolic heaps, namely,
\[
\hbox{Assertions}\quad A ::= \phi_1 \lor \dots \lor \phi_n.
\]
We write $\phi * (\phi_1 \lor \ldots \lor \phi_n)$ for 
$(\phi * \phi_1 \lor \ldots \lor \phi * \phi_n)$, and 
write $\exists x(\phi_1 \lor \ldots \lor \phi_n)$ for 
$(\exists x\phi_1 \lor \ldots \lor \exists x\phi_n)$. 
The notation $\Pi \land (\phi_1 \lor \ldots \lor \phi_n)$ is defined similarly. 

In the same way as \cite{Reynolds02},
we use a triple $\Tri A P B$ that means that
if the assertion $A$ holds at the initial state and the program $P$
is executed, then (1) if $P$ terminates then the assertion $B$ holds
at the resulting state, and (2) $P$ does not cause any memory errors.

As inference rules for triples,
we have ordinary inference rules for Hoare triples including the consequence
rule, as well as the following rules (axioms) for memory operations.
We write $\Arr2(x,y)$ for $\exists z(\Arr(x,z) \land x+y=z+1)$.
$\Arr2(x,y)$ denotes the memory block at address $x$ of size $y$.
%We sometimes write a formula that is not a disjunction of symbolic heaps
%for an equivalent assertion obtained
%by ordinary logical equivalence rules.
\[
\Tri A{x=\Malloc(y);}{
\exists x'(A[x:=x'] \land x=\Nil \lor
A[x:=x'] * \Arr2(x,y[x:=x']))},\\
\Tri{A * y \mapsto t}{x=*y;}{\exists x'(A[x:=x'] * y \mapsto t[x:=x'] \land
x=t[x:=x'])}, \\
\Tri{\exists x'(A * x \mapsto x')}{*x=y;}{A * x \mapsto y}, \\
\Tri{\exists x'(A * x \mapsto x')}{\Free(x);}{A},\ \hbox{where $x'$ is fresh}.
\]

In order to prove memory safety of a program $P$ under a precondition $A$,
it is sufficient to
show that
$\{ A \} P \{ \True \}$ is provable.

By separation logic with arrays, we can show a triple
$\Tri A{x=\Malloc(y);}{\exists x'(A[x:=x'] \land x=\Nil \lor
A[x:=x'] * \Arr2(x,y[x:=x']))}$,
but
it is impossible without arrays since $y$ in $\Malloc(y)$ is a variable.
With separation logic with arrays,
we can also show
$\Tri{\Arr(p,p+3)}{*p = 5;}{p \mapsto 5 * \Arr(p+1,p+3)}$.

For the consequence rule
\[
\infer{\Tri A P B}{\Tri {A'} P {B'}} \qquad \hbox{(if $A \imp A'$ and  $B' \imp B$)}
\]
we have to check the side conditions $A \imp A'$ and $B' \imp B$.
Let $A$ be $\phi_1 \lor \ldots \lor \phi_n$ and
$A'$ be $\phi'_1 \lor \ldots \lor \phi'_m$.
Then we will use our entailment checker to decide
$\phi_i \prove \phi'_1, \ldots, \phi'_m$ for all $1 \le i \le n$.

\section{Decision Procedure for $\SLAR$}

This section gives our decision procedure of the entailment problem for $\SLAR$ 
by introducing the key idea, namely sorted symbolic heaps, 
for the decision procedure, and defining the translation from sorted symbolic heaps 
into formulas in Presburger arithmetic. 
We finally state the decidability result for $\SLAR$, 
which is the first main theorem in this paper.

\subsection{Sorted Entailments}

This subsection describes {\em sorted} symbolic heaps. 
In this and the next sections, 
for simplicity, we assume that the number $\pt$ is $1$, that is, 
the points-to predicate is a binary one. 
The decidability result and the decision procedure in these sections 
can be straightforwardly extended to arbitrary $\pt$. 
In these two sections, 
we will not implicitly use 
the commutative law for $*$, or the unit law for $\Emp$
in order to define the following notations. 

We give a pure formula $t < \Sigma$, 
which means the first address expressed by $\Sigma$ is greater than $t$. 
It is inductively defined as follows: 
\begin{center}
\begin{tabular}{l@{\qquad\qquad}l}
$t < \Emp \eqDef \True$, 
&
$t < t_1 \mapsto (\wild) \eqDef t < t_1$, 
\\
$t < \Arr(t_1,\wild) \eqDef t < t_1$, 
&
$t < (\Emp * \Sigma_1) \eqDef t < \Sigma_1$,
\\
$t < (t_1 \mapsto (\wild) * \Sigma_1) \eqDef t < t_1$, 
&
$t < (\Arr(t_1,\wild) * \Sigma_1) \eqDef t < t_1$. 
\end{tabular}
\end{center}
Then we inductively define a pure formula $\Sorted(\Sigma)$, 
which means the address terms are sorted in $\Sigma$ as follows. 
\begingroup
\allowdisplaybreaks
\begin{align*}
\Sorted'&(\Emp) \eqDef \True, 
\\
\Sorted'&(t\mapsto u) \eqDef \True, 
\\
\Sorted'&(\Arr(t,u)) \eqDef t \le u,
\\
\Sorted'&(\Emp*\Sigma_1) \eqDef \Sorted'(\Sigma_1), 
\\
\Sorted'&(t\mapsto u * \Sigma_1) \eqDef t < \Sigma_1 \land \Sorted'(\Sigma_1), 
\\
\Sorted'&(\Arr(t,u) * \Sigma_1) \eqDef t \le u \land u < \Sigma_1 \land \Sorted'(\Sigma_1),
\\
\Sorted&(\Sigma) \eqDef 0 < \Sigma \land \Sorted'(\Sigma). 
\end{align*}
\endgroup
We define $\Sigma^\sim$ as $\Sorted(\Sigma) \land \Sigma$.
Let $\phi$ be $\exists\vec{x}(\Pi\land\Sigma)$. 
We write $\Tilde\phi$ or $\phi^\sim$ for 
$\exists\vec{x}(\Pi\land\Sigma^\sim)$. 
We call $\Tilde\phi$ a {\em sorted symbolic heap}. 

We define $\Perm(\Sigma)$ as the set of permutations of $\Sigma$ with respect to $*$.
A symbolic heap $\phi'$ is called a permutation of $\phi$ if 
$\phi=\exists\Vec{x}(\Pi\land\Sigma)$, 
$\phi'=\exists\Vec{x}(\Pi\land\Sigma')$ and 
$\Sigma'$ is a permutation of $\Sigma$. 
We write $\Perm(\phi)$ for the set of permutations of $\phi$. 
Note that 
$s,h\models\Tilde{\phi'}$ for some $\phi'\in\Perm(\phi)$
if and only if 
$s,h\models\phi$.
An entailment is said to be {\em sorted} if all symbolic heaps in its antecedent and succedent are sorted. 

The next lemma claims that checking the validity of entailments can be reduced 
to checking the validity of sorted entailments. 

\begin{lem}\label{lemma:split_sorted}
$s\models \Tilde{\varphi'}\to\Lor\{\Tilde{\phi'}\ |\ i\in I, \phi'\in\Perm(\phi_i)\}$
for all $\varphi' \in \Perm(\varphi)$
\\
if and only if
$s\models \varphi\to\bigvee_{i\in I}\phi_i$. 
\end{lem}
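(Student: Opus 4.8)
The plan is to prove both directions by unwinding the definitions and using the observation, already recorded in the excerpt, that $s,h \models \Tilde{\phi'}$ for some $\phi' \in \Perm(\phi)$ iff $s,h \models \phi$. Fix a store $s$. For the right-to-left direction, assume $s \models \varphi \to \bigvee_{i\in I}\phi_i$, fix $\varphi' \in \Perm(\varphi)$, and fix an arbitrary heap $h$ with $s,h \models \Tilde{\varphi'}$. Since $\Tilde{\varphi'}$ is $\varphi'$ with the extra pure conjunct $\Sorted(\Sigma)$, in particular $s,h \models \varphi'$, hence $s,h \models \varphi$ because $\varphi'$ is a permutation of $\varphi$ and $*$ is commutative (the pure part and the values are unchanged by permutation, and the spatial interpretation is permutation-invariant up to the associativity/commutativity of $+$ on heaps). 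By the assumption, $s,h \models \phi_i$ for some $i \in I$. Now apply the cited observation: from $s,h \models \phi_i$ we get $\phi' \in \Perm(\phi_i)$ with $s,h \models \Tilde{\phi'}$, so $s,h \models \Lor\{\Tilde{\phi'} \mid i\in I,\ \phi'\in\Perm(\phi_i)\}$. Since $h$ was arbitrary, $s \models \Tilde{\varphi'} \to \Lor\{\Tilde{\phi'} \mid i\in I,\ \phi'\in\Perm(\phi_i)\}$.

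For the left-to-right direction, assume the displayed family of implications holds for every $\varphi' \in \Perm(\varphi)$, and fix a heap $h$ with $s,h \models \varphi$. By the cited observation applied to $\varphi$, there is $\varphi' \in \Perm(\varphi)$ with $s,h \models \Tilde{\varphi'}$. Instantiating the assumption at this $\varphi'$ gives $s,h \models \Lor\{\Tilde{\phi'} \mid i\in I,\ \phi'\in\Perm(\phi_i)\}$, so $s,h \models \Tilde{\phi'}$ for some $i \in I$ and $\phi' \in \Perm(\phi_i)$. Then $s,h \models \phi'$ (dropping the $\Sorted$ conjunct), and since $\phi'$ is a permutation of $\phi_i$ we get $s,h \models \phi_i$, hence $s,h \models \bigvee_{i\in I}\phi_i$. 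As $h$ was arbitrary, $s \models \varphi \to \bigvee_{i\in I}\phi_i$.

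The only genuinely load-bearing facts are (i) the already-stated equivalence between satisfying $\phi$ and satisfying some sorted permutation $\Tilde{\phi'}$, and (ii) that adding or removing the pure conjunct $\Sorted(\Sigma)$ in front of $\Sigma$ does not affect satisfiability in the direction we need (removing it only weakens the formula; and when we produced $\Tilde{\varphi'}$ via (i) the $\Sorted$ conjunct is automatically satisfied). There is no real obstacle here: the argument is a routine bookkeeping exercise in quantifying correctly over heaps $h$ while the store $s$ stays fixed. The one point to be careful about is the order of quantifiers in the statement — the universal over $\varphi' \in \Perm(\varphi)$ on the left must be matched on the right by choosing, for the given $h$, the particular permutation delivered by observation (i) — so the proof must present the two directions with their quantifier dependencies made explicit rather than appealing to a blanket "permutations don't matter."
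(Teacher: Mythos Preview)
Your proof is correct and follows essentially the same argument as the paper's own proof: both directions are obtained by unwinding the equivalence ``$s,h\models\phi$ iff $s,h\models\Tilde{\phi'}$ for some $\phi'\in\Perm(\phi)$'' together with the trivial implication $\Tilde{\phi'}\Rightarrow\phi'\Rightarrow\phi$. The only cosmetic difference is that the paper labels the two directions oppositely (its ``left-to-right'' part is your ``right-to-left'' and vice versa), but the underlying reasoning is identical.
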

\begin{proof}
We first show the left-to-right part. 
Assume the left-hand side of the claim. 
Fix $\varphi'\in \Perm(\varphi)$ and suppose $s,h \models \Tilde{\varphi'}$.
Then we have $s,h \models \varphi$.
By the assumption, $s,h\models \phi_i$ for some $i\in I$. 
Hence we have $s,h\models\Lor\{\Tilde{\phi'}\ |\ i\in I, \phi'\in\Perm(\phi_i)\}$. 
Next we show the right-to-left part.
Assume the right-hand side and $s,h\models\varphi$.
We have $s,h\models\Tilde{\varphi'}$ for some $\varphi'\in\Perm(\varphi)$. 
By the assumption, $s,h\models\Tilde{\phi'}$ for some $\phi'\in\Perm(\phi_i)$. 
Thus we have $s,h\models\phi_i$ for some $i\in I$. 
\end{proof}

The basic idea of our decision procedure is as follows:
(1) A given entailment is decomposed into sorted entailments according to Lemma~\ref{lemma:split_sorted}; 
(2) the decomposed sorted entailments are translated into Presburger formulas by the translation $P$ given in the next subsection; 
(3) the translated formulas are decided by the decision procedure of Presburger arithmetic. 

\subsection{Translation $P$}
We define the translation $P$ from QF entailments into Presburger formulas. 
We note that the resulting formula may contain new fresh variables (denoted by $z$).
In the definition of $P$, we fix a linear order on an index set $I$
to take an element of the minimum index. 
For saving space, we use some auxiliary notations.
Let $\{t_j\}_{j\in J}$ be a set of terms indexed by a finite set $J$. 
We write $u = t_J$ for $\bigwedge_{j\in J} u = t_j$. 
We also write $u < t_J$ for $\bigwedge_{j\in J} u < t_j$.
We note that both $u = t_\emptyset$ and $u < t_\emptyset$ are $\True$,
since $\bigwedge_{j\in \emptyset} u = t_j$ and $\bigwedge_{j\in \emptyset} u < t_j$ are defined by $\True$. 

The definition of $P(\Pi,\Sigma,S)$ is given as listed in Fig.~\ref{fig:transP}, where $S$ is a finite set $\{(\Pi_i,\Sigma_i)\}_{i \in I}$. 
We assume that pattern-matching is done from the top to the bottom. 

In order to describe the procedure $P$,
we temporarily extend terms to include  $u - t$ 
where $u,t$ are terms.
In the result of $P$, which is a 
Presburger arithmetic formula,
we eliminate these extended terms by replacing
$t'+(u-t)=t''$ and $t'+(u-t)<t''$
by
$t'+u = t''+ t$ and $t'+u < t'' + t$, respectively.

Let $\sharp_*(\Sigma)$ be the number of $*$ in $\Sigma$. 
Let $\sharp_*(\{(\Pi_i,\Sigma_i)\}_{i\in I})$ be
$\sum_{i \in I}\sharp_*(\Sigma_i)$. 
We define ${\rm FirstRemove}_\Sigma(\Sigma')$ by
$\Sigma'_0$ if $\Sigma$ has the form $t\Pto u * \Sigma_0$ and
$\Sigma'$ has the form $t'\Pto u' * \Sigma'_0$, or 
$\Sigma'$ otherwise. 
This is obtained as follows: 
the points-to predicates at the first positions of $\Sigma$ and $\Sigma'$
are removed (if they exist), then the resulting formula made
from $\Sigma'$ is returned.
${\rm FirstRemove}_\Sigma(\{(\Pi_i,\Sigma_i)\}_{i\in I})$ is also defined by
$\{(\Pi_i,{\rm FirstRemove}_\Sigma(\Sigma_i))\}_{i\in I}$.

The $\eqDef$ steps terminate 
since
the measure $(\sharp_*({\rm FirstRemove}_\Sigma(S)), \sharp_*(\Sigma)+\sharp_*(S), |S|)$
for $P(\Pi,\Sigma,S)$ strictly decreases, 
where $|S|$ is the number of elements in $S$.

%%%%%%%%%%%%%%%%%%%%%%%%%%%%%%%%%%%
\begin{figure}[t]\small
  \rule{\textwidth}{1pt}
  \\[10pt]
$
\begin{array}{lllr}
  P(\Pi,\Emp * \Sigma,S)
  &\eqDef&
  P(\Pi,\Sigma,S)
&
\hspace{75pt}
  {\bf (EmpL)}
  \\
  P(\Pi,\Sigma,\{( \Pi',\Emp * \Sigma')\} \cup S)
  &\eqDef&
  P(\Pi,\Sigma,\{( \Pi',\Sigma')\} \cup S)
&
  {\bf (EmpR)}
  \\
  P(\Pi,\Emp,\{( \Pi',\Sigma')\} \cup S)
  &\eqDef&
  P(\Pi,\Emp,S),
  \quad \hbox{where $\Sigma' \not\equiv \Emp$}
&
  {\bf (EmpNEmp)}
  \\
  P(\Pi,\Emp,\{( \Pi_i,\Emp)\}_{i \in I})
  &\eqDef&
  \Pi \imp \bigvee_{i \in I} \Pi_i
&
  {\bf (EmpEmp)}  
  \\
  P(\Pi,\Sigma,\{( \Pi', \Emp)\} \cup S)
  &\eqDef&
  P(\Pi,\Sigma,S), 
  \qquad\hbox{where $\Sigma \not\equiv \Emp$}
&
  {\bf (NEmpEmp)}  
  \\
  P(\Pi,\Sigma,\emptyset)
  &\eqDef&
  \neg(\Pi \land \Sorted(\Sigma))
&
  {\bf (empty)}    
\end{array}
$
\\[5pt]%%%%%%%%%%%%%%%%%%%%%%
$P(\Pi,t \mapsto u * \Sigma,\{( \Pi_i, t_i \mapsto u_i * \Sigma_i)\}_{i \in I})$
\hfill{($\mapsto\mapsto$)}
\\
\begin{tabular}[t]{ll}
  $\eqDef$&
  $P(\Pi \land t < \Sigma,\Sigma, \{( \Pi_i \land t=t_i \land u=u_i \land t_i < \Sigma_i,\Sigma_i)\}_{i \in I})$
\end{tabular}  
\\[5pt]%%%%%%%%%%%%%%%%%%%%%%
$P(\Pi,t \mapsto u * \Sigma,\{( \Pi_i,\Arr(t_i,t_i') * \Sigma_i)\} \cup S)$
\hfill{\bf ($\mapsto$Arr)}
\\
\begin{tabular}[t]{ll}
  $\eqDef$&
  $P(\Pi \land t_i'=t_i,t \mapsto u * \Sigma,\{(\Pi_i,t_i \mapsto u * \Sigma_i)\} \cup S)$
  \\
  &
  $\land\
  P(\Pi \land t_i'>t_i,t \mapsto u * \Sigma,\{(\Pi_i,t_i \mapsto u * \Arr(t_i+1,t_i') * \Sigma_i)\} \cup S)$
  \\
  &
  $\land\
  P(\Pi \land t_i'<t_i,t \mapsto u * \Sigma,S)$
\end{tabular}
\\[5pt]%%%%%%%%%%%%%%%%%%%%%%
$P(\Pi,\Arr(t,t') * \Sigma,S)$
\hfill{\bf (Arr$\mapsto$)}
\\
\begin{tabular}[t]{ll}
  $\eqDef$&
  $P(\Pi \land t' > t,t \mapsto z * \Arr(t+1,t') * \Sigma,S)$
  \\
  &
  $\land\ P(\Pi \land t' = t,t \mapsto z' * \Sigma,S)$,
  \quad
  \hbox{where $(\Pi'', t'' \mapsto u'' * \Sigma'') \in S$ and $z,z'$ are fresh}
\end{tabular}
\\[5pt]%%%%%%%%%%%%%%%%%%%%%%
$P(\Pi,\Arr(t,t') * \Sigma,\{( \Pi_i,\Arr(t_i,t_i') * \Sigma_i)\}_{i \in I})$
\hfill{\bf (ArrArr)}
\\
\begin{tabular}[t]{ll}
  $\eqDef$&
  $\Land_{I' \subseteq I}P\left(
  \begin{array}{l}
    \Pi \land
    m = m_{I'} \land m < m_{I\setminus I'}
    \land t \le t' \land t' < \Sigma \land 0 < t,
    \Sigma,
    \\
    \{( \Pi_i \land t = t_i \land t'_i < \Sigma_i, \Sigma_i)\}_{i \in I'}
    \cup
    \{( \Pi_i \land t = t_i,\Arr(t_i+m+1,t_i') * \Sigma_i)\}_{i \in I\setminus I'}
  \end{array}
  \right)$
  \\
  \multicolumn{2}{l}{
  $\land
  \Land_{\emptyset \ne I' \subseteq I}
  P\left(
  \begin{array}{l}
    \Pi
    \land m'<m \land m' = m_{I'} \land  m'<m_{I\setminus I'} \land 0 < t,
    \Arr(t+m'+1,t') * \Sigma,
    \\
    \{( \Pi_i \land t = t_i \land t'_i < \Sigma_i, \Sigma_i )\}_{i \in I'}
    \cup
    \{( \Pi_i \land t = t_i,\Arr(t_i+m'+1,t_i') * \Sigma_i )\}_{i \in I\setminus I'}
  \end{array}
  \right)$, 
  }
\end{tabular}
\\
where $m$, $m_i$, and $m'$ are abbreviations of $t'-t$, $t_i'-t_i$, and $m_{\min I'}$, respectively. 
\rule{\textwidth}{1pt}
\caption{The translation $P$}
\label{fig:transP}
\end{figure}
%%%%%%%%%%%%%%%%%%%%%%%%%%%%%%%%%%%5

The formula $P(\Pi,\Sigma,\{(\Pi_i,\Sigma_i)\}_{i\in I})$ means 
that the QF entailment 
$\Pi\land\Tilde\Sigma \vdash \{\Pi_i\land\Tilde\Sigma\}_{i \in I}$ is valid.
From this intuition, we sometimes call $\Sigma$ the left spatial formula, and also call $\{\Sigma_i\}_{i\in I}$ the right spatial formulas.
We call the left-most position of a spatial formula
the head position.
The atomic formula appears at the head position is called the head atom. 

We will explain the meaning of each clause in Figure~\ref{fig:transP}. 

The clauses {\bf (EmpL)} and {\bf (EmpR)} just remove $\Emp$
at the head position. 

The clause {\bf (EmpNEmp)} handles the case where
the left spatial formula is $\Emp$.
A pair $(\Pi',\Sigma')$ in the third argument of $P$ is removed
if $\Sigma'$ is not $\Emp$,
since $\Pi'\land\Sigma'$ cannot be satisfied by the empty heap. 

The clause {\bf (EmpEmp)} handles the case where 
the left formula and all the right spatial formulas are $\Emp$.
This case $P$ returns a Presburger formula
which is equivalent to the corresponding entailment is valid. 

The clause {\bf (NEmpEmp)} handles the case where
the left spatial formula is not $\Emp$ and
a pair $(\Pi',\Emp)$ appears in the third argument of $P$. 
We remove the pair since $\Pi'\land\Emp$ cannot be satisfied
by any non-empty heap. 
For example,
$P(\True,x\mapsto 0 * y\mapsto 0,\{(\True,\Emp)\})$
becomes 
$P(\True,x\mapsto 0 * y\mapsto 0,\emptyset)$. 

The clause {\bf (empty)} handles the case where
the third argument of $P$ is empty. 
This case $P$ returns a Presburger formula which is equivalent to 
that the left symbolic heap $\Pi\land\Sigma$ is not satisfiable. 
For example,
$P(\True,x\mapsto 0 * y\mapsto 0,\emptyset)$
returns 
$\neg(x < y)$ using the fact that $\True$ is the unit of $\land$. 

The clause {\bf ($\mapsto\mapsto$)} handles the case where
all the head atoms of $\Sigma$ and $\{\Sigma_i\}_{i\in I}$ are
the points-to atomic formulas. 
This case we remove all of them and put equalities on the right pure parts. 
By this rule the measure is strictly reduced. 
For example,
$P(\True,3\mapsto y * 4\mapsto 11,\{(\True,x\mapsto y' * \Arr(4,4))\})$
becomes
$P(3<4,4\mapsto 11,\{(3=x \land y=y'\land x<4, \Arr(4,4))\})$. 

The clause {\bf ($\mapsto$Arr)} handles the case where
the head atom of the left spatial formula is the points-to atomic formula and
some right spatial formula $\Sigma_i$ has the array atomic formula as its head atom.
Then we split the array atomic formula into a points-to atomic formula and an array atomic formula for the rest.
We have three subcases according to the size of the head array. 
The first case is when the size of the array is $1$: 
We replace the head array by a points-to atomic formula. 
The second case is when the size of the head array is
greater than $1$: 
We split the head array atomic formula into 
a points-to atomic formula and an array atomic formula for the rest. 
The last case is when the size of the head array is less than $1$: 
We just remove $(\Pi_i,\Sigma_i)$, since the array atomic formula is false. 
We note that this rule can be applied repeatedly until all head array atomic formulas of
the right spatial formulas are unfolded,
since the left spatial formula is unchanged.
Then the measure is eventually reduced by applying {\bf ($\mapsto\mapsto$)}. 
For example,
$P(\True,x\mapsto 10,\{(\True,\Arr(4,5))\})$
becomes
\begin{align*}
P&(5=4, x\mapsto 10,\{(\True,4\mapsto 10)\})
\\
&\land
P(5>4, x\mapsto 10,\{(\True,4\mapsto 10 * \Arr(5,5))\})
\\
&\land
P(5<4, x\mapsto 10,\emptyset).
\end{align*}

The clause {\bf (Arr$\mapsto$)} handles the case where
the head atom of the left spatial formula is an array atomic formula and 
there is a right spatial formula whose head atom is a points-to atomic formula. 
We have two subcases according to the size of the head array.
The first case is the case where the size of the array is greater than $1$: 
The array atomic formula is split into 
a points-to atomic formula (with a fresh variable $z$)
and an array atomic formula for the rest. 
The second case is when the size of the array is $1$: 
The array atomic formula is unfolded and replaced by a points-to atomic formula with
a fresh variable $z'$. 
We note that the left head atom becomes a points-to atomic formula
after applying this rule. 
Hence the measure is eventually reduced,
since {\bf ($\mapsto\mapsto$)} or {\bf ($\mapsto$Arr)} will be applied next. 
For example,
$P(\True,\Arr(x,3),\{(\True,y\mapsto 10)\})$
becomes
\begin{align*}
P&(x<3, x\mapsto z*\Arr(x+1,3),\{(\True,y\mapsto 10)\})
\\
&\land
P(x=3, x\mapsto z',\{(\True, y\mapsto 10)\}). 
\end{align*}

The last clause {\bf (ArrArr)} handles the case where
all the head atoms in the left and right spatial formulas are array atomic formulas. 
We first find the head arrays of the shortest length among the head arrays. 
Next we split each longer array into two arrays
so that the first part has the same size as the shortest array. 
Then we remove the first parts. The shortest arrays are also removed. 
In this operation we have two subcases: 
The first case is when the array atomic formula of the left spatial formula
has the shortest size and disappears by the operation. 
The second case is when the array atomic formula of the left spatial formula
has a longer size, it is split into two array atomic formulas, and the second part remains.
We note that the measure is strictly reduced,
since at least one shortest array atomic formula is removed.
For example,
$P(\True,\Arr(x,5),\{(\True,\Arr(y,2)*\Arr(3,z))\})$
becomes
\begin{align*}
P&(5-x=2-y \land x\le 5 \land 0 < x, \Emp,\{(x=y \land 2<3,\Arr(3,z))\})
\\
&\land
P(5-x<2-y \land x\le 5 \land 0 < x, \Emp,\\
&\qquad\{(x=y,\Arr(y+(5-x)+1,2) * \Arr(3,z))\})
\\
&\land
P(2-y<5-x \land 0 < x, \Arr(x+(2-y)+1,5),\{(x=y \land 2<3,\Arr(3,z))\}).
\end{align*}
This example is a case when $I$ is a singleton set (we write $\{0\}$ for it),
$\Pi$ and $\Pi_0$ are $\True$, $\Sigma$ is $\Emp$, $\Sigma_0$ is $\Arr(3,z)$, 
$t$ is $x$, $t'$ is $5$, $t_0$ is $y$, $t'_0$ is $2$ in {\bf (ArrArr)}. 
The first clause of this result is obtained from the case $I'=I$ of the first conjunct in {\bf (ArrArr)}, 
namely, $m = m_{I'}$ is $5-x = 2-y$ and $m < m_{I\setminus I'}$ is $\True$. 
The second clause is obtained from the case $I'=\emptyset$ of the first conjunct, 
namely, $m = m_{I'}$ is $\True$ and $m < m_{I\setminus I'}$ is $5-x < 2-y$.
The last clause is obtained from the case $I'=I$ of the second conjunct in {\bf (ArrArr)},
namely, $m' < m$ is $2-y < 5-x$, $m' = m_{I'}$ is $2-y = 5-x$,
and $m'<m_{I \setminus I'}$ is $\True$.

\begin{exa}
The sorted entailment
$(x\mapsto 10 * v\mapsto 11)^\sim \vdash \Arr(x',v')^\sim$
is translated by computing
$P(\True,x\mapsto 10 * v\mapsto 11, \{(\True,\Arr(x',v'))\})$. 
We will see its calculation step by step. 
It first becomes
\begin{align*}
  P&(x'=v',x\mapsto 10 * v\mapsto 11,\{(\True,x'\mapsto 10)\})
  \tag{a}
  \\
  &\land
  P(x'<v',x\mapsto 10 * v\mapsto 11,\{(\True,x'\mapsto 10 * \Arr(x'+1,v'))\})
  \tag{b}
  \\
  &\land
  P(x'>v',x\mapsto 10 * v\mapsto 11,\emptyset)
  \tag{c}
\end{align*}
by ($\mapsto$Arr)
taking $\Pi$ to be $\True$, $t\mapsto u$ to be $x\mapsto 10$, $\Sigma$ to be $v\mapsto 11$,
$\Pi_i$ to be $\True$, $\Arr(t_i,t'_i)$ to be $\Arr(x',v')$, $\Sigma_i$ to be $\Emp$, and $S$ to be empty. 
The first conjunct (a) becomes 
$P(x'=v'\land x<v, v\mapsto 11,\{(x'=x\land 10=10,\Emp)\})$
by $(\mapsto\mapsto)$
taking $\Pi$ to be $x'=v'$, $t\mapsto u$ to be $x\mapsto 10$, $\Sigma$ to be $v\mapsto 11$,
$I$ to be the singleton set $\{0\}$, 
$\Pi_0$ to be $\True$, $t_0\mapsto u_0$ to be $x'\mapsto 10$, and $\Sigma_0$ to be $\Emp$.
Then it becomes
\begin{align*}
  \neg(x'=v'\land x<v)
  \tag{a'}
\end{align*}
by (NEmpEmp) and (empty).
The third conjunct (c) becomes
\begin{align*}
  \neg(x'>v'\land x<v)
  \tag{c'}
\end{align*}
  by (empty)
taking $\Pi$ to be $x'>v'$ and $\Sigma$ to be $x\mapsto 10 * v\mapsto 11$.
The second conjunct (b) becomes 
$P(x'<v' \land x<v, v\mapsto 11,\{(x=x'\land 10=10,\Arr(x'+1,v'))\})$ by $(\mapsto\mapsto)$
taking
$\Pi$ to be $x'<v'$,
$t\mapsto u$ to be $x\mapsto 10$,
$\Sigma$ to be $v\mapsto 11$,
$I$ to be the singleton set $\{0\}$,
$\Pi_0$ to be $\True$,
$t_0\mapsto u_0$ to be $x'\mapsto 10$,
and $\Sigma_0$ to be $\Arr(x'+1,v')$,
then it becomes
\begin{align*}
  P&(x'<v' \land x<v\land x'+1=v', v\mapsto 11,\{(x=x'\land 10=10,x'+1\mapsto 11)\})
  \tag{b1}
  \\
  &\land
  P(x'<v' \land x<v\land x'+1<v', v\mapsto 11,\\
  &\qquad \{(x=x'\land 10=10,x'+1\mapsto 11 * \Arr(x'+2,v'))\})
  \tag{b2}
  \\
  &\land
  P(x'<v' \land x<v\land x'+1>v', v\mapsto 11,\emptyset)
  \tag{b3}
\end{align*}
by ($\mapsto$Arr)
taking
$\Pi$ to be $x'<v'\land x<v$,
$t\mapsto u$ to be $v\mapsto 11$,
$\Sigma$ to be $\Emp$,
$\Pi_i$ to be $x=x'\land 10=10$,
$\Arr(t_i,t'_i)$ to be $\Arr(x'+1,v')$,
$\Sigma_i$ to be $\Emp$
and $S$ to be $\emptyset$.
Hence we have
\begin{align*}
  P&(x'<v' \land x<v \land x'+1=v', \Emp,\\
  &\qquad\{(x=x'\land 10=10\land v=x'+1 \land 11=11,\Emp)\})
  \tag{b1'}
  \\
  &\land
  P(x'<v' \land x<v \land x'+1<v', \Emp,\\
  &\qquad\{(x=x'\land 10=10\land v=x'+1 \land 11=11,\Arr(x'+2,v'))\})
  \tag{b2'}
  \\
  &\land
  \neg(x'<v' \land x<v \land x'+1>v')
  \tag{b3'}    
\end{align*}
by $(\mapsto\mapsto)$ and (empty).
We note that (b2') becomes
$P(x'<v'\land x<v \land x'+1<v', \Emp,\emptyset\})$ by (EmpNEmp)
taking 
$\Pi$ to be $x'<v' \land x<v \land x'+1<v'$,
$\Pi'$ to be $x=x'\land 10=10\land v=x'+1 \land 11=11$,
$\Sigma'$ to be $\Arr(x'+2,v')$,
and $S$ to be $\emptyset$.
Thus we obtain
\begin{align*}
  \bigl((x'<v'& \land x<v \land x'+1=v') \to (x=x'\land 10=10\land v=x'+1 \land 11=11)\bigr)
  \tag{b21}
  \\
  &\land
  \neg(x'<v' \land x<v \land x'+1<v')
  \land
  \neg(x'<v' \land x<v \land x'+1>v')
  \tag{b22}, 
\end{align*}
where (b21) is obtained from (b1') by applying (EmpEmp),
and (b22) is obtained from (b2') and (b3') by applying (empty). 
Finally, by combining (a'), (b21), (b22), and (c'), we have
\begin{align*}
\neg&(x'=v'\land x<v)
\\
&\land
  \bigl((x'<v' \land x<v \land x'+1=v') \to (x=x'\land 10=10\land v=x'+1 \land 11=11)\bigr)
\\
&\land
\neg(x'<v' \land x<v \land x'+1<v')
\land
\neg(x'<v' \land x<v \land x'+1>v')
\\
&\land
  \neg(x'>v'\land x<v)
\end{align*}
as the translation result of
$P(\True,x\mapsto 10 * v\mapsto 11, \{(\True,\Arr(x',v'))\})$. 
\end{exa}

\subsection{Decidability Theorem}

The aim of $P$ is to give an equivalent formula of Presburger arithmetic to a given entailment. 
The correctness property of $P$ is stated as follows. 

\begin{prop}[Correctness of Translation $P$]\label{prop:correct}
If any array atomic formula in $\Sigma_i$ has the form $\Arr(t,t+u)$
such that the term $u$ does not contain $\Vec{y}$, then 
\[
\Pi\land\Tilde\Sigma \models \{\exists \Vec{y_i}(\Pi_i\land\Tilde\Sigma_i)\}_{i\in I}
\quad
\hbox{ iff }
\quad
\models \forall \Vec z\exists \Vec yP(\Pi,\Sigma,\{(\Pi_i,\Sigma_i)\}_{i\in I})
\]
where $\Vec y$ is a sequence of $\Vec{y_i}$ ($i\in I$), and
$\Vec z$ is $\FV(P(\Pi,\Sigma,\{(\Pi_i,\Sigma_i)\}_{i\in I}))\setminus \FV(\Pi,\Sigma,\{\Pi_i\}_{i\in I},\{\Sigma_i\}_{i\in I})$. 
\end{prop}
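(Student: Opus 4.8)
The natural approach is induction on the well-founded measure $(|\Sigma|+\sum_{i\in I}|\Sigma_i|,|S|)$ that is already used to justify termination of $P$. For each defining clause of $P$ in Figure~\ref{fig:transP}, I would show that the semantic entailment $\Pi\land\Tilde\Sigma\models\{\exists\Vec{y_i}(\Pi_i\land\Tilde\Sigma_i)\}_{i\in I}$ holds exactly when the entailment obtained on the right-hand side of the clause holds, and then invoke the induction hypothesis on that smaller instance. The quantifier bookkeeping should be handled uniformly: fresh variables $z,z'$ introduced by clauses $(\mathbf{Arr}\mapsto)$ and $(\mathbf{ArrArr})$ correspond on the semantic side to the (unknown) contents of array cells, so they end up universally quantified in $\Vec z$; the succedent's existential variables $\Vec{y_i}$ stay existentially quantified. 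The side condition — every array in $\Sigma_i$ has the form $\Arr(t,t+u)$ with $u$ free of $\Vec{y_i}$ — must be checked to be preserved by every clause: splitting $\Arr(t_i,t_i')$ into a points-to cell plus $\Arr(t_i+1,t_i')$ (or $\Arr(t_i+m+1,t_i')$) keeps the size term a difference of the original size term and a constant/fresh variable, hence still independent of $\Vec{y_i}$.

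**Key cases.** The base cases $\mathbf{(EmpEmp)}$ and $\mathbf{(empty)}$ are direct semantic computations: on the empty heap $\Pi\land\Emp$ is satisfiable by $(s,\mathbf{0})$ iff $s\models\Pi$, and the disjunction of $\Pi_i\land\Emp$ holds iff $s\models\bigvee\Pi_i$; for $\mathbf{(empty)}$ the antecedent is unsatisfiable precisely when $\neg(\Pi\land\Sorted(\Sigma))$ is valid, using that a sorted nonempty $\Sigma$ is satisfiable by some heap iff $\Sorted(\Sigma)$ holds in the store. The removal clauses $\mathbf{(EmpNEmp)}$ and $\mathbf{(NEmpEmp)}$ rest on the observation that $\Pi'\land\Tilde\Sigma'$ with $\Sigma'\not\equiv\Emp$ is never satisfied by the empty heap while $\Pi'\land\Emp$ is never satisfied by a nonempty heap; since $\Tilde\Sigma$ fixes the domain size, the dropped disjuncts are vacuous. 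The genuinely substantial case is $\mathbf{(ArrArr)}$: here one must argue that when all head atoms are arrays, a model of the antecedent splits its leading block of cells against the leading blocks of the succedent disjuncts by first comparing the lengths $m=t'-t$ against $m_i=t_i'-t_i$, so that a shortest array (possibly the left one, index $I'$) is consumed entirely and the longer ones leave a residual $\Arr(\cdot+m+1,\cdot)$; the case split over $I'\subseteq I$ enumerates exactly which succedent arrays are the shortest. One checks that the contents of the consumed cells are unconstrained, which is why no fresh $z$ is needed there, and that the two families of recursive calls (left array shortest versus a succedent array strictly shorter, with witness $m'=m_{\min I'}$) jointly cover all heap splittings.

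**Main obstacle.** I expect the hard part to be the $\mathbf{(ArrArr)}$ clause, specifically proving that the finite disjunction over $I'\subseteq I$ together with the $m=m_{I'}\land m<m_{I\setminus I'}$ guards is both sound and complete — that every way a heap model can line up the antecedent's array against the succedent's arrays is captured by exactly one branch, and conversely each branch is realizable. This requires care because the arrays in different succedent disjuncts start at possibly different, store-dependent addresses (all of which, by $\Sorted$, must coincide with $t$ at the head), and because after splitting one must re-establish that the residual entailment is again in the shape $P$ expects. A secondary subtlety is handling the extended terms $u-t$: I would carry them formally through the inductive argument and only at the end appeal to the stated rewriting $t'+(u-t)=t''\leadsto t'+u=t''+t$ to land inside genuine Presburger arithmetic, noting this rewriting preserves truth over $N$ under the accompanying sortedness guards that guarantee $u\ge t$. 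The remaining clauses $\mathbf{(\!\mapsto\!\mapsto\!)}$, $\mathbf{(\!\mapsto\!Arr)}$, $\mathbf{(Arr\!\mapsto\!)}$ are comparatively routine unfoldings, each justified by the semantics of $\mapsto$ and $\Arr$ on a single head cell plus the definition of $\Tilde\Sigma$ and $t<\Sigma$.
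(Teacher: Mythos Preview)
Your plan to proceed by direct induction on the termination measure, proving the full biconditional with the $\forall\Vec z\,\exists\Vec y$ prefix at each step, has a genuine gap in the quantifier bookkeeping that you pass over with ``handled uniformly.'' Concretely, when you hit $(\mathbf{Arr}\!\mapsto)$ the definition produces a \emph{conjunction} of two recursive calls, and the statement you are trying to prove has the shape $\models\forall z\,z'\,\Vec{z''}\,\exists\Vec y\,(P_1\land P_2)$. To invoke the induction hypothesis on $P_1$ and $P_2$ separately you would need to commute this to $(\models\forall z\,\Vec{z_1''}\,\exists\Vec y\,P_1)\land(\models\forall z'\,\Vec{z_2''}\,\exists\Vec y\,P_2)$, and $\forall\Vec z\,\exists\Vec y\,(A\land B)$ is in general strictly stronger than $(\forall\Vec z\,\exists\Vec y\,A)\land(\forall\Vec z\,\exists\Vec y\,B)$. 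Whether this commutation is legitimate here is exactly where the side condition on array sizes enters, and it is not a matter of the condition being ``preserved by every clause''---that is necessary but far from sufficient.

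The paper avoids this difficulty by a detour: it extends the term language with heap-dependent symbols $[t]$ (the value stored at address $t$) and defines a variant $P'$ that writes $t\mapsto[t]$ instead of $t\mapsto z$. Because $[t]$ is determined by $(s,h)$, no fresh variables are introduced and the inductive argument (Lemma~\ref{lemma:correct}(1)) goes through clause by clause without any quantifier juggling---including $(\mathbf{ArrArr})$, which is no harder than the other cases. The real work is then concentrated in the \emph{proof of Proposition~\ref{prop:correct}} itself, which compares $P$ with $P'$: one direction instantiates each $z_j$ by the actual heap content $[t_j]$; the converse, given arbitrary values $\Vec a$ for $\Vec z$, modifies the heap so that the relevant array cells store $\Vec a$ and then applies the entailment to the modified heap. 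This last step is precisely where the side condition is used: it guarantees that the address terms $t_j$ (hence the symbols $[t_j]$) contain no existential variables $\Vec y$, so that the substitution $P'\;=\;P[\Vec z:=[\Vec t\,]]$ is well-formed and the choice of $\Vec y$ can be made after $\Vec z$ is fixed. Your proposal does not isolate this mechanism, and your identification of $(\mathbf{ArrArr})$ as the main obstacle is a misdiagnosis---the delicate point is the interaction of the fresh $z$'s from $(\mathbf{Arr}\!\mapsto)$ with the existential $\Vec y$, not the combinatorics of array alignment.
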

We note that $\Vec z$ are the fresh variables introduced in the unfolding of $P(\Pi,\Sigma,\{(\Pi_i,\Sigma_i)\}_{i\in I})$. 
The proof of this theorem will be given in the next section. 

The correctness property is shown with the condition described in the theorem. 
This condition avoids a complicated situation for $\Vec y$ and $\Vec z$, 
such that some variables in $\Vec y$ depend on $\Vec z$, and some determine $\Vec z$. 
For example, if we consider $\Arr(1,5)\vdash \exists y_1y_2(\Arr(1,y_1)*y_1+1\mapsto y_2*\Arr(y_1+2,5))$, 
we will have $y_1+1\mapsto z$ during the unfolding of 
$P(\True,\Arr(1,5),\{(\True,\Arr(1,y_1)*y_1+1\mapsto y_2*\Arr(y_1+2,5))\})$. 
Then we have $z=y_2$ after some logical simplification. 
This fact means that $y_2$ depends on $z$, 
and moreover $z$ is indirectly determined by $y_1$. 
The latter case occurs when sizes of array depend on $\Vec y$. 
We need to exclude this situation. 

Finally we have the decidability result for the entailment problem of $\SLAR$ under the condition from the above theorem and the property of sorted entailments (stated in Lemma~\ref{lemma:split_sorted}). 

\begin{thm}[Decidability of Validity Checking of Entailments in $\SLAR$]\label{thm:decidability-slar}
Validity checking of entailments
$\Pi\land\Sigma \vdash \{\exists \Vec{y_i}(\Pi_i\land\Sigma_i)\}_{i\in I}$ in $\SLAR$ 
  is decidable, 
if any array atomic formula in $\Sigma_i$ has the form $\Arr(t,t+u)$ such that the term $u$ does not contain $\Vec{y_i}$. 
\end{thm}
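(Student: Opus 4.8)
The plan is to derive the theorem from Lemma~\ref{lemma:split_sorted}, Proposition~\ref{prop:correct}, and the classical decidability of Presburger arithmetic; no new combinatorial work is needed beyond bookkeeping. First I would reduce to the case in which the antecedent is a QF symbolic heap: a general entailment $\exists\vec{x}(\Pi\land\Sigma)\vdash\{\phi_i\}_{i\in I}$ is valid iff $\Pi\land\Sigma\vdash\{\phi_i\}_{i\in I}$ is, as noted before the definition of QF entailments, so we may assume the entailment has exactly the shape in the theorem statement with $\varphi\eqDef\Pi\land\Sigma$ on the left. Next, applying Lemma~\ref{lemma:split_sorted} (read over all stores $s$), the validity of $\varphi\vdash\{\phi_i\}_{i\in I}$ is equivalent to the validity of the \emph{sorted} entailments $\widetilde{\varphi'}\vdash\{\widetilde{\phi'}\mid i\in I,\ \phi'\in\Perm(\phi_i)\}$, one for each $\varphi'\in\Perm(\varphi)$. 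Since $\Perm(\varphi)$ and every $\Perm(\phi_i)$ are finite, this is a finite family of sorted entailments, and it suffices to decide validity of each member separately.

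I would then fix one such sorted entailment. Its antecedent is $\widetilde{\varphi'}=\Pi\land(\Sigma')^\sim$ for some permutation $\Sigma'$ of $\Sigma$, and each element of its succedent is $\widetilde{\phi'}=\exists\vec{y_i}(\Pi_i\land(\Sigma_i')^\sim)$ for some $i\in I$ and some permutation $\Sigma_i'$ of $\Sigma_i$, so it is of the form required by Proposition~\ref{prop:correct} once we take as index set the finite set $\{(i,\phi')\mid i\in I,\ \phi'\in\Perm(\phi_i)\}$. The one thing to verify is that the array-size side condition is inherited: since a permutation with respect to $*$ merely reorders the atomic formulas of a spatial part without altering them, every array atom of $\Sigma_i'$ is an array atom of $\Sigma_i$ and hence still has the form $\Arr(t,t+u)$ with $u$ free of $\vec{y_i}$. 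Proposition~\ref{prop:correct} then tells us the validity of this sorted entailment is equivalent to $\models\forall\vec{z}\,\exists\vec{y}\,P(\Pi,\Sigma',\{(\Pi_i,\Sigma_i')\})$. The term on the right is effectively computable: the $\eqDef$-unfolding of $P$ terminates by the decreasing measure discussed after Figure~\ref{fig:transP}, it yields a pure (Presburger) formula after the elimination of the auxiliary $u-t$ terms, and prefixing $\forall\vec{z}\,\exists\vec{y}$ keeps it a sentence of Presburger arithmetic.

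Finally, since validity in Presburger arithmetic is decidable, the validity of each of the finitely many sorted entailments is decidable, hence so is the validity of their conjunction, and hence — unwinding the two reductions above — so is the validity of the original $\SLAR$ entailment. The real content of the argument is entirely in Proposition~\ref{prop:correct}, whose proof is deferred to the next section; within the proof of this theorem the only point demanding attention is that the array-size hypothesis survives the passage to permuted, sorted succedents, which as noted is immediate.
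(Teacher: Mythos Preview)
Your proposal is correct and follows essentially the same route as the paper: reduce to sorted entailments via Lemma~\ref{lemma:split_sorted}, apply Proposition~\ref{prop:correct} to translate each sorted entailment into a Presburger sentence, and invoke decidability of Presburger arithmetic. The paper's own proof is in fact terser than yours---it merely cites these two results---so your explicit check that the array-size hypothesis survives permutation of the succedent is a welcome bit of rigor the paper leaves implicit.
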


\begin{exa}
An example
$\Arr(x,x) \vdash x \mapsto 0, \exists y(y > 0 \land x \mapsto y)$
satisfies the condition, and its validity is checked in the following way. 

  \begin{itemize}
    \item It is decomposed into several sorted entailments: in this case, it produces one sorted entailment
$\Arr(x,x)^\sim \vdash x \mapsto 0, \exists y(y > 0 \land x \mapsto y)$

\item Compute $P(\True,\Arr(x,x), S_1)$, 
where $S_1$ is $\{(\True,x \mapsto 0),(y>0, x \mapsto y)\}$. 
It becomes
$P(x<x, x\mapsto z*\Arr(x+1,x), S_1) \land P(x=x, x\mapsto z, S_1)$
by (Arr$\mapsto$). Then it becomes 
      \[P(x<x\land x<x+1, \Arr(x+1,x), S_2) \land P(x=x\land x<x+1, \Emp,
      S_2),\]
where $S_2$ is 
$\{(x=x\land z=0,\Emp),(y>0 \land x=x\land z=y,\Emp)\}$. 
The former conjunct becomes $P(x<x \land x<x+1,\Arr(x+1,x),\emptyset)$ by (NEmpEmp), 
then, by (empty), it becomes 
$\neg(x<x \land x<x+1 \land x+1\le x)$. 
The latter conjunct becomes 
$x=x\land x<x+1 \to (x=x \land z=0)\vee(y>0\land x=x\land z=y)$, 
which is equivalent to $x<x+1 \to z=0\vee(y>0\land z=y)$, 

\item Check the validity of the formula
$\forall xz\exists yP(\True,\Arr(x,x), S_1)$, 
which is equivalent to 
$\forall xz\exists y(\neg(x<x \land x<x+1 \land x+1\le x) \land (z=0\vee (y>0\land z=y)))$. 
Finally the procedure answers ``valid'', since the produced Presburger formula is valid. 
  \end{itemize}
\end{exa}

\subsection{Other Systems of Symbolic Heaps with Arrays}

Other known systems of symbolic heaps with arrays are only
the system given in Brotherston et al.~\cite{Brotherston17}.
They gave a different condition for decidability of the entailment problem for the same symbolic-heap system. 
Their condition disallows existential variables in $u$
for each points-to atomic formula $t\mapsto u$ in the succedent of an entailment.
In order to clarify the difference between our condition and their condition,
we consider the following entailments:

\begin{enumerate}[(i)]
  \item
$\Arr(x,x) \vdash x \mapsto 0, \exists y(y > 0 \land x \mapsto y)$, 
\item
$\Arr(1,5) \vdash \exists y,y'(\Arr(y,y+1) * \Arr(y',y'+2))$, 
\item
$\Arr(1,5) \vdash \exists y(\Arr(1,1+y) * \Arr(2+y,5))$, 
\item
$\Arr(1,5) \vdash \exists y,y'(\Arr(1,1+y) * 2+y \mapsto y' * \Arr(3+y,5))$. 
\end{enumerate}

The entailment (i) can be decided by our decision procedure, but it cannot be decided by their procedure. 
The entailment (ii) is decided by both theirs and ours. 
The entailment (iii) is decided by theirs, but it does not satisfy our condition. 
The entailment (iv) is decided by neither theirs nor ours. 

Our system and the system in \cite{Brotherston17} have the
same purpose, namely, analysis/verification of memory safety.
Basically their target programming language and assertion language are
the same as ours given in this section.
These entailment checkers are essentially used for deciding the side condition
of the consequence rule.
As explained above, ours and theirs have different restrictions for decidability.
Hence the class of programs is the same for
ours and theirs, but some triples can be proved only by ours and
other triples can be proved only by theirs,
according to the shape of assertions.
In this sense both our system and their system
have advantage and disadvantage.

\section{Correctness of Decision Procedure}\label{sec:correctness}

This section shows correctness of our decision procedure. 
We first show the basic property of sorted entailments. 

\subsection{Correctness of Translation}

This subsection shows correctness of the translation $P$. 
The main difficulty for showing correctness is how to handle the new variables (denoted by $z$)
that are introduced during the unfolding $P$. 
In order to do this, we temporarily extend our language with new terms, denoted by $[t]$. 
A term $[t]$ means the value at the address $t$, that is, 
it is interpreted to $h(s(t))$ under $(s,h)$.
We will use this notation instead of $z$, since $z$ must appear in the form $t \mapsto z$  during unfolding $P$, and this $t$ is unique for $z$. 
Note that both $s$ and $h$ are necessary for interpreting 
a formula of the extended language even 
if it is a pure formula. 

In this extended language, we temporarily introduce a variant $P'$ of $P$
so that we use $[t]$ instead of $z$, 
which is defined in the same way as $P$ except
\begin{align*}
P'(\Pi,\Arr(t,t') * \Sigma,S) 
\eqDef
&
P'(\Pi \land t'=t, t \mapsto [t] * \Sigma,S)
\\
&\land
P'(\Pi \land t'>t, t \mapsto [t] * \Arr(t+1,t') * \Sigma,S), 
\end{align*}
when $(\Pi'', t'' \mapsto u'' * \Sigma'') \in S$.
Note that $P'$ never introduces any new variables.

We will introduce some notations. 
Let $S$ be $\{(\Pi,\Sigma)\}_{i\in I}$. 
Then we write $\Tilde S$ for $\{ \Pi_i \land \Tilde{\Sigma_i} \}_{i\in I}$.
We write $\Dom(s,\Sigma)$ for 
the set of addresses used by $\Sigma$ under $s$, that is, 
it is inductively defined as follows: 
$\Dom(s,\Emp) = \emptyset$, 
$\Dom(s,\Emp * \Sigma_1) = \Dom(s,\Sigma_1)$, 
$\Dom(s,t\mapsto u*\Sigma_1) = \{s(t)\}\cup\Dom(s,\Sigma_1)$, and 
$\Dom(s,\Arr(t,u)*\Sigma_1) = \{s(t),\ldots,s(u)\}\cup\Dom(s,\Sigma_1)$ if $s(t) \le s(u)$. 

The next lemma clarifies the connections between entailments, $P$, and $P'$. 

\begin{lem}\label{lemma:correct}
{\rm (1)}
Assume $s,h \models \hat \Pi \land \hat \Sigma$.
Suppose $P'(\Pi,\Sigma,S)$ appears in the unfolding of $P'(\hat\Pi, \hat\Sigma, \hat S)$.
Then

$
s,h|_{\Dom(s,\Sigma)} \models P'(\Pi,\Sigma,S)
$
iff
$
s,h|_{\Dom(s,\Sigma)} \models \Pi \land \Sorted(\Sigma) \imp \Lor\Tilde S.
$

{\rm (2)}
$
\forall sh(s,h \models \Pi\land\Tilde\Sigma \imp s,h \models \exists \Vec yP'(\Pi,\Sigma,S))
$
iff
$
\Pi\land\Tilde\Sigma \models \exists \Vec y\Lor\Tilde S.
$

{\rm (3)} $\models \neg(\Pi \land \Sorted(\Sigma)) \imp P(\Pi,\Sigma,S)$.

{\rm (4)}
  Let $\Vec z$ be fresh variables introduced during the calculation of $P(\Pi,\Sigma,S)$.
  Then
$
\forall sh(s,h \models \Pi \land \Tilde\Sigma \imp s,h \models 
\forall \Vec z\exists \Vec yP(\Pi,\Sigma,S))
$
iff
$
\models \forall \Vec z\exists \Vec yP(\Pi,\Sigma,S).
$
\end{lem}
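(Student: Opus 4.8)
The plan is to establish the four items in the order (1), (2), (3), (4): parts~(1) and~(3) are the two inductions that carry the content, while~(2) is an immediate specialization of~(1) and~(4) a short semantic consequence of~(3). For part~(1) I would argue by induction on the $\eqDef$-rewriting of $P'(\Pi,\Sigma,S)$ — equivalently, on the termination measure $(|\Sigma|+\sum_{i\in I}|\Sigma_i|,|S|)$ of Figure~\ref{fig:transP}, read lexicographically and noting, as the authors do, that a bounded run of $\eqDef$-steps eventually strictly decreases it — with a case split on which clause (in the $P'$-variant) rewrites $P'(\Pi,\Sigma,S)$. The invariant that makes the induction go through is the standing hypothesis $s,h\models\hat\Pi\land\hat\Sigma$ on the root: it guarantees, at every node of the unfolding, that (i)~$\Dom(s,\Sigma)\subseteq\Dom(h)$, so $h|_{\Dom(s,\Sigma)}$ retains every cell named by $\Sigma$, and (ii)~each cell-value term $[t]$ introduced into $\Sigma$ or $S$ has its address $t$ evaluated inside $\Dom(h)$, so that $[t]$ genuinely denotes $h(s(t))$ and the atom $t\mapsto[t]$ holds on the singleton subheap of $h$ at $s(t)$ — this is precisely why the proof is run through $P'$ rather than $P$. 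The bookkeeping clauses (EmpL), (EmpR), (EmpNEmp), (NEmpEmp) are dispatched by observing that the removed atom or disjunct is unsatisfiable on the relevant sub-heap (empty vs.\ non-empty domain); (EmpEmp) and (empty) are direct appeals to the semantics. The clauses doing real work are the array-unfolding ones: the $P'$-specific $\Arr$ clause and (ArrArr). Here one uses that replacing $\Arr(t,t')$ by $t\mapsto[t]*\Arr(t+1,t')$, and — for (ArrArr) — cutting every longer head array at the length of a shortest one and then discarding the shortest arrays, is semantics-preserving on the restricted heap, and that the pure side conditions the clause introduces (the case distinction on array sizes, and the choice of $I'$ picking out which right head arrays are shortest) are \emph{equivalent}, not merely implied by, the corresponding case of the semantics of $*$. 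I expect this last point to be the main obstacle: one must check carefully that each array split transforms $\Dom(s,\cdot)$ in the intended way and that no case of the size comparison is missed or double-counted.

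Part~(2) is then the instance of~(1) at the root of the unfolding, i.e.\ $\hat\Pi=\Pi$, $\hat\Sigma=\Sigma$, $\hat S=S$. If $s,h\models\Pi\land\Tilde\Sigma$ then $s,h\models\Sigma$, hence $\Dom(h)=\Dom(s,\Sigma)$ and $h|_{\Dom(s,\Sigma)}=h$; moreover $s,h\models\Pi\land\Sorted(\Sigma)$, so the implication supplied by~(1) collapses to $\Lor\Tilde S$. Since $\Vec y$ are fresh for the antecedent, $\exists\Vec y$ distributes over this implication, and the resulting statement is literally the definitional unfolding of $\Pi\land\Tilde\Sigma\models\exists\Vec y\,\Lor\Tilde S$.

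Part~(3) is a second induction, on the $\eqDef$-rewriting of $P(\Pi,\Sigma,S)$ this time, with a case split on the clause applied. The base cases are immediate: (empty) makes $P(\Pi,\Sigma,\emptyset)$ literally $\neg(\Pi\land\Sorted(\Sigma))$, and (EmpEmp) gives $\Pi\imp\bigvee_i\Pi_i$, which follows from $\neg(\Pi\land\Sorted(\Emp))=\neg\Pi$. For every other clause $P(\Pi,\Sigma,S)$ is a conjunction of subcalls $P(\Pi',\Sigma',S')$ (possibly over fresh $z$, which are harmless since they do not occur on the left of the implication), and it suffices to check, for each, that $\Pi'\land\Sorted(\Sigma')\models\Pi\land\Sorted(\Sigma)$; then $\neg(\Pi\land\Sorted(\Sigma))\models\neg(\Pi'\land\Sorted(\Sigma'))$, the induction hypothesis gives $\neg(\Pi'\land\Sorted(\Sigma'))\models P(\Pi',\Sigma',S')$, and the conjunction follows. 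Verifying $\Pi'\land\Sorted(\Sigma')\models\Pi\land\Sorted(\Sigma)$ is routine arithmetic, reducing to elementary monotonicity facts about $<$ and $\le$ over $N$ together with the observation that splitting an array into a head cell and a residual array preserves $\Sorted'$.

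Finally, part~(4): the direction $\Leftarrow$ is trivial. For $\Rightarrow$ fix $(s,h)$. If $s,h\models\Pi\land\Sorted(\Sigma)$, then since $\Sorted(\Sigma)$ expresses exactly the arithmetic conditions under which $\Sigma$ is realizable, one may choose $h'$ with $s,h'\models\Pi\land\Tilde\Sigma$; the hypothesis applied to $(s,h')$ gives $s,h'\models\forall\Vec z\exists\Vec y\,P(\Pi,\Sigma,S)$, and as this is a pure formula its truth depends on $s$ only, so it holds at $(s,h)$ as well. If instead $s,h\not\models\Pi\land\Sorted(\Sigma)$, then~(3) already gives $s,h\models P(\Pi,\Sigma,S)$; since $\Vec z$ and $\Vec y$ do not occur in $\Pi\land\Sorted(\Sigma)$, the same applies to every reassignment of $\Vec z$, whence $s,h\models\forall\Vec z\exists\Vec y\,P(\Pi,\Sigma,S)$.
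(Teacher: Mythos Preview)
Your proposal is correct and follows essentially the same approach as the paper: part~(1) by induction on the $\eqDef$-steps with a case split on the clause applied, part~(2) as the root instance of~(1), part~(3) by a second induction on $\eqDef$, and part~(4) from~(3) via the observation that $\exists h(s,h\models\Pi\land\Tilde\Sigma)$ iff $s\models\Pi\land\Sorted(\Sigma)$. Your treatment of~(3), isolating the sufficient condition $\Pi'\land\Sorted(\Sigma')\models\Pi\land\Sorted(\Sigma)$ for each subcall, is in fact more explicit than the paper's one-line ``by induction on the steps of $\eqDef$''; note only that the paper also works out the $(\mapsto\mapsto)$ and $(\mapsto\mathbf{Arr})$ cases in detail for part~(1), not just the two array-splitting clauses you singled out.
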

\noindent\textbf{Proof of Lemma~\ref{lemma:correct}~(1)}. \quad
This is shown by induction on the steps $\eqDef$.
Consider cases according to the definition of $P'$.

{\bfseries Case~1} ($\mapsto\mapsto$-case):
\begin{align*}
P'&(\Pi,t \mapsto u * \Sigma,\{( \Pi_i, t_i \mapsto u_i * \Sigma_i) \}_{i \in I }) 
\\
&\eqDef P'(\Pi \land t < \Sigma,\Sigma, \{( \Pi_i \land t=t_i \land u=u_i \land t_i < \Sigma_i,\Sigma_i) \}_{i \in I}).
\end{align*}

Let $h_1=h|_{\Dom(s,t \mapsto u * \Sigma)}$,
$h_2=h|_{\Dom(s,\Sigma)}$. 
Then $h_1=\{(s(t),h(s(t)))\}+h_2$. 
It is enough to show
\begin{equation}\label{eq_1}
s,h_1 \models 
\Pi \land \Sorted(t \mapsto u * \Sigma) \imp \Lor_{i\in I} \Pi_i\land (t_i \mapsto u_i * \Sigma_i)^\sim
\end{equation}
iff
\begin{equation}\label{eq_2}
s,h_2 \models 
\Pi \land t < \Sigma \land \Sorted(\Sigma) \imp \Lor_{i\in I} \Pi_i\land t=t_i \land u=u_i \land t_i < \Sigma_i \land \Sigma_i^\sim.
\end{equation}

The only-if part.
Assume (\ref{eq_1}) and 
the antecedent of (\ref{eq_2}). 
Then the antecedent of (\ref{eq_1}) holds, since it is equivalent to the antecedent of (\ref{eq_2}). 
Then the succedent of (\ref{eq_1}) is true for $s,h_1$.
Hence the succedent of (\ref{eq_2}) is true for $s,h_2$.

The if part.
Assume (\ref{eq_2}) and 
the antecedent of (\ref{eq_1}).
Then the antecedent of (\ref{eq_2}) holds, since it is equivalent to the antecedent of (\ref{eq_1}).
Then the succedent of (\ref{eq_2}) is true for $s,h_2$.
Hence the succedent of (\ref{eq_1}) is true for $s,h_1$.

{\bfseries Case~2} ({\bf Arr}$\mapsto$-case):
\begin{align*}
P'(\Pi,\Arr(t,t') * \Sigma,S) 
\eqDef&
P'(\Pi \land t'=t \land t \le \Sigma,\Sigma,S)
\\
&\land
P'(\Pi \land t'>t,t \mapsto [t] * \Arr(t+1,t') * \Sigma,S). 
\end{align*}

Let $h_3 = h|_{\Dom(s,\Arr(t,t') * \Sigma)}$,
$h_4=h|_{\Dom(s,t\mapsto [t] * \Sigma)}$, and 
$h_5=h|_{\Dom(s,t \mapsto [t] * \Arr(t+1,t') * \Sigma)}$. 
It is enough to show 
\begin{equation}\label{eq_3}
s,h_3 \models 
\Pi \land \Sorted(\Arr(t,t') * \Sigma) \imp \Lor \Tilde S
\end{equation}
is equivalent to the conjunction of the following two clauses: 
\begin{equation}
    s,h_4 \models
    \Pi \land t'=t \land t < \Sigma \land \Sorted(\Sigma) \imp \Lor \Tilde S
  \label{eq_4}
\end{equation}
and
\begin{equation}
    s,h_5 \models
    \Pi \land t'>t \land \Sorted(t \mapsto [t] * \Arr(t+1,t') * \Sigma)
    \imp \Lor \Tilde S.
  \label{eq_5}
\end{equation}

{\bfseries Case~2.1:} the case of $s(t)=s(t')$.

We note that $h_3 = h_4$. 
The antecedent of (\ref{eq_4}) is equivalent to
the antecedent of (\ref{eq_3}). 
(\ref{eq_5}) is true since $s(t)=s(t')$.
Hence (\ref{eq_3}) and $(\ref{eq_4})\land(\ref{eq_5})$ are equivalent.

{\bfseries Case~2.2:} the case of $s(t')>s(t)$. 

We note that $h_3=h_5$.
The antecedent of (\ref{eq_5}) is equivalent to
the antecedent of (\ref{eq_3}).
(\ref{eq_4}) is true since $s(t')>s(t)$. 
Hence (\ref{eq_3}) and $(\ref{eq_4})\land(\ref{eq_5})$ are equivalent.

{\bfseries Case~3} ($\mapsto${\bf Arr}-case):
\begin{align*}
P'(\Pi,&t \mapsto u * \Sigma,\{(\Pi_i,\Arr(t_i,t_i') * \Sigma_i)\} \cup S)
\\
\eqDef&
P'(\Pi \land t_i'=t_i,t \mapsto u * \Sigma,\{(\Pi_i,t_i \mapsto u * \Sigma_i)\} \cup S)
\\ 
&\land
P'(\Pi \land t_i'>t_i,t \mapsto u * \Sigma,\{(\Pi_i,t_i \mapsto u * \Arr(t_i+1,t_i') * \Sigma_i)\} \cup S)
\\
&\land
P'(\Pi \land t_i'<t_i,t \mapsto u * \Sigma,S)
\end{align*}
This case is proved by showing the following claim, 
which is shown similarly to the claim of Case~2.
Let $h' = h|_{\Dom(s,t \mapsto u * \Sigma)}$.
Let $\Sorted_L$ be an abbreviation of $\Sorted(t \mapsto u * \Sigma)$. 
Then 
\[
s,h'\models 
\Pi \land \Sorted_L \imp 
\Pi_i \land (\Arr(t_i,t_i') * \Sigma_i)^\sim \lor \Lor \Tilde S
\]
is equivalent to the conjunction of the following three clauses: 

\begin{align*}
  s,h' & \models
\Pi \land t_i'=t_i\land \Sorted_L
\imp 
\Pi_i \land (t_i \mapsto u * \Sigma_i)^\sim \lor \Lor \Tilde S
\\
  s,h' & \models
\Pi \land t_i'>t_i\land \Sorted_L
%\\
%&
\imp \Pi_i \land (t_i \mapsto u * \Arr(t_i+1,t_i') * \Sigma_i)^\sim \lor \Lor \Tilde S,
\\
  s,h' & \models
\Pi \land t_i'<t_i\land \Sorted_L \imp \Lor \Tilde S. 
\end{align*}

{\bfseries Case~4} ({\bf (ArrArr)}-case): 
Consider that
$P'(\Pi,\Arr(t,t') * \Sigma,\{( \Pi_i,\Arr(t_i,t_i') * \Sigma_i)\}_{i \in I})$
is defined by the conjunction of 
  \[\hspace{-.5cm}P'\left(%
\begin{array}{l}
    \Pi \land
    m=m_{I'}\land m<m_{I\setminus I'}
    \land t \le t' \land t' < \Sigma \land 0 < t, \Sigma,
    \\
    \{( \Pi_i \land t = t_i \land t'_i < \Sigma_i, \Sigma_i)\}_{i \in I'}
    \cup
    \{( \Pi_i \land t = t_i,\Arr(t_i+m+1,t_i') * \Sigma_i)\}_{i \in I\setminus I'}
  \end{array}%
\right)\]
for all $I'\subseteq I$ and 
  \[\hspace{-.5cm}P'\left(%
\begin{array}{l}
    \Pi \land m'<m \land m'=m_{I'} \land m'<m_{I\setminus I'} \land 0 < t,
    \Arr(t+m'+1,t') * \Sigma,
    \\
    \{( \Pi_i \land t = t_i \land t'_i < \Sigma_i, \Sigma_i )\}_{i \in I'}
    \cup
    \{( \Pi_i \land t = t_i,\Arr(t_i+m'+1,t_i') * \Sigma_i )\}_{i \in I\setminus I'}
  \end{array}%
\right)\]
for all $I' \subseteq I$ with $I' \neq \emptyset$, 
where $m$, $m_i$, and $m'$ are abbreviations of 
$t'-t$, $t_i'-t_i$, and $m_{\min I'}$, respectively.

Let 
$h_6 = h|_{\Dom(s,\Arr(t,t') * \Sigma)}$, 
$h_7 = h|_{\Dom(s,\Sigma)}$, and 
$h_8 = h|_{\Dom(s,\Arr(t+m'+1,t') * \Sigma)}$. 
It is enough to show 
\begin{equation}
\label{eq_6}
s,h_6 \models 
\Pi \land \Sorted(\Arr(t,t') * \Sigma) \imp \Lor_{i\in I} \Pi_i \land (\Arr(t_i,t_i') * \Sigma_i))^\sim
\end{equation}
is equivalent to the conjunction of the following
\begin{align}
    s,h_7& \models
    \Pi\land m=m_{I'} \land m<m_{I\setminus I'} \land 0 < t
    \land t \le t' \land t' < \Sigma \land \Sorted(\Sigma)\notag
  \\
    &\imp \Lor_{i\in I'}
    \Pi_i \land t = t' \land t'_i < \Sigma_i\land\Sigma_i^\sim\notag
  \\
    &\qquad \vee\
    \Lor_{i\in I\setminus I'}\Pi_i \land t = t' \land (\Arr(t_i+m'+1,t_i') * \Sigma_i)^\sim
  \label{eq_7}
\end{align}
for any $I' \subseteq I$, 
and
\begin{align}
  \hspace{-1cm}
    s, h_8 & \models
    \Pi\land m'<m \land m'=m_{I'}
    \land m'<m_{I\setminus I'} \land 0 < t
    \land \Sorted(\Arr(t+m'+1,t') * \Sigma)\notag
  \\
    &\imp \Lor_{i \in I'}
    \Pi_i \land t = t' \land t'_i < \Sigma_i\land\Sigma_i^\sim\notag
  \\
    &\qquad \vee\
    \Lor_{i \in I\setminus I'}\Pi_i \land t = t' \land(\Arr(t_i+m'+1,t_i') * \Sigma_i)^\sim
  \label{eq_8}
\end{align}
for any $I' \subseteq I$ with $I' \neq \emptyset$.

{\bfseries Case~4.1:} the case of $s \models m=m_{I'} \land m<m_{I\setminus I'}$ for some $I' \subseteq I$. 

The antecedent of the conjunct of the form of (\ref{eq_7}) with $I'$ being this $I'$ satisfies the case condition.
The conjunct of the form (\ref{eq_7}) with $I'$ begin not this $I'$ and 
the conjuncts of the form (\ref{eq_8}) are true,
because their antecedents are false by the case condition.

Now we show the only-if part and the if part 
by using $h_6= h|_{\{s(t),s(t+1),\ldots,s(t')\}} + h_7$.

The only-if part: 
Suppose (\ref{eq_6}) is true. 
By the above observation it is enough to consider the conjunct of the form (\ref{eq_7}) with this $I'$. 
Assume the antecedent of the conjunct is true for $s,h_7$. 
Then its succedent is also true for $s,h_7$ since the succedent of (\ref{eq_6}) is true for $s,h_6$. 
Therefore the form of (\ref{eq_7}) with $I'$ being this $I'$ holds. 

The if part:
Suppose that the forms of (\ref{eq_7}) for any $I'\subseteq I$ hold, and
the forms of (\ref{eq_8}) for any $I'$ such that $\emptyset \neq I' \subseteq I$ hold.
Assume that the antecedent of (\ref{eq_6}) is true for $s,h_6$.
Then the succedent of the form (\ref{eq_7}) with $I'$ being this $I'$ holds for $s,h_7$ since its antecedent is true. 
Hence the succedent of (\ref{eq_6}) is true for $s,h_6$.
Therefore (\ref{eq_6}) holds. 

{\bfseries Case~4.2:} the case of  
$s \models m'<m \land m'=m_{I'} \land m'<m_{I\setminus I'}$ for some $I' \subseteq I$.
It is similar to Case 4.1 by using
$h_6 = h|_{\{s(t), \ldots, s(t+m') \}} + h_8$. 

\noindent\textbf{Proof of Lemma~\ref{lemma:correct}~(2)}.\quad
We first show the only-if part. 
Assume the left-hand side of the claim and 
$s,h \models \Pi \land \Tilde\Sigma$.
By the left-hand side, we obtain 
$s,h \models \exists \Vec yP'(\Pi,\Sigma,S)$.
Hence we have $s'$ such that $s',h \models P'(\Pi,\Sigma,S)$.
By (1),
$s',h \models \Pi \land \Sorted(\Sigma) \imp \Lor S$.
Thus $s',h \models \Lor S$. 
Finally we have $s,h \models \exists \Vec y\Lor S$.

Next we show the if part. 
Fix $s,h$.
Assume the right-hand side of the claim and 
$s,h \models \Pi \land \Tilde\Sigma$.
By the right-hand side, $s,h \models \exists \Vec y\Lor \Tilde S$ holds.
Hence we have $s'$ such that $s',h \models \Lor S$.
Then we obtain $s',h \models \Pi \land \Sorted(\Sigma) \imp \Lor \Tilde S$.
By (1), $s',h \models P'(\Pi,\Sigma,S)$ holds.
Finally we have $s,h \models \exists \Vec yP'(\Pi,\Sigma,S)$.

\noindent\textbf{Proof of Lemma~\ref{lemma:correct}~(3)}.\quad
It is shown by induction on the steps of $\eqDef$. 
Consider cases according to the definition of $P$. 

{\bfseries Case~1} (({\bf EmpL})-case):
By the induction hypothesis we have
$\models \neg(\Pi\land\Sorted(\Sigma)) \imp P(\Pi,\Sigma,S)$. 
This is equivalent to
$\models \neg(\Pi\land\Sorted(\Emp * \Sigma)) \imp P(\Pi,\Emp * \Sigma,S)$. 

The other cases ({\bf EmpR}), ({\bf EmpNEmp}), ({\bf NEmpEmp}), and ($\mapsto\mapsto$)
are shown in a similar way. We will consider the remaining cases. 

{\bfseries Case~2} (({\bf EmpEmp})-case): 
This case is easily shown, since
$\models \neg(\Pi\land\Sorted(\Emp)) \imp (\Pi \imp \bigvee_{i\in I}\Pi_i)$ trivially holds. 

{\bfseries Case~3} (({\bf empty})-case):
This case is easily shown, since
$\models \neg(\Pi\land\Sorted(\Emp)) \imp \neg(\Pi\land\Sorted(\Emp))$ trivially holds. 

{\bfseries Case~4} (($\mapsto${\bf Arr})-case):
By the induction hypothesis, we have 
\[
\models \neg(\Pi \land t'_i=t_i\land\Sorted(t\mapsto u*\Sigma)) \imp P^{(=)}, 
\]
where $P^{(=)}$ is an abbreviation of $P(\Pi \land t'_i=t_i,t\mapsto u*\Sigma,\{(\Pi_i,t_i \mapsto u * \Sigma_i)\} \cup S)$, 
\[
\models \neg(\Pi \land t'_i>t_i\land\Sorted(t\mapsto u*\Sigma)) \imp P^{(>)}, 
\]
where
$P^{(>)}$ is an abbreviation of $P(\Pi \land t'_i>t_i,t\mapsto u*\Sigma,\{(\Pi_i,t_i \mapsto u * \Arr(t_i+1,t_i') * \Sigma_i)\} \cup S)$, and 
\[
\models \neg(\Pi \land t'_i<t_i\land\Sorted(t\mapsto u*\Sigma)) \imp P^{(<)}, 
\]
where $P^{(<)}$ is an abbreviation of $P(\Pi \land t'_i<t_i,t\mapsto u*\Sigma,S)$. 
In order to show the current case, it is enough to show
$\Pi \land t'_i=t_i\land\Sorted(t\mapsto u*\Sigma) \imp \Pi \land \Sorted(t \mapsto u * \Sigma)$,
$\Pi \land t'_i>t_i\land\Sorted(t\mapsto u*\Sigma) \imp \Pi \land \Sorted(t \mapsto u * \Sigma)$, and
$\Pi \land t'_i<t_i\land\Sorted(t\mapsto u*\Sigma) \imp \Pi \land \Sorted(t \mapsto u * \Sigma)$. 
They trivially hold by comparing conjuncts. 

{\bfseries Case~5} (({\bf Arr}$\mapsto$)-case):
By the induction hypothesis, we have
\[
\models \neg(\Pi \land t'>t \land \Sorted(t\mapsto z*\Arr(t+1,t')*\Sigma)) \imp P^{(>)},
\]
where $P^{(>)}$ is an abbreviation of $P(\Pi \land t' > t,t \mapsto z * \Arr(t+1,t') * \Sigma,S)$, and
\[
\models \neg(\Pi \land t'=t \land \Sorted(t\mapsto z'*\Sigma)) \imp P^{(=)},
\]
where $P^{(=)}$ is an abbreviation of $P(\Pi \land t' = t,t \mapsto z' * \Sigma,S)$.
In order to show the current case, it is enough to show
$\Pi \land t'>t \land \Sorted(t\mapsto z*\Arr(t+1,t')*\Sigma) \imp \Pi \land \Sorted(\Arr(t,t')*\Sigma)$
and
$\Pi \land t'=t \land \Sorted(t\mapsto z'*\Sigma) \imp \Pi \land \Sorted(\Arr(t,t')*\Sigma)$.
They are immediately shown by the definition of $\Sorted$.

{\bfseries Case~6} (({\bf ArrArr})-case):
For $I' \subseteq I$, we abbreviate
$m = m_{I'} \land m < m_{I\setminus I'}$
and
$m'<m \land m' = m_{I'} \land m'<m_{I\setminus I'}$ by 
$\Pi^{(=)}_{I'}$ and $\Pi^{(<)}_{I'}$, respectively.
We write $S^{(=)}_{I'}$ and $S^{(<)}_{I'}$ for the third arguments of the first and the second $P$ for $I'$
in the right-hand side of (ArrArr), respectively. 
We also write $\Arr$ for $\Arr(t+m'+1,t')$. 
By the induction hypothesis, we obtain the forms of 
\begin{align*}
  \models
  \neg(\Pi\land\Pi^{(=)}_{I'} \land
  t \le t' \land
  &
  t' < \Sigma \land
  0 < t \land
  \Sorted(\Sigma))
  \\
  &\imp
  P(\Pi\land\Pi^{(=)}_{I'} \land t \le t' \land t' < \Sigma \land 0 < t, \Sigma, S^{(=)}_{I'})
\end{align*}
for each $I'\subseteq I$, and
\begin{align*}
  \models
  \neg(\Pi\land\Pi^{(<)}_{I'}\land
  0 < t \land
  \Sorted(\Arr * \Sigma))
  \imp
  P(\Pi\land\Pi^{(<)}_{I'}\land 0 < t, \Arr * \Sigma,S^{(<)}_{I'})
\end{align*}
for each $I' \subseteq I$ with $I' \neq \emptyset$. 
In order to show the current case, it is enough to show
$\Pi\land\Pi^{(=)}_{I'} \land t \le t' \land t' < \Sigma \land 0 < t \land \Sorted(\Sigma) \imp \Pi \land \Sorted(\Arr(t,t')*\Sigma)$ for each $I'\subseteq I$, 
and
$\Pi\land\Pi^{(<)}_{I'}\land 0 < t \land \Sorted(\Arr * \Sigma) \imp \Pi \land \Sorted(\Arr(t,t')*\Sigma)$
for each $I' \subseteq I$ with $I' \neq \emptyset$. 
They are immediately shown by the definition of $\Sorted$.

\noindent\textbf{Proof of Lemma~\ref{lemma:correct}~(4)}.\quad
We note that 
$\forall sh(s,h \models \Pi \land \Tilde\Sigma \imp s \models \forall \Vec z\exists \Vec yP(\Pi,\Sigma,S))$
is equivalent to
$\forall s(\exists h(s,h \models \Pi \land \Tilde\Sigma) \imp s \models \forall \Vec z\exists \Vec yP(\Pi,\Sigma,S))$.
Moreover it is equivalent to
$\forall s(s \models \Pi \land \Sorted(\Sigma) \imp s \models \forall \Vec z\exists \Vec yP(\Pi,\Sigma,S))$.
By (3), we have 
$\models \neg(\Pi \land \Sorted(\Sigma)) \imp \forall \Vec z\exists \Vec yP(\Pi,\Sigma,S)$, since 
$\Pi \land \Sorted(\Sigma)$ does not contain $\Vec z$,
which are fresh variables 
introduced during the calculation of $P(\Pi,\Sigma,S)$.
Hence we obtain
$\models
(\Pi \land \Sorted(\Sigma)) \vee \neg(\Pi \land \Sorted(\Sigma))
\imp
\forall \Vec z\exists \Vec yP(\Pi,\Sigma,S)$. 
This is equivalent to
$\models \forall \Vec z\exists \Vec yP(\Pi,\Sigma,S)$.
\qed

\subsection{Decidability Proof}

This subsection proves the correctness of the decision procedure. 
\\
\noindent\textbf{Proof of Proposition~\ref{prop:correct}}
\quad
Let $S$ be $\{(\Pi_i,\Sigma_i)\}_{i\in I}$.
Then the left-hand side is equivalent to
$\Pi \land \Tilde\Sigma \models \exists\Vec y\Lor \Tilde S$.
Moreover, by Lemma~\ref{lemma:correct}~(2), it is equivalent to
\begin{equation}\label{eq_Z}
\forall sh(s,h \models \Pi \land \Tilde\Sigma \imp s,h \models \exists \Vec yP'(\Pi,\Sigma,S)). 
\end{equation}
By Lemma~\ref{lemma:correct}~(4), the right-hand side is equivalent to 
\begin{equation}\label{eq_A}
\forall sh(s,h \models \Pi \land \Tilde\Sigma \imp s \models \forall \Vec z\exists \Vec yP(\Pi,\Sigma,S)). 
\end{equation}
Now we will show the equivalence of (\ref{eq_Z}) and (\ref{eq_A}). 
Here we assume $[t_1],\ldots,[t_n]$ appear in $P'(\Pi,\Sigma,S)$
and $s\models t_1 < \ldots < t_n$,
we let $\Vec z = z_1,\ldots,z_n$.

Recall that our condition requires that
sizes of arrays in the succedent do not depend on existential variables.
We note that, under the condition of Theorem~\ref{thm:decidability-slar}, 
each $t$ of $t \mapsto u$ or $\Arr(t,t')$ in the second argument of $P'$ during the unfolding of $P'$ does not contain any existential variables. 
By this fact, we can see that each term $[t]$ does not contain existential variables,
since it first appears as $t \mapsto [t]$ in the second argument of $P'$ during the unfolding of $P'$. 
So we can obtain $P'(\Pi,\Sigma,S) = P(\Pi,\Sigma,S)[\Vec z:=[\Vec t]]$. 
Hence (\ref{eq_Z}) is obtained from (\ref{eq_A}) by taking $z_i$ to be $[t_i]$ for $1 \le i \le n$.

We show the inverse direction. Assume (\ref{eq_Z}). 
Fix $s$ and $h$ such that $s,h \models \Pi\land\Tilde\Sigma$. 
We will show 
$s \models \forall \Vec z\exists \Vec yP(\Pi,\Sigma,S)$. 
Take $\Vec a$ for $\Vec z$.
Let $s'$ be $s[\Vec z:=\Vec a]$.
We claim that $s(t_j) \in \Dom(h)$ ($j = 1,\ldots,n$), 
since each $t_j$ appears as an address of an array atomic formula in $\Sigma$. 
Define $h'$ by $\Dom(h') = \Dom(h)$ and $h'(m) = a_j$ if $m = s(t_j)$ ($j = 1,\ldots,n$), otherwise $h'(m) = h(m)$. 
Then we have $s,h' \models \Pi \land \Tilde\Sigma$. 
By (\ref{eq_Z}), we obtain $s,h' \models \exists \Vec yP'(\Pi,\Sigma,S)$. 
Hence $s'\models \exists\Vec yP(\Pi,\Sigma,S)$ holds, 
since $s',h'\models [\Vec t] = \Vec z$. 
Therefore we have $s\models \forall\Vec z\exists\Vec yP(\Pi,\Sigma,S)$. 
Thus we obtain (\ref{eq_A}). 
\qed

Hence we obtained the decidability result of $\SLAR$ stated in Theorem~\ref{thm:decidability-slar}. 

\section{Separation Logic with Arrays and Lists}

From this section, 
we will show the second result of this paper: 
the decidability of validity checking for QF entailments of symbolic heap system with array and list predicates. 
The decidability result of the previous section will be used in that of the second result. 

We start from the separation logic $\SLGL$, which is obtained from $\SLG$ 
by assuming the point-to predicate is ternary
and 
adding the singly-linked list predicate $\Ls(\wild,\wild)$ 
and the doubly-linked list predicate $\Dll(\wild,\wild,\wild,\wild)$. 
Then we define the symbolic heap system $\SLLA$ which is a fragment of $\SLGL$. 

\subsection{Syntax of $\SLGL$ and $\SLLA$}

The terms of $\SLGL$ are same as those of $\SLG$. 
The formulas (denoted by $F$) of $\SLGL$ are defined as follows: 
\\

$F ::= t = t\ |\ F \land F\ |\ \neg F\ |\ \exists xF\ |\ \Emp \ |\ F*F
\ |\ t \mapsto (t,t)\ |\ \Arr(t,t)\ |\ \Ls(t,t) \ |\ \Dll(t,t,t,t)$.
\\

The notations for $\SLG$ mentioned in Section~\ref{sect:sla} are also used for $\SLGL$. 

We call the singly-linked and doubly-linked list predicates {\em list predicates}. 
We also call a formula {\em list-free} if it does not contain any list predicates.

The list predicates are inductively defined predicates and they are introduced with the following definitions clauses: 
\begin{align*}
\Ls(x,y) 
&::= 
(x = y \land \Emp) \vee \exists zw(x\mapsto (z,w) * \Ls(z,y)),
\\
\Dll(x_1,y_1,x_2,y_2)
&::=
(x_1 = y_1 \land x_2 = y_2 \land \Emp)
\vee
\exists x(x_1\mapsto (x,y_2) * \Dll(x,y_1,x_2,x_1)).
\end{align*}

We define the $k$-times unfolding of the list predicates. 
\begin{defi}
For $k\ge 0$, $\Ls^k(t,u)$ and $\Dll^k(t,u,v,w)$ are inductively defined by: 

$\Ls^0(t,u) \eqDef 0\neq 0 \land \Emp$, 

$\Ls^{k+1}(t,u) \eqDef (t=u\land\Emp) \vee \exists zw(t \Pto (z,w) * \Ls^k(z,u))$,

$\Dll^0(t,u,v,w) \eqDef 0\neq 0 \land \Emp$, and 

$\Dll^{k+1}(t,u,v,w) \eqDef
(t = u \land v = w \land \Emp)
\vee
\exists z(t \Pto (z,w) * \Dll^k(z,u,v,t))$.
\end{defi}

For each formula $F$ of $\SLGL$ and heap model $(s,h)$, we define $s,h\models F$ extending that of $\SLG$ with 
\begin{align*}
s,h&\models \Ls(t,u)
\iffDef
s,h\models \Ls^k(t,u)
\hbox{ for some $k\ge 0$, and}
\\
s,h&\models \Dll(t,u,v,w)
\iffDef
s,h\models \Dll^k(t,u,v,w)
\hbox{ for some $k\ge 0$.}
\end{align*}

  We note that there is no heap model $(s,h)$ such that $s,h \models \Ls^0(t,u)$. 
  Intuitively a heap model $(s,h)$ that satisfies $s,h\models \Ls^{k+1}(t,u)$ contains a singly-linked list
  of length $k$ starting from $s(t)$ and ends at $s(u)$.
  This situation is depicted as follows: 
%%%%%%%%%%%%%%%%%%%%%%%%%%
\[
\begin{tikzpicture}[every node/.style = {shape=rectangle,text width=6pt,text height=5pt,text centered}]
\mkarrow{0}{$s(t)$}
\mkcell{1}{}{}
\mkarrow{1}{}
\mkdots{2}{}
\mksarrow{2}{}
\mkcell{3}{}{}
\mkarrow{3}{$s(u)$}
\end{tikzpicture}
\]
%%%%%%%%%%%%%%%%%%%%%%%%%%

  We also note that there is no heap model $(s,h)$ such that $s,h \models \Dll^0(t,u,v,w)$. 
  A heap model $(s,h)$ that satisfies $s,h\models \Dll^{k+1}(t,u,v,w)$ contains
  a doubly-linked list of length $k$ 
  with a forward-link starting from $s(t)$ and ending at $s(u)$, and
  a backward-link starting from $s(v)$ and ending at $s(w)$.
  This situation is depicted as follows:
%%%%%%%%%%%%%%%%%%%%%%%%%%
\[
\begin{tikzpicture}[every node/.style = {shape=rectangle,text width=6pt,text height=5pt,text centered}]
\mksarrow{0}{$s(t)$}
\mkarrowr{0}{$s(w)$}
\mkcell{1}{}{}
\mkarrow{1}{}
\mksarrowr{1}{}
\mkdots{2}{}
\mksarrow{2}{}
\mkarrowr{2}{}
\mkcell{3}{}{}
\mkarrow{3}{$s(u)$}
\mksarrowr{3}{$s(v)$}
\end{tikzpicture}
\]
%%%%%%%%%%%%%%%%%%%%%%%%%%

The notation $F \models \Vec{F}$ is also defined in a similar way to $\SLAR$. 

Formulas of $\SLLA$ are QF symbolic heaps (denoted by $\varphi$) of the form $\Pi\land\Sigma$, 
where its pure part $\Pi$ is the same as that of $\SLAR$ and its spatial part $\Sigma$ is defined by
\[
\Sigma ::= \Emp \mid t \mapsto (t,t) \mid \Arr(t,t) \mid \Ls(t,t) \mid \Dll(t,t,t,t) \mid \Sigma * \Sigma. 
\]

We use notations $\Pi_\varphi$ and $\Sigma_\varphi$ that mean the pure part
and the spatial part of $\varphi$, respectively.

Entailments of $\SLLA$ are QF entailments of the form
$\varphi \vdash \{\varphi_i \ |\ i \in I\}$. 
The validity of an entailment is defined in a similar way to $\SLAR$.

\section{Unroll Collapse}\label{sect:unroll}

This section shows the unroll collapse properties
for the singly-linked and doubly-linked list predicates. 
In the decision procedure for $\SLLA$, these properties will be used 
for eliminating the list predicates in the antecedent of a given entailment. 

The unroll collapse property for the singly-linked list predicate is given in the next proposition.
The key idea of its proof is to replace the list $\Ls(t,u)$ by some list
$\Ls(t,u)$ of length 2 in the antecedent of (1), in order to show (1).
Then (2) applies to obtain the succedent. Then we can show that the
list of length 2 is contained in some list $\Ls(p,q)$ of the
succedent.  By replacing the list of length 2 by $\Ls(t,u)$ of arbitrary
length, we obtain the succedent of (1), since the difference by the
replacement is absorbed by $\Ls(p,q)$.

%% {\color{red}
%% First we explain the key idea of the proof. 
%% Assume (2) and a heap model $(s,h)$ satisfiles
%% the antecedent $\Ls(t,u)*\phi$,
%% in order to show $(s,h)$ satisfies the succedent $\Vec{\psi}$,
%% for proving (1).
%% We replace $\Ls(t,u)$ by ${\rm ls2}(t,u)$ in the antecedent, 
%% where ${\rm ls2}(t,u)$ is a list $\Ls(t,u)$ of length $2$
%% such that the second cell has a fresh address $a$. 
%% Consider a heap $h'$ that satisfies the new antecedent.
%% By the assumption (2), $h'$ satisfies the succedent.
%% Consider the atomic formula in the succedent such that
%% its heap contains the cell of address $a$.
%% Then we can show the atomic formula is $\Ls(p,q)$ and the cell of $a$
%% is an inner cell of its heap.
%% Hence by replacing $h'$ by $h$,
%% since the difference between $h$ and $h'$ is absorbed by $\Ls(p,q)$,
%% we can show the succedent is true for $h$.
%% }

\begin{prop}[Unroll Collapse for $\Ls$]\label{prop:unroll-ls}
The following clauses are equivalent:

(1)\quad 
$\Ls(t,u) * \phi \models \Vec{\psi}$,

(2)\quad 
    $t = u \land \phi \models \Vec{\psi}$
    and
    $t \Pto (z,y_1) * z \Pto (u,y_2) * \phi \models \Vec{\psi}$,
    where $z,y_1,y_2$ are fresh.
\end{prop}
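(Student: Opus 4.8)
The plan is to prove the two implications separately, with the direction $(1)\Rightarrow(2)$ being the easy one and $(2)\Rightarrow(1)$ the substantive one. For $(1)\Rightarrow(2)$, I would observe that each of the two symbolic heaps in clause~(2) entails $\Ls(t,u)*\phi$: the heap $t=u\land\phi$ matches the base case $\Ls^0$-or-$\Ls^1$ unfolding (more precisely $t=u\land\Emp \models \Ls(t,u)$, so $t=u\land\phi\models \Ls(t,u)*\phi$), and $t\Pto(z,y_1)*z\Pto(u,y_2)*\phi$ is a two-step unfolding, i.e.\ it entails $\Ls^2(t,u)*\phi$ and hence $\Ls(t,u)*\phi$. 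Composing with the hypothesis $\Ls(t,u)*\phi\models\Vec\psi$ and noting that $z,y_1,y_2$ are fresh (so they do not occur in $\Vec\psi$ or $\phi$, and quantifying them away is harmless) gives both conjuncts of~(2).

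For $(2)\Rightarrow(1)$, I would take a heap model $(s,h)$ with $s,h\models \Ls(t,u)*\phi$ and show $s,h\models\Vec\psi$. By the semantics of $\Ls$, there is a splitting $h=h_0+h_1$ with $s,h_0\models \Ls^k(t,u)$ for some $k\ge 0$ and $s,h_1\models\phi$. I would proceed by cases on $k$. If $k$ is such that the base alternative is taken (the list segment is empty), then $s(t)=s(u)$ and $h_0$ is empty, so $s,h\models t=u\land\phi$, and the first conjunct of~(2) gives $s,h\models\Vec\psi$. If the list segment has length exactly one, then $h_0=\{(s(t),(a,b))\}$ with $s,h_1\models u=\text{(the first component)}\land\phi$; choosing fresh values this fits the pattern $t\Pto(z,y_1)*z\Pto(u,y_2)*\phi$ degenerately — actually here the cleaner move is to treat length $\ge 1$ uniformly: if the list has length $n\ge 1$, peel off the first cell $s(t)\Pto(a,b)$ and the remaining segment is $\Ls(a,u)$ of length $n-1$. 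The key combinatorial insight (this is the ``collapse'' in unroll collapse) is that a nonempty $\Ls$ segment of \emph{any} length $n\ge 1$ can be collapsed, without changing satisfaction of heap-independent data or of $\Vec\psi$, into one of length exactly two by merging the interior into a single cell — because $\Vec\psi$, being list-free and built from $\Pto$/$\Arr$, cannot distinguish the internal structure of a list segment beyond what is visible, and any two distinct interior configurations give the same truth value. So one shows: $s,h\models\Ls(t,u)*\phi$ with the segment nonempty implies there is $(s',h')$ agreeing with $(s,h)$ on all relevant variables with $s',h'\models t\Pto(z,y_1)*z\Pto(u,y_2)*\phi$ and $s',h'\models\Vec\psi \Leftrightarrow s,h\models\Vec\psi$; then the second conjunct of~(2) finishes it.

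The main obstacle is making precise and justifying the ``collapse'': why a list segment of length $n$ and one of length $2$ (or the appropriate normal form) are indistinguishable by every list-free $\Vec\psi$ on the right-hand side. This requires a careful argument that satisfaction of a list-free symbolic heap is, in the relevant sense, monotone/invariant under replacing one unbounded-but-nonempty list segment by another — essentially that the succedent formulas only ``see'' the footprint and the endpoint data of the segment. I expect the paper's proof to handle this either by an explicit surgery on heaps (constructing $h'$ from $h$ by identifying interior addresses) together with a lemma that list-free formulas are preserved under such surgery, or by appealing to a previously established property of the $\Ls$ predicate's interaction with list-free contexts; the length-one boundary case and the freshness bookkeeping for $z,y_1,y_2$ are the routine-but-necessary details around it. I would also double-check the edge case $k=0$ versus the stated unfoldings to be sure the base case of~(2) is $t=u\land\phi$ with no spatial content, which matches $\Ls^1$'s first disjunct.
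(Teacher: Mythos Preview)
Your proposal has a genuine gap. You assume that $\Vec\psi$ is list-free (``built from $\Pto$/$\Arr$''), but the proposition is stated for arbitrary QF symbolic heaps of $\SLLA$, so each $\psi_j$ may contain $\Ls$ and $\Dll$. This is not incidental: in the intended application one eliminates list predicates from the \emph{antecedent} while the succedent still carries them. Your indistinguishability claim (``list-free formulas cannot see the internal structure of a segment'') therefore does not apply, and would in any case be false even for list-free $\Vec\psi$: a formula such as $\Arr(p,q)$ certainly distinguishes heaps of different domain sizes, so collapsing a length-$n$ segment to length $2$ changes $\Dom(h)$ and can flip the truth of $\Vec\psi$. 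The biconditional $s',h'\models\Vec\psi \Leftrightarrow s,h\models\Vec\psi$ you posit simply does not hold in general.

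The paper's argument runs the surgery in two directions and hinges on a \emph{fresh intermediate address}. Given $h=h_1+h_2$ with $h_1$ a list segment not of length $0$ or $2$, one does not collapse $h_1$; instead one manufactures a two-cell heap $h'_1$ from $s(t)$ through a fresh value $a$ to $s(u)$, with $a$ chosen outside $\Dom(h)$, outside $\{s(p)\mid p\in\Term(\Vec\psi)\}$, and outside every interval $[s(p),s(q)]$ for $\Arr(p,q)$ in $\Vec\psi$ (and a second fresh value $b$ for the second component). Then $s[z:=a,\ldots],\,h'_1+h_2$ satisfies the second conjunct of~(2), hence some $\psi_j$. Now freshness is used: in the spatial decomposition of $h'_1+h_2$ witnessing $\psi_j$, the atom $\sigma$ whose sub-heap contains $a$ cannot be a $\Pto$, $\Arr$, or $\Dll$ atom (the choice of $a$, and of $b$ for the $\Dll$ case, rules each of these out), so $\sigma$ must be some $\Ls(p,q)$. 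One checks that this sub-heap then also contains $s(t)$, and \emph{swaps the original $h_1$ back in} for the two artificial cells, obtaining a decomposition of the original $h$ that still witnesses $\Ls(p,q)$ and hence $\psi_j$. The shape of the proof is thus ``replace by a fresh $2$-cell list, apply~(2), locate the $\Ls$ atom that absorbed the fresh address, then swap the original list back''; the fresh address is precisely the device that tells you where in $\psi_j$ the swap is legitimate. Your sketch has neither the freshness device nor the swap-back step, and without them the argument does not go through.
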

\begin{proof}[Proof of Proposition~\ref{prop:unroll-ls}]
From (1) to (2) is trivial.
We consider the inverse direction. 
Assume $s,h\models \Ls(t,u) * \phi$.
We will show $s,h\models \Vec{\psi}$.

By the assumption, there are $h_1$ and $h_2$ such that 
$s,h_1\models \Ls(t,u)$, $s,h_2\models \phi$, and $h = h_1+h_2$. 
Hence $s,h_1\models \Ls^n(t,u)$ for some $n$ (take the smallest one). 

We will show $s,h\models \Vec{\psi}$. 
In the case of $n = 0$, it is not the case since $\Ls^0(t,u)$ is unsatisfiable. 
In the case of $n = 1$ or $n = 3$, we have $s,h\models \Vec{\psi}$ by (2). 

We consider the other cases. 
Let $d$ be the second component of $h_1(s(t))$. 
Then the current situation of $h_1$ and $h_2$ is depicted as follows: 
%%%%%%%%%%%%%%%%%%%%%
\begin{center}
\begin{tabular}{lccc}
Formula&$\Ls(t,u)$&$*$&$\phi$
\\
\hline
Heap&
\raisebox{-18pt}{
$\underbrace{
\begin{tikzpicture}[every node/.style = {shape=rectangle,text width=6pt,text height=5pt,text centered}]
\mkarrow{0}{$s(t)$}
\mkcell{1}{}{$d$}
\mkarrow{1}{}
\mkdots{2}{}
\mksarrow{2}{}
\mkcell{3}{}{}
\mkarrow{3}{$s(u)$}
\end{tikzpicture}
}_{h_1}$
}
&
$+$
&
$h_2$
\end{tabular}
\end{center}
%%%%%%%%%%%%%%%%%%%%%

Fix values $a,b \in N$ such that 
$a \not\in
\Dom(h)\cup
\{s(p) \mid p\mapsto (\wild,\wild) \in \Vec{\psi}\} \cup
\{s(p),s(q) \mid \Ls(p,q) \in \Vec{\psi}\} \cup
\{s(p),s(q),s(p'),s(q') \mid \Dll(p,q,p',q') \hbox{ is in } \Vec{\psi}\} \cup
\bigcup \{[s(p),s(q)] \mid \Arr(p,q) \hbox{ is in } \Vec{\psi}\}$, and
$b \not\in \Dom(h)\cup\{a\} $. 

Define $h'_1$ by
$\Dom(h'_1) = \{s(t),a\}$, 
$h'_1(m) = (a,d)$ if $m = s(t)$, and
$h'_1(m) = (s(u),b)$ if $m \neq s(t)$. 
Let $s' = s[z:=a,y_1:=d,y_2:=b]$. Then we have
\[
s',h'_1+h_2 \models t\Pto (z,y_1) * z\Pto (u,y_2) * \phi, 
\]
which is depicted as follows: 
%%%%%%%%%%%%%%%%%%%%%
\begin{center}
\begin{tabular}{lccc}
Formula&$t \mapsto (z,y_1)*z\mapsto(u,y_2)$&$*$&$\phi$
\\
\hline
Heap&
\raisebox{-18pt}{
$\underbrace{
\begin{tikzpicture}[every node/.style = {shape=rectangle,text width=6pt,text height=5pt,text centered}]
\mkarrow{0}{$s(t)$}
\mkcell{1}{}{$d$}
\mkarrow{1}{$a$}
\mkcell{2}{}{$b$}
\mkarrow{2}{$s(u)$}
\end{tikzpicture}
}_{h'_1}$
}
&
$+$
&
$h_2$
\end{tabular}
\end{center}
%%%%%%%%%%%%%%%%%%%%%

Hence, by the assumption, we have 
\[
s',h'_1+h_2 \models \psi_j
\]
for some $\psi_j \in \Vec{\psi}$. 
Recall that $z,y_1,y_2$ do not appear in $\psi_j$ since they are fresh variables. 
So we have 
\[
s,h'_1+h_2 \models \psi_j. 
\]

Let $\psi_j \equiv \Pi\land\Sigma$. 
Then $s\models\Pi$, and there are $h_3$, $h_4$ and
an atomic spatial formula $\sigma$ such that
$s,h_3 \models \sigma$, $s,h_4 \models \Sigma-\sigma$, $a\in\Dom(h_3)$, and $h_3+h_4 = h'_1+h_2$, 
where $\Sigma-\sigma$ is the spatial formula obtained by removing $\sigma$ from $\Sigma$. 

We can show that $\sigma$ must have the form $\Ls(p,q)$ as follows:
it cannot be $p\Pto (\wild,\wild)$ by $a \neq s(p)$; 
it cannot be $\Arr(p,q)$ by $a \not\in [s(p),s(q)]$;
it cannot be $\Dll(p,q,p',q')$, otherwise, 
by $a \neq s(p),s(p'),b$ and $h_3(a) = (s(u),b)$,
we have $b \in \Dom(h_3)\setminus \{a\} \subseteq \Dom(h_3+h_4)\setminus \{a\} = \Dom(h'_1+h_2)\setminus \{a\} \subseteq \Dom(h)$, which contradicts the condition of $b$. 

Note that $a\neq s(p), s(q)$ by the condition of $a$. 
Hence we have
$s(t), a \in \Dom(h_3)$,
$h_3(s(t)) = h'_1(s(t)) = (a,d)$
and $h_3(a) = h'_1(a) = (s(u),b)$. 
Therefore the current situation of $h_3$ and $h_4$ is depicted as follows: 
%%%%%%%%%%%%%%%%%%%%%
\begin{center}
\begin{tabular}{lccc}
Formula&$\Ls(p,q)$&$*$&$\Sigma - \sigma$
\\
\hline
Heap&
\raisebox{-18pt}{
$\underbrace{
\begin{tikzpicture}[every node/.style = {shape=rectangle,text width=6pt,text height=5pt,text centered}]
\mkarrow{0}{$s(p)$}
\mkcell{1}{}{}
\mkarrow{1}{}
\mkdots{2}
\mksarrow{2}{$s(t)$}
\mkcell{3}{}{$d$}
\mkarrow{3}{$a$}
\mkcell{4}{}{$b$}
\mkarrow{4}{$s(u)$}
\mkdots{5}
\mksarrow{5}{$s(q)$}
\end{tikzpicture}
}_{h_3}$
}
&
$+$
&
$h_4$
\end{tabular}
\end{center}
%%%%%%%%%%%%%%%%%%%%%

Then define $h'_3$ by $\Dom(h'_3) = \Dom(h_1) + (\Dom(h_3)\setminus\{a,s(t)\})$, 
where the symbol $+$ is the disjoint union symbol, 
$h'_3(m) = h_1(m)$ if $m \in \Dom(h_1)$, and
$h'_3(m) = h_3(m)$ if $m \not\in \Dom(h_1)$. 
Then we have the following situation: 
%%%%%%%%%%%%%%%%%%%%%
\begin{center}
\begin{tabular}{lccc}
Formula&$\Ls(p,q)$&$*$&$\Sigma - \sigma$
\\
\hline
Heap&
\raisebox{-18pt}{
$\underbrace{
\begin{tikzpicture}[every node/.style = {shape=rectangle,text width=6pt,text height=5pt,text centered}]
\mkarrow{0}{$s(p)$}
\mkcell{1}{}{}
\mkarrow{1}{}
\mkdots{2}
\mksarrow{2}{$s(t)$}
\mkcell{3}{}{$d$}
\mkarrow{3}{}
\mkdots{4}
\mksarrow{4}{}
\mkcell{5}{}{}
\mkarrow{5}{$s(u)$}
\mkdots{6}
\mksarrow{6}{$s(q)$}
\end{tikzpicture}
}_{h'_3}$
}
&
$+$
&
$h_4$
\end{tabular}
\end{center}
%%%%%%%%%%%%%%%%%%%%%
Note that the above picture covers both cases of $n = 2$ and $n\ge 4$. 
It satisfies $s,h'_3\models\Ls(p,q)$. 
We have $h'_3+h_4 = h_1+h_2 = h$ 
by removing $h'_1$ from both sides of $h_3+h_4 = h'_1+h_2$ and adding $h_1$ to them. 
Hence we have $s,h'_3+h_4\models\Sigma$. 
Therefore we have (1) since $s,h\models\Vec{\psi}$ can be obtained.
\end{proof}

The unroll collapse property for the doubly-linked list predicate is given in the next proposition.
The key idea of its proof is similar to that of Proposition~\ref{prop:unroll-ls}: 
First the list $\Dll(t,u,v,w)$ is replaced by some doubly-linked list
$\Dll(t,u,v,w)$ of length 3 in the antecedent of (1), in order to show (1).
Then (2) applies to obtain the succedent. Then we can show that the
doubly-linked list of length 3 is contained in some list either $\Ls(p,q)$ or $\Dll(p,q,p',q')$
of the succedent.  By replacing the doubly-linked list of length 3 by $\Dll(t,u,v,w)$ of arbitrary
length, we obtain the succedent of (1), since the difference by the
replacement is absorbed by $\Ls(p,q)$ or $\Dll(p,q,p',q')$.

%% {\color{red}
%% The key idea for proving this proposition is similar to that of Proposition~\ref{prop:unroll-ls}:   
%% Assume (2) and a heap model $(s,h)$ satisfiles
%% the antecedent $\Dll(t,u,v,w)*\phi$,
%% in order to show $(s,h)$ satisfies the succedent $\Vec{\psi}$. 
%% We replace $\Dll(t,u,v,w)$ by ${\rm dll3}(t,u,v,w)$ in the antecedent, 
%% where ${\rm dll3}(t,u,v,w)$ is a list $\Dll(t,u,v,w)$ of length $3$
%% such that the addresses of the cells are $s(t)$, a fresh address $a$, and $s(v)$. 
%% Consider a heap $h'$ that satisfies the new antecedent.
%% By the assumption (2), $h'$ satisfies the succedent.
%% Consider the atomic formula in the succedent such that
%% its heap contains the cell of address $a$.
%% Then we can show the atomic formula is either $\Ls(p,q)$ or $\Dll(p,q,p',q')$, and the cell of $a$ is an inner cell of its heap.
%% Hence by replacing $h'$ by $h$,
%% since the difference between $h$ and $h'$ is absorbed by
%% $\Ls(p,q)$ or $\Dll(p,q,p',q')$,
%% we can show the succedent is true for $h$.
%% }

\begin{prop}[Unroll Collapse for $\Dll$]\label{prop:unroll-dll}
The following clauses are equivalent:

(1)
$\Dll(t,u,v,w) * \phi \models \Vec{\psi}$.

(2)
$t = u \land v=w \land \phi \models \Vec{\psi}$, \quad
$t = v \land t \Pto (u,w) * \phi \models \Vec{\psi}$, and
\\
\phantom{\quad\qquad}
$t \Pto (z,w) * z \Pto (v,t) * v \Pto (u,z) * \phi \models \Vec{\psi}$,
where $z$ is fresh.
\end{prop}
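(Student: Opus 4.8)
The plan is to closely follow the structure of the proof of Proposition~\ref{prop:unroll-ls}, replacing the singly-linked collapse by a doubly-linked one. Direction $(1)\Rightarrow(2)$ is immediate, since each of the three left-hand spatial forms in (2) entails $\Dll(t,u,v,w)*\phi$: the first is the base case $\Dll^1$; $t=v\land t\Pto(u,w)$ is an instance of $\Dll^2$; and $t\Pto(z,w)*z\Pto(v,t)*v\Pto(u,z)$ --- which forces $s(t),s(z),s(v)$ pairwise distinct in any model --- is an instance of $\Dll^4$. For $(2)\Rightarrow(1)$, assume (2), take $s,h\models\Dll(t,u,v,w)*\phi$, and split $h=h_1+h_2$ with $s,h_1\models\Dll^n(t,u,v,w)$ for the least such $n$ and $s,h_2\models\phi$. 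The cases $n=0$ (vacuous), $n=1$ (empty segment: $s,h\models t=u\land v=w\land\phi$, apply the first clause of (2)), $n=2$ (a one-cell segment, necessarily $s(t)=s(v)$ with $h_1=\{s(t)\Pto(s(u),s(w))\}$, apply the second clause), and $n=4$ (a three-cell segment, which is literally an instance of the third clause once the middle cell's next field is taken as the witness for $z$) are direct; the real work is the case $n=3$ or $n\ge5$, i.e.\ $h_1$ has at least two cells but not exactly three.

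In that case I would choose $a$ fresh as in the proof of Proposition~\ref{prop:unroll-ls} (outside $\Dom(h)$ and outside every address, $\Dll$-parameter and array range named in $\Vec{\psi}$), observe that $s(t)\neq s(v)$ because $h_1$ has at least two cells, and replace $h_1$ by the three-cell heap $h_1'=\{\,s(t)\Pto(a,s(w)),\ a\Pto(s(v),s(t)),\ s(v)\Pto(s(u),a)\,\}$, which models $\Dll(t,u,v,w)$ under $s[z:=a]$ with the fresh cell playing the role of $z$. Then $s,h_1'+h_2$ models the left-hand side of the third clause of (2) (with $z$ assigned $a$), so some $\psi_j\equiv\Pi\land\Sigma$ holds there, and hence $s,h_1'+h_2\models\psi_j$ since $z$ is fresh. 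Let $\sigma$ be the unique atom of $\Sigma$ whose sub-heap covers $a$. Exactly as in the $\Ls$ case, $\sigma$ can be neither a points-to atom (as $a$ avoids all named addresses) nor an array atom (as $a$ avoids all named array ranges); \emph{unlike} that case, however, $\sigma$ may be either $\Ls(p,q)$ or $\Dll(p,q,p',q')$, because the second field of the fresh cell is forced to be $s(t)$ by the shape of the third clause, so the device used in Proposition~\ref{prop:unroll-ls} to exclude $\Dll$ is unavailable here.

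In both sub-cases the chain $s(t)\to a\to s(v)$ sits contiguously inside $\sigma$'s structure and presents the same ``interface'' as the original $h_1$ --- entered at $s(t)$ with predecessor $s(w)$, left at the cell $s(v)$ whose next field is $s(u)$ --- so I would splice $h_1$ back in place of those three cells (its internal cells being automatically disjoint from everything else, since they lie in $\Dom(h_1)$), obtaining a heap equal to $h$. It then remains to check that the spliced heap still satisfies $\sigma$ --- for $\Ls$ this only uses that the forward chain from $s(t)$ reaches $s(u)$, for $\Dll$ also that the backward chain has been restored --- and still satisfies $\Sigma-\sigma$ --- the only cells whose contents change are $s(t)$ and $s(v)$, and the freshness of $a$ (in particular $a\notin\Dom(h)$) prevents either of them from being covered in $\Sigma-\sigma$ by a list atom in a way sensitive to those contents. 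This yields $s,h\models\psi_j$, and therefore (1).

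I expect the splicing to be the main obstacle. One has to pin down which of the boundary cells $s(t),s(v)$ lie in $\sigma$'s sub-heap and which lie in the complementary part, reconcile \emph{both} the next- and the prev-pointers at the two seams when $\sigma$ is a $\Dll$, and dispatch the corner cases in which $s(t)$ or $s(v)$ is itself an endpoint of $\sigma$'s list (so that, e.g., its prev is an external parameter rather than an internal node). The doubly-linked collapse established here then feeds the $\Dll$-analogue of the antecedent-elimination step in the decision procedure for $\SLLA$, exactly as Proposition~\ref{prop:unroll-ls} feeds the $\Ls$ case.
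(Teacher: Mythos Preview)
Your plan is correct and follows the paper's proof essentially step for step: the same direct handling of $n\in\{1,2,4\}$ from (2), the same fresh $a$ and three-cell surrogate $h_1'$ for $n=3$ or $n\ge5$, the same dichotomy that the atom $\sigma$ containing $a$ is $\Ls$ or $\Dll$, and the same splice of $h_1$ back in place of $\{s(t),a,s(v)\}$. The boundary corner case you anticipate is exactly the paper's Case~2.2, where $\sigma=\Ls(p,q)$ and $s(v)=s(q)$ lies in the complementary sub-heap $h_4$; there one must additionally overwrite $h_4(s(v))$ --- whose second component is the now-stale value $a$ --- by $h_1(s(v))$, and your freshness argument (no atom of $\Sigma-\sigma$ that is sensitive to that component can cover $s(v)$, since that would force $a$ to equal a term named in $\Vec\psi$ or to lie in $\Dom(h)$) is precisely what makes this repair sound.
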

\begin{proof}[Proof of Proposition~\ref{prop:unroll-dll}]
From (1) to (2) is trivial.
We consider the inverse direction. 
Assume $s,h\models \Dll(t,u,v,w) * \phi$.
We will show $s,h\models \Vec{\psi}$.

By the assumption, there are $h_1$ and $h_2$ such that 
$s,h_1\models \Dll(t,u,v,w)$, 
$s,h_2\models \phi$, and 
$h = h_1+h_2$. 
Hence $s,h_1\models \Dll^n(t,u,v,w)$ for some $n$ (take the smallest one). 

We will show $s,h\models \Vec{\psi}$. 
The case of $n = 0$ does not happen since $\Dll^0(t,u,v,w)$ is unsatisfiable. 
In the case of $n = 1$, $n = 2$, or $n = 4$, we have $s,h\models \Vec{\psi}$ by (2). 

We consider the other cases. 
The current situation of $h_1$ and $h_2$ is depicted as follows: 
%%%%%%%%%%%%%%%%%%%%%
\begin{center}
\begin{tabular}{lccc}
Formula&$\Dll(t,u,v,w)$&$*$&$\phi$
\\
\hline
Heap&
\raisebox{-18pt}{
$\underbrace{
\begin{tikzpicture}[every node/.style = {shape=rectangle,text width=6pt,text height=5pt,text centered}]
\mksarrow{0}{$s(t)$}
\mkarrowr{0}{$s(w)$}
\mkcell{1}{}{}
\mkarrow{1}{}
\mksarrowr{1}{}
\mkdots{2}{}
\mksarrow{2}{}
\mkarrowr{2}{}
\mkcell{3}{}{}
\mkarrow{3}{$s(u)$}
\mksarrowr{3}{$s(v)$}
\end{tikzpicture}
}_{h_1}$
}
&
$+$
&
$h_2$
\end{tabular}
\end{center}
%%%%%%%%%%%%%%%%%%%%%

Fix a value $a \in N$ such that 
$a \not\in
\Dom(h)\cup
\{s(p),s(q),s(r) \mid p\mapsto (q,r) \hbox{ is in } \Vec{\psi}\} \cup
\{s(p),s(q) \mid \Ls(p,q) \hbox{ is in } \Vec{\psi}\} \cup
\{s(p),s(q),s(p'),s(q') \mid \Dll(p,q,p',q') \hbox{ is in } \Vec{\psi}\} \cup
\bigcup \{[s(p),s(q)] \mid \Arr(p,q) \hbox{ is in } \Vec{\psi}\}$. 

Define $h'_1$ by $\Dom(h'_1) = \{s(t),a,s(v)\}$, 
$h'_1(m) = (a,s(w))$ if $m = s(t)$, 
$h'_1(m) = (s(v),s(t))$ if $m = a$, and 
$h'_1(m) = (s(u),a)$ if $m = s(v)$. 

Then we have
\[
s[z:=a],h'_1+h_2 \models t\Pto (z,w) * z\Pto (v,t) * v\Pto(u,z) * \phi. 
\]
The current situation is as follows: 
%%%%%%%%%%%%%%%%%%%%%
\begin{center}
\begin{tabular}{lccc}
Formula&$t\Pto (z,w) * z\Pto (v,t) * v\Pto(u,z)$&$*$&$\phi$
\\
\hline
Heap&
\raisebox{-18pt}{
$\underbrace{
\begin{tikzpicture}[every node/.style = {shape=rectangle,text width=6pt,text height=5pt,text centered}]
\mksarrow{0}{$s(t)$}
\mkarrowr{0}{$s(w)$}
\mkcell{1}{}{}
\mkarrow{1}{$a$}
\mkarrowr{1}{$s(t)$}
\mkcell{2}{}{}
\mkarrow{2}{$s(v)$}
\mkarrowr{2}{$a$}
\mkcell{3}{}{}
\mkarrow{3}{$s(u)$}
\mksarrowr{3}{$s(v)$}
\end{tikzpicture}
}_{h'_1}$
}
&
$+$
&
$h_2$
\end{tabular}
\end{center}
%%%%%%%%%%%%%%%%%%%%%

Then, by the assumption, we have 
\[
s,h'_1+h_2 \models \psi_j
\]
for some $\psi_j \in \Vec{\psi}$, 
since $z$ does not appear in $\psi_j$. 

Let $\psi_j \equiv \Pi\land\Sigma$. 
Then $s\models\Pi$, and there are $h_3$, $h_4$ and $\sigma$ such that
$s,h_3 \models \sigma$,
$s,h_4 \models \Sigma-\sigma$,
$a\in\Dom(h_3)$, and
$h_3+h_4 = h'_1+h_2$.

We can show $\sigma$ have the form $\Ls(p,q)$ or $\Dll(p,q,p',q')$ as follows:
it cannot be $p\Pto (\wild,\wild)$ by $a \neq s(p)$; 
it cannot be $\Arr(p,q)$ by $a \not\in [s(p),s(q)]$. 

{\bfseries Case 1:} the case of $\sigma \equiv \Dll(p,q,p',q')$.
Note that $s(v),s(t),a \in \Dom(h_3)$,
since $a$ cannot be the first or last cell of the dll by $a \neq s(p),s(p')$.
Hence
$h_3(s(t)) = h'_1(s(t)) = (a,s(w))$
and
$h_3(a) = h'_1(a) = (s(v),s(t))$. 
This case is depicted as follows: 
%%%%%%%%%%%%%%%%%%%%%
\begin{center}
\begin{tabular}{lccc}
Formula&$\Dll(p,q,p',q')$&$*$&$\Sigma - \sigma$
\\
\hline
Heap&
\raisebox{-18pt}{
$\underbrace{
\begin{tikzpicture}[every node/.style = {shape=rectangle,text width=6pt,text height=5pt,text centered}]
\mksarrow{0}{$s(p)$}
\mkarrowr{0}{$s(q')$}
\mkcell{1}{}{}
\mkarrow{1}{}
\mksarrowr{1}{$s(p)$}
\mkdots{2}
\mksarrow{2}{$s(t)$}
\mkarrowr{2}{$s(w)$}
\mkcell{3}{}{}
\mkarrow{3}{$a$}
\mkarrowr{3}{$s(t)$}
\mkcell{4}{}{}
\mkarrow{4}{$s(v)$}
\mkarrowr{4}{$a$}
\mkcell{5}{}{}
\mkarrow{5}{$s(u)$}
\mksarrowr{5}{$s(v)$}
\mkdots{6}{}{}
\mksarrow{6}{}
\mkarrowr{6}{}
\mkcell{7}{}{}
\mkarrow{7}{$s(q)$}
\mksarrowr{7}{$s(p')$}
\end{tikzpicture}
}_{h_3}$
}
&
$+$
&
$h_4$
\end{tabular}
\end{center}
%%%%%%%%%%%%%%%%%%%%%

Define $h'_3$ by $\Dom(h'_3) = \Dom(h_1) + (\Dom(h_3)\setminus\{a,s(t),s(v)\})$, 
$h'_3(m) = h_1(m)$ if $m \in \Dom(h_1)$, and 
$h'_3(m) = h_3(m)$ if $m \not\in \Dom(h_1)$. 
Then we have the situation depicted as follows: 
%%%%%%%%%%%%%%%%%%%%%
\begin{center}
\begin{tabular}{lccc}
Formula&$\Dll(p,q,p',q')$&$*$&$\Sigma - \sigma$
\\
\hline
Heap&
\raisebox{-18pt}{
$\underbrace{
\begin{tikzpicture}[every node/.style = {shape=rectangle,text width=6pt,text height=5pt,text centered}]
\mksarrow{0}{$s(p)$}
\mkarrowr{0}{$s(q')$}
\mkcell{1}{}{}
\mkarrow{1}{}
\mksarrowr{1}{$s(p)$}
\mkdots{2}
\mksarrow{2}{$s(t)$}
\mkarrowr{2}{$s(w)$}
\mkcell{3}{}{}
\mkarrow{3}{}
\mksarrowr{3}{}
\mkdots{4}
\mksarrow{4}{}
\mkarrowr{4}{}
\mkcell{5}{}{}
\mkarrow{5}{$s(u)$}
\mksarrowr{5}{$s(v)$}
\mkdots{6}{}{}
\mksarrow{6}{}
\mkarrowr{6}{}
\mkcell{7}{}{}
\mkarrow{7}{$s(q)$}
\mksarrowr{7}{$s(p')$}
\end{tikzpicture}
}_{h'_3}$
}
&
$+$
&
$h_4$
\end{tabular}
\end{center}
%%%%%%%%%%%%%%%%%%%%%
Note that the above picture covers both cases of $n = 3$ and $n\ge 5$. 
It satisfies $s,h'_3\models\Dll(p,q,p',q')$.
We have $h'_3+h_4 = h_1+h_2 = h$ 
by removing $h'_1$ from both sides of $h_3+h_4 = h'_1+h_2$ and adding $h_1$ to them. 
Hence we have $s,h'_3+h_4\models\Sigma$. 
Therefore we have (1) since $s,h\models\Vec{\psi}$ can be obtained.

{\bfseries Case 2:} the case of $\sigma \equiv \Ls(p,q)$.
Note that $a,s(t) \in \Dom(h_3)$, 
since $a$ cannot be the first cell of the list by $a \neq s(p)$.
Hence
$h_3(s(t)) = h'_1(s(t)) = (a,s(w))$
and
$h_3(a) = h'_1(a) = (s(v),s(t))$. 
We also note that $s(v) \in \Dom(h_3)$ or $s(v) \in \Dom(h_4)$. 
The latter case implies $s(v) = s(q)$. 
We consider two subcases about where $s(v)$ is. 

{\bfseries Case 2.1:} the case of $s(v) \in \Dom(h_3)$. 
Consider $h'_3$ taken in the case 1. 
This case is depicted as follows: 
%%%%%%%%%%%%%%%%%%%%%
\begin{center}
\begin{tabular}{lccc}
Formula&$\Ls(p,q)$&$*$&$\Sigma - \sigma$
\\
\hline
Heap&
\raisebox{-18pt}{
$\underbrace{
\begin{tikzpicture}[every node/.style = {shape=rectangle,text width=6pt,text height=5pt,text centered}]
\mksarrow{0}{$s(p)$}
\mkcell{1}{}{}
\mkarrow{1}{}
\mkdots{2}
\mksarrow{2}{$s(t)$}
\mkcell{3}{}{}
\mkarrow{3}{}
\mksarrowr{3}{}
\mkdots{4}
\mksarrow{4}{}
\mkarrowr{4}{}
\mkcell{5}{}{}
\mkarrow{5}{$s(u)$}
\mkdots{6}{}{}
\mksarrow{6}{}
\mkcell{7}{}{}
\mkarrow{7}{$s(q)$}
\end{tikzpicture}
}_{h'_3}$
}
&
$+$
&
$h_4$
\end{tabular}
\end{center}
%%%%%%%%%%%%%%%%%%%%%
Note that the above picture covers both cases of $n = 3$ and $n\ge 5$. 
It satisfies $s,h'_3\models\Ls(p,q)$.
We have $h'_3+h_4 = h_1+h_2 = h$ 
by removing $h'_1$ from both sides of $h_3+h_4 = h'_1+h_2$ and adding $h_1$ to them. 
Hence we have $s,h'_3+h_4\models\Sigma$. 
Therefore we have (1) since $s,h\models\Vec{\psi}$ can be obtained.

{\bfseries Case 2.2:} the case of $s(v) \in \Dom(h_4)$. 
Recall that this case implies $s(v) = s(q)$. 
This case is depicted as follows: 
%%%%%%%%%%%%%%%%%%%%%
\begin{center}
\begin{tabular}{lccc}
Formula&$\Ls(p,q)$&$*$&$\Sigma - \sigma$
\\
\hline
Heap&
\raisebox{-18pt}{
$\underbrace{
\begin{tikzpicture}[every node/.style = {shape=rectangle,text width=6pt,text height=5pt,text centered}]
\mksarrow{0}{$s(p)$}
\mkcell{1}{}{}
\mkarrow{1}{}
\mkdots{2}
\mksarrow{2}{$s(t)$}
\mkcell{3}{}{}
\mkarrow{3}{$a$}
\mkarrowr{3}{$s(t)$}
\mkcell{4}{}{}
\mklarrow{4}{$s(v)=s(q)$}
\mkarrowr{4}{$a$}
\end{tikzpicture}
}_{h_3}$
}
&
$+$
&
\raisebox{-12pt}{
$\underbrace{
\begin{tikzpicture}[every node/.style = {shape=rectangle,text width=6pt,text height=5pt,text centered}]
\mksarrow{0}{$s(v)$}
\mkarrowr{0}{$a$}
\mkcell{1}{}{}
\mkarrow{1}{$s(u)$}
\mkdots{2}
\end{tikzpicture}
}_{h_4}$
}
\end{tabular}
\end{center}
%%%%%%%%%%%%%%%%%%%%%

Let $h_1(s(v)) = (s(u),d)$. 
Then we define $\tilde{h}'_3$ by 
$\Dom(\tilde{h}'_3) = (\Dom(h_1)\setminus\{s(v)\})+(\Dom(h_3)\setminus\{a,s(t)\})$, 
$\tilde{h}'_3(m) = h_1(m)$ if $m \in \Dom(h_1)\setminus\{s(v)\}$, and 
$\tilde{h}'_3(m) = h_3(m)$ otherwise. 
We also define $\tilde{h}_4$ by 
$\Dom(\tilde{h}_4) = \Dom(h_4)$, 
$\tilde{h}_4(m) = h_4(m)$ if $m\neq s(v)$, and
$\tilde{h}_4(m) = (s(u),d)$ if $m = s(v)$. 
Then we have the following situation: 
%%%%%%%%%%%%%%%%%%%%%
\begin{center}
\begin{tabular}{lccc}
Formula&$\Ls(p,q)$&$*$&$\Sigma - \sigma$
\\
\hline
Heap&
\raisebox{-18pt}{
$\underbrace{
\begin{tikzpicture}[every node/.style = {shape=rectangle,text width=6pt,text height=5pt,text centered}]
\mksarrow{0}{$s(p)$}
\mkcell{1}{}{}
\mkarrow{1}{}
\mkdots{2}
\mksarrow{2}{$s(t)$}
\mkcell{3}{}{}
\mkarrow{3}{}
\mksarrowr{3}{}
\mkdots{4}
\mksarrow{4}{}
\mkarrowr{4}{}
\mkcell{5}{}{}
\mklarrow{5}{$s(v)=s(q)$}
\mkarrowr{5}{$d$}
\end{tikzpicture}
}_{\tilde{h}'_3}$
}
&
$+$
&
\raisebox{-12pt}{
$\underbrace{
\begin{tikzpicture}[every node/.style = {shape=rectangle,text width=6pt,text height=5pt,text centered}]
\mksarrow{0}{$s(v)$}
\mkarrowr{0}{$d$}
\mkcell{1}{}{}
\mkarrow{1}{$s(u)$}
\mkdots{2}
\end{tikzpicture}
}_{\tilde{h}_4}$
}
\end{tabular}
\end{center}
%%%%%%%%%%%%%%%%%%%%%
Note that the above picture covers both cases of $n = 3$ and $n\ge 5$. 
Then we have 
$\tilde{h}'_3+\tilde{h}_4 = h_1 + h_2 = h$, 
by removing $h'_1$ from both sides of $h_3+h_4 = h'_1+h_2$ and adding $h_1$ to them. 
Hence we have $s,\tilde{h}'_3+\tilde{h}_4\models\Sigma$. 
Therefore we have (1) since $s,h\models\Vec{\psi}$ can be obtained.
\end{proof}

\begin{rem}\rm
We note that our unroll collapse properties (Proposition~\ref{prop:unroll-ls} and \ref{prop:unroll-dll})
hold for entailments with the points-to predicate, the array predicate, 
and the (possibly cyclic) singly-linked and doubly-linked list predicates. 
We also note that ours hold for entailments that contain arithmetic. 
The original version of unroll collapse is given by Berdine et al.~\cite{OHearn04}. 
It holds for entailments with only the points-to predicate and the acyclic singly-linked list predicate. 
We cannot compare ours and theirs naively, since the singly-linked list predicates in both papers are different for cyclic lists.
\end{rem}
  
%% \begin{exa}\label{ex:unroll}\rm
%% Consider the following valid entailments

%% ($J_1$) $\Arr(1,2) * 3\Pto(10,0) * \Ls(10,20) \vdash \Arr(1,3) * \Ls(10,20)$. By applying unroll collapse, 
%% its validity is equivalent to 
%% validity of the following entailments: 
%% \\
%% ($J_{1a}$) $10 = 20 \land \Arr(1,2) * 3\Pto(10,0) \vdash \Arr(1,3) * \Ls(10,20)$
%% \\
%% ($J_{1b}$) $\Arr(1,2) * 3\Pto(10,0) * 10\Pto(z,y) * z\Pto(20,w) \vdash \Arr(1,3) * \Ls(10,20)$

%% ($J_2$) $\Dll(x,y,p,q) \vdash \Ls(x,y)$. 
%% By applying unroll collapse, 
%% we have 
%% \\
%% ($J_{2a}$) $x = y \land p = q \land \Emp \vdash \Ls(x,y)$, 
%% \\
%% ($J_{2b}$) $x = p \land x \Pto (y,q) \vdash \Ls(x,y)$, and 
%% \\
%% ($J_{2c}$) $x = (z,q) * z \Pto (p,x) * p \Pto (y,z) \vdash \Ls(x,y)$. 

%% ($J_3$) $\Ls(1,5)*\Ls(5,1) \vdash \Ls(1,1)$. 
%% Note that this entailment is valid since 
%% $\Ls(1,1)$ can contain a cycle. 
%% By applying unroll collapse, 
%% we have 
%% \\
%% ($J_{3a}$) $1 = 5 \land 5=1 \land \Emp \vdash \Ls(1,1)$, 
%% \\
%% ($J_{3b}$) $1 = 5 \land 5 \Pto (p,q) * p \Pto (1,r) \vdash \Ls(1,1)$, 
%% \\
%% ($J_{3c}$) $5 = 1 \land 1 \Pto (z,y) * z \Pto (5,w) \vdash \Ls(1,1)$, and 
%% \\
%% ($J_{3d}$) $1 \Pto (z,y) * z \Pto (5,w) * 5 \Pto (p,q) * p \Pto (1,r) \vdash \Ls(1,1)$. 

%% \end{exa}

\section{Decision Procedure for Arrays and Lists}

This section gives our algorithm for checking 
the validity of a given entailment, whose antecedent do not contain list predicates. 
In the following subsections~\ref{subsect:proofsystem} and \ref{subsect:proofsearch}, we implicitly assume that the antecedent of each entailment is list-free. 

The procedure first eliminates the list predicates in the succedent of a given entailment. 
The resulting entailments only contain the points-to and array predicates, that is, they are entailments in $\SLAR$. 
Then the procedure checks their validity by using the decision procedure of $\SLAR$. 

\subsection{Proof System for Elimination of Lists in Succedents}\label{subsect:proofsystem}

This subsection gives a proof system for entailments 
whose antecedents do not contain list predicates. 
Our decision procedure is given as a proof-search procedure 
of the proof system. 

We define a Presburger formula $\Cell{\Sigma}{t}$
that means the address $t$ is a cell of $\Sigma$. 
We also define $\Term(\Sigma)$, which is the set of terms in $\Sigma$. 
For defining them, we will not implicitly use the commutative law for $*$ and the unit law for $\Emp$. 

\begin{defi}\rm
$\Cell{\Sigma}{t}$ is inductively defined as follows: 
\begin{center}
\begin{tabular}{l@{\qquad}l}
$\Cell{\Emp}{t} \eqDef \False$, 
&
$\Cell{\Emp*\Sigma'}{t} \eqDef \Cell{\Sigma'}{t}$, 
\\
$\Cell{t'\Pto(\wild,\wild)}{t} \eqDef t=t'$, 
&
$\Cell{t'\Pto(\wild,\wild)*\Sigma'}{t} \eqDef t=t'\vee\Cell{\Sigma'}{t}$, 
\\
$\Cell{\Arr(t',u')}{t} \eqDef t'\le t \le u'$, 
&
$\Cell{\Arr(t',u')*\Sigma}{t} \eqDef t'\le t \le u' \vee \Cell{\Sigma}{t}$. 
\end{tabular}
\end{center}
\end{defi}

\begin{defi}\rm
$\Term(\Sigma)$ is inductively defined as follows: 

\qquad
\begin{tabular}{l@{\qquad}l@{\qquad}l}
$\Term(\Emp) \eqDef \emptyset$, 
&
$\Term(t\Pto(u_1,u_2)) \eqDef \{t,u_1,u_2\}$, 
&
$\Term(\Arr(t,u)) \eqDef \{t,u\}$, 
\\
\multicolumn{3}{l}{
$\Term(\Sigma_1*\Sigma_2) \eqDef \Term(\Sigma_1) \cup \Term(\Sigma_2)$. 
}
\end{tabular}
\end{defi}
We write $\Term(\Vec\Sigma)$ for $\bigcup_{\Sigma\in\Vec\Sigma}\Term(\Sigma)$. 

\begin{lem}\label{lem:cell}\rm
Suppose $s,h\models\Sigma$. Then 
$s\models\Cell{\Sigma}{t}$ 
if and only if 
$s(t) \in \Dom(h)$. 
\end{lem}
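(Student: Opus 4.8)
The plan is to prove Lemma~\ref{lem:cell} by structural induction on $\Sigma$, organized along exactly the syntactic decomposition used to define $\Cell{\cdot}{\cdot}$ and $\Term$ — that is, viewing $\Sigma$ as a $*$-separated sequence of atomic spatial formulas, read left to right, without reordering or collapsing units. This is legitimate here because, as stated at the start of this subsection, the commutative law for $*$ and the unit law for $\Emp$ are not being used implicitly, so this recursion is well-founded.

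First I would dispatch the base cases, where $\Sigma$ is a single atom. If $\Sigma\equiv\Emp$, then $s,h\models\Sigma$ forces $\Dom(h)=\emptyset$ while $\Cell{\Emp}{t}\eqDef\False$, so both sides of the claimed equivalence are false. If $\Sigma\equiv t'\Pto(\wild,\wild)$, the semantics gives $\Dom(h)=\{s(t')\}$, and $\Cell{\Sigma}{t}\eqDef t=t'$, so $s\models\Cell{\Sigma}{t}$ iff $s(t)=s(t')$ iff $s(t)\in\Dom(h)$. If $\Sigma\equiv\Arr(t',u')$, then $s,h\models\Sigma$ yields $s(t')\le s(u')$ and $\Dom(h)=\{x\mid s(t')\le x\le s(u')\}$, while $\Cell{\Sigma}{t}\eqDef t'\le t\le u'$, so again $s\models\Cell{\Sigma}{t}$ iff $s(t)\in\Dom(h)$.

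For the inductive step, $\Sigma$ has the form $\sigma_0*\Sigma'$ with $\sigma_0$ one of $\Emp$, a points-to atom, or an array atom. From $s,h\models\Sigma$ I get $h=h_1+h_2$ with $s,h_1\models\sigma_0$ and $s,h_2\models\Sigma'$, hence $\Dom(h)=\Dom(h_1)\cup\Dom(h_2)$ is a disjoint union. Applying the base-case computation to the single atom $\sigma_0$ shows $s(t)\in\Dom(h_1)$ iff $s$ satisfies the ``head'' disjunct of $\Cell{\sigma_0*\Sigma'}{t}$ (namely $\False$, $t=t'$, or $t'\le t\le u'$ according to the form of $\sigma_0$), and the induction hypothesis applied to $\Sigma'$ and $h_2$ gives $s(t)\in\Dom(h_2)$ iff $s\models\Cell{\Sigma'}{t}$. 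Taking the disjunction of these two equivalences and comparing with the recursive clause defining $\Cell{\sigma_0*\Sigma'}{t}$ closes the step.

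I do not expect a real obstacle: the argument is bookkeeping. The only points needing a little care are (a) keeping the induction aligned with the left-to-right syntactic structure rather than with $*$ up to associativity/commutativity, and (b) noting that whenever $\sigma_0\equiv\Arr(t',u')$, satisfiability of $\sigma_0$ already supplies $s(t')\le s(u')$, so describing $\Dom(h_1)$ as the interval of addresses from $s(t')$ to $s(u')$ is justified (matching the definition of $\Dom(s,\Sigma)$).
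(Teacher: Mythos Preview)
Your proposal is correct and follows exactly the approach the paper takes: the paper's proof is the single sentence ``The claim is shown by induction on $\Sigma$,'' and your write-up is precisely that induction carried out in full, case-splitting along the same clauses that define $\Cell{\Sigma}{t}$.
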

\begin{proof}
The claim is shown by induction on $\Sigma$.
\end{proof}

Then we define the inference rules which give our algorithm. 
The rules are shown in Figure~\ref{fig:inferencerules}. 

\begin{figure}[t]
  \rule{\textwidth}{1pt}
  \\[10pt]
  \scalebox{0.8}{
\begin{tabular}{c}
%%%%%%%%%%%%%%%%%%
$\infer[\hbox{(Start)}]{
\varphi \vdash \Vec{\psi}
        }{}$
where $\varphi$ is list-free and $\varphi \models \Vec{\psi}$ in $\SLAR$
\\[\SKIP]%%%%%%%%%%%%%%%%%%
$\infer[\hbox{(UnsatL)}]{
\varphi \vdash \Vec{\psi}
        }{}$
if $\varphi$ is unsat
\hspace{2cm}
$\infer[\hbox{(UnsatR)}]{
\varphi \vdash \psi,\Vec{\psi}
        }{
        \varphi \vdash \Vec{\psi}
        }$
if $\varphi\land\psi$ is unsat
\\[\SKIP]%%%%%%%%%%%%%%%%%%
$\infer[\hbox{($\Pto$LsEM)}]{
        t\Pto (v,w) * \varphi \vdash \Ls(t',u')*\psi,\Vec{\psi}
        }{
        \deduce{
        t=t'\land t\Pto (v,w) * \varphi \vdash \Ls(t',u')*\psi,\Vec{\psi}
        }{
        t\neq t'\land t\Pto (v,w) * \varphi \vdash \Ls(t',u')*\psi,\Vec{\psi}
        }
        }$
\\
\hfill
if $t=t'\land\Pi_\varphi$ 
and $t \neq t'\land\Pi_\varphi$ are satisfiable
\\[\SKIP]%%%%%%%%%%%%%%%%%%
$\infer[\hbox{($\Pto$Ls)}]{
        t\Pto (v,w) * \varphi \vdash \Ls(t',u')*\psi,\Vec{\psi}
        }{
        t\Pto (v,w) * \varphi \vdash t'=u'\land\psi,t'\Pto (v,w) * \Ls(v,u')*\psi,\Vec{\psi}
        }$
if $\Pi_\varphi\models t=t'$
\\[\SKIP]%%%%%%%%%%%%%%%%%%
$\infer[\hbox{(LsElim)}]{
        \varphi \vdash \Ls(t',u')*\psi,\Vec{\psi}
        }{
        \varphi \vdash t'=u'\land\psi,\Vec{\psi}
        }$
if $\Pi_\varphi\not\models \Cell{\Sigma_\varphi}{t'}$
\\[\SKIP]%%%%%%%%%%%%%%%%%%
$\infer[\hbox{($\Pto$DllEM)}]{
        t\Pto (v,w) * \varphi \vdash \Dll(t',u',v',w')*\psi,\Vec{\psi}
        }{
        \deduce{
        t=t'\land t\Pto (v,w) * \varphi \vdash \Dll(t',u',v',w')*\psi,\Vec{\psi}
        }{
        t\neq t'\land t\Pto (v,w) * \varphi \vdash \Dll(t',u',v',w')*\psi,\Vec{\psi}
        }
        }$
\\
\hfill
if $t=t'\land\Pi_\varphi$ 
and $t \neq t'\land\Pi_\varphi$ are satisfiable
\\[\SKIP]%%%%%%%%%%%%%%%%%%
$\infer[\hbox{($\Pto$Dll)}]{
        t\Pto (v,w) * \varphi 
                \vdash 
                \Dll(t',u',v',w')*\psi,
                \Vec{\psi}
        }{
        t\Pto (v,w) * \varphi 
                \vdash 
                t'=u'\land v'=w' \land \psi,
                t\Pto (v,w') * \Dll(v,u',v',t')*\psi,
                \Vec{\psi}
        }$
if $\Pi_\varphi\models t=t'$
\\[\SKIP]%%%%%%%%%%%%%%%%%%
$\infer[\hbox{(DllElim)}]{
        \varphi \vdash \Dll(t',u',v',w')*\psi,\Vec{\psi}
        }{
        \varphi \vdash t'=u'\land v'=w' \land \psi,\Vec{\psi}
        }$
if $\Pi_\varphi\not\models \Cell{\Sigma_\varphi}{t'}$
\\[\SKIP]%%%%%%%%%%%%%%%%%%
$\infer[\hbox{($\Arr$ListEM)}]{
        \Arr(t,v) * \varphi \vdash L(t',\Vec{u'})*\psi,\Vec{\psi}
        }{
        \deduce{
                t\le t'\le v \land \Arr(t,v) * \varphi 
                        \vdash 
                        L(t',\Vec{u'})*\psi,\Vec{\psi}
                }{
                \deduce{
                  v < t' \land \Arr(t,v) * \varphi
                        \vdash 
                        L(t',\Vec{u'})*\psi,\Vec{\psi}
                  }{
                  t' < t \land \Arr(t,v) * \varphi 
                        \vdash 
                        L(t',\Vec{u'})*\psi,\Vec{\psi}
                  }
                }
        }$,
\\
\hfill
where $t\le t'\le v \land \Pi_\varphi$ and $(t'< t \vee v < t') \land \Pi_\varphi$ are satisfiable
\\
\hfill
$L(t',\Vec{u'})$ is $\Ls(t',u')$ or $\Dll(t',u'_1,u'_2,u'_3)$
\\[\SKIP]%%%%%%%%%%%%%%%%%%
$\infer[\hbox{($\Arr$Ls)}]{
        \Arr(t,v) * \varphi \vdash \Ls(t',u')*\psi,\Vec{\psi}
        }{
        \Arr(t,v) * \varphi \vdash t'=u'\land\psi,\Vec{\psi}
        }$
\quad 
if $\Pi_\varphi \models t\le t'\le v$
\\[\SKIP]%%%%%%%%%%%%%%%%%%
$\infer[\hbox{($\Arr$Dll)}]{
        \Arr(t,v) * \varphi \vdash \Dll(t',u',v',w')*\psi,\Vec{\psi}
        }{
        \Arr(t,v) * \varphi \vdash t'=u'\land v'=w'\land\psi,\Vec{\psi}
        }$
\quad
if $\Pi_\varphi \models t\le t'\le v$
\\[\SKIP]%%%%%%%%%%%%%%%%%%
\end{tabular}
}
  \rule{\textwidth}{1pt}
  \\[10pt]
\caption{Inference rules for the decision procedure}
\label{fig:inferencerules}
\end{figure}

%% We define the set $\Term(\Sigma)$ of terms that appear in $\Sigma$ 
%% as follows:

%% $\Term(\Emp) = \emptyset$

%% $\Term(t\Pto(u,v)) = \{t,u,v\}$

%% $\Term(\Ls(t,u)) = \Term(\Arr(t,u)) = \{t,u\}$

%% $\Term(\Dll(t,u,v,w)) = \{t,u,v,w\}$

%% $\Term(\Sigma_1*\Sigma_2) = \Term(\Sigma_1)\cup\Term(\Sigma_2)$

Let $h$ and $h'$ be heaps such that $\Dom(h) = \Dom(h')$. 
We write $h\sim_d h'$ if $h(x) = h'(x)$ holds for any $x \in \Dom(h)\setminus\{d\}$. 

The following lemma is used in the proof of 
local completeness of the inference rules. 

\begin{lem}\label{lem:freshreplace}
(1) Suppose that $s,h\models\sigma$, $(a,b) \in \Ran(h)$ and 
$a\not\in\Dom(h) \cup \{s(t) \mid t \in \Term(\sigma) \}$. 
Then $\sigma$ is an array atomic formula. 

(2) Suppose that $s,h\models\Sigma$, $h(d) = (a,b)$, 
$a\not\in\Dom(h) \cup \{s(t) \mid t \in \Term(\Sigma) \}$ 
and $h'\sim_d h$. 
Then $s,h'\models\Sigma$.
\end{lem}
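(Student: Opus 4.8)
The plan is to prove the two parts separately, with part~(2) invoking part~(1) in its base case. Both are essentially case analyses, so there is no deep machinery involved.

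For part~(1), I would simply inspect the possible shapes of the atomic spatial formula $\sigma$ and show that only an array atomic formula is consistent with the hypotheses. If $\sigma\equiv\Emp$ then $\Dom(h)=\emptyset$, hence $\Ran(h)=\emptyset$ and there is no pair $(a,b)\in\Ran(h)$. If $\sigma\equiv t\Pto(u_1,u_2)$ then $\Ran(h)=\{(s(u_1),s(u_2))\}$, so $(a,b)\in\Ran(h)$ forces $a=s(u_1)$; but $u_1\in\Term(\sigma)$, contradicting $a\notin\{s(t)\mid t\in\Term(\sigma)\}$. (If one also counts a list predicate as an admissible atomic $\sigma$, unfolding its inductive definition shows that every forward pointer stored in $h$ is either an allocated address in $\Dom(h)$ or the value of one of the endpoint terms of $\sigma$, again contradicting the condition on $a$; since $\Term$ is defined only on list-free formulas this case is presumably outside the intended scope.) The only remaining possibility is that $\sigma$ is an array atomic formula.

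For part~(2), I would argue by induction on the structure of $\Sigma$. In the base case $\Sigma$ is atomic; since $h(d)=(a,b)\in\Ran(h)$ and $a\notin\Dom(h)\cup\{s(t)\mid t\in\Term(\Sigma)\}$, part~(1) gives that $\Sigma$ is an array atomic formula $\Arr(t,t')$, whose truth under $(s,h)$ depends only on $s$ and on $\Dom(h)$; as $\Dom(h')=\Dom(h)$ we conclude $s,h'\models\Sigma$. In the inductive step $\Sigma\equiv\Sigma_1*\Sigma_2$, so $h=h_1+h_2$ with $s,h_i\models\Sigma_i$, and $d$ lies in exactly one of the two domains, say $d\in\Dom(h_1)$. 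Put $h_i'=h'|_{\Dom(h_i)}$; then $h'=h_1'+h_2'$ because $\Dom(h')=\Dom(h)$, we have $h_2'=h_2$ because these heaps can differ only at $d\notin\Dom(h_2)$, and $h_1'\sim_d h_1$. The freshness hypotheses transfer to $\Sigma_1,h_1$ since $\Dom(h_1)\subseteq\Dom(h)$ and $\Term(\Sigma_1)\subseteq\Term(\Sigma)$, and $h_1(d)=h(d)=(a,b)$; so the induction hypothesis yields $s,h_1'\models\Sigma_1$, and therefore $s,h'\models\Sigma_1*\Sigma_2$.

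The only place that requires care is this inductive step of part~(2): one must check that the witness decomposition $h'=h_1'+h_2'$ mirrors $h=h_1+h_2$, that the modification at the cell $d$ touches only the $\Sigma_1$-part, and that the side conditions on the fresh value $a$ survive restriction to the sub-heap and sub-formula. These are exactly the invariants needed to apply the induction hypothesis, and beyond them the argument is routine.
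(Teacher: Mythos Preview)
Your argument is correct, but two points deserve comment.

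First, list predicates are \emph{not} outside the intended scope of part~(1). The lemma is invoked in the local-completeness proof of the rule (ArrLs), where it is applied to succedent formulas $\psi_j\in\Vec{\psi}$ that may well contain $\Ls$ or $\Dll$; the paper accordingly treats these cases explicitly (implicitly extending $\Term$ so that $\Term(\Ls(t,u))=\{t,u\}$ and $\Term(\Dll(t,u,v,w))=\{t,u,v,w\}$). Your sketched argument for the list case---that the first component $a$ of any stored pair is either the address of the next cell (hence in $\Dom(h)$) or the end-term value---is exactly the paper's argument, so you already have what is needed; just promote it from a parenthetical aside to a full case.

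Second, for part~(2) the paper does not induct on $\Sigma$. It argues directly: since $\Sigma$ is a $*$-conjunction of atoms, pick the unique atomic conjunct $\sigma$ whose sub-heap $h_1$ contains $d$, apply part~(1) to conclude $\sigma$ is an array atom, observe that satisfaction of an array atom depends only on the domain (so $s,h'_1\models\sigma$ where $h'_1=h'|_{\Dom(h_1)}$), and leave the complementary sub-heap $h_2$ untouched. Your structural induction reaches the same endpoint with a bit more bookkeeping (you carry the freshness hypotheses down through the recursion rather than jumping straight to the relevant atom). Both are valid; the paper's route is shorter, while yours makes the invariant that the side conditions restrict to sub-heaps and sub-formulas fully explicit.
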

\begin{proof}
(1) We show the claim by case analysis of $\sigma$. 

The case that $\sigma$ is $t \Pto (u,v)$. 
By the assumptions, we have $(a,b) \in \Ran(h) = \{(s(u),s(v))\}$. 
Hence we obtain $a = s(u) \in \{s(t) \mid t \in \Term(\sigma) \}$,
which contradicts the assumption. 

The case that $\sigma$ is $\Ls(t,u)$. 
By the assumptions, we have $h\neq \emptyset$. 
Hence $h$ contains a non-empty list that starts from $t$. 
Then $a$ must be a cell of the list, 
since $a \neq s(u)$. 
Therefore we have $a \in \Dom(h)$, which contradicts the assumption. 

The case of $\Dll(t,u,v,w)$ can be shown in a similar way to 
the case of $\Ls(t,u)$. 

(2) By the assumptions, there exist $h_1$ and $h_2$ such that 
$s,h_1\models\sigma$, $s,h_2\models\Sigma-\sigma$, 
$h = h_1+h_2$ and $d \in \Dom(h_1)$. 
By (1), $\sigma$ is an array atomic formula. 
Let $h'_1$ be $h'|_{\Dom(h_1)}$. 
Then we have $s,h'_1\models\sigma$, since $h_1\sim_d h'_1$. 
Thus we obtain $s,h'\models\Sigma$, since $h' = h'_1+h_2$. 
\end{proof}

\begin{prop}\label{prop:sound_complete}\rm
Each inference rule is sound and locally complete. 
\end{prop}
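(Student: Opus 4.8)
The plan is to go through the rules of Figure~\ref{fig:inferencerules} family by family, proving for each rule both soundness (all premises valid $\Rightarrow$ conclusion valid) and local completeness (conclusion valid $\Rightarrow$ every premise valid), in each case allowed to assume the rule's side condition. The axioms (Start) and (UnsatL) are immediate: their soundness is literally the side condition --- for (Start) one additionally observes that a list-free entailment is valid in $\SLGL$ iff it is valid in $\SLAR$, since the heap models and the semantic clauses for list-free formulas coincide --- and local completeness is vacuous because there are no premises. For (UnsatR), soundness holds trivially (enlarging the succedent preserves validity), and local completeness uses the side condition: if $\varphi\land\psi$ has no model then $s,h\models\varphi$ excludes $s,h\models\psi$, so $\varphi\models\psi\lor\bigvee\Vec\psi$ and $\varphi\models\bigvee\Vec\psi$ are equivalent. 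The three excluded-middle rules ($\Pto$LsEM), ($\Pto$DllEM), ($\Arr$ListEM) are uniform: the antecedents of their premises are pairwise exclusive and jointly cover the antecedent of the conclusion, so soundness follows by a case split on which case a given model falls into, and local completeness follows because validity is monotone under strengthening the antecedent (the satisfiability side conditions here are needed only so that proof search makes progress, not for this proposition).

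The rules that genuinely touch a list or array atom are where the work is. For ($\Pto$Ls) and ($\Pto$Dll), the side condition $\Pi_\varphi\models t=t'$ together with the antecedent atom $t\Pto(v,w)$ pins down the content of the cell at $s(t')$ in every model of the antecedent; unfolding $\Ls(t',u')$ (resp.\ $\Dll$) once and replacing the existentially bound next/previous values by these known contents shows that, relative to the antecedent, the succedent disjunct $\Ls(t',u')*\psi$ is semantically equivalent to the disjunction of the two (resp.\ three) disjuncts appearing in the premise. I would establish this equivalence by decomposing the heap of a model into its list/dll part and the remainder, and then both soundness and invertibility drop out simultaneously. For ($\Arr$Ls) and ($\Arr$Dll), soundness is trivial: $t'=u'\land\psi$ (together with $v'=w'$ in the dll case) entails $\Ls(t',u')*\psi$ by choosing the empty list/dll. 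For local completeness I would exploit that array cells carry unconstrained contents: given $s,h\models\Arr(t,v)*\varphi$, the side condition $\Pi_\varphi\models t\le t'\le v$ puts $s(t')$ inside the array, so I may redefine the content of the cell at $s(t')$ to a value $c$ chosen fresh for $\Dom(h)$ and for all terms occurring in the succedent; Lemma~\ref{lem:freshreplace} then transfers back any succedent formula satisfied after the modification, while the freshness of $c$ forbids any non-empty list/dll starting at $s(t')$, so the only disjunct that can hold is $t'=u'\land\psi$ (with $v'=w'$).

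Lastly, (LsElim) and (DllElim) have trivial soundness (empty list/dll again), and their local completeness is the delicate part and the step I expect to be the main obstacle. Here the side condition $\Pi_\varphi\not\models\Cell{\Sigma_\varphi}{t'}$ expresses that the head $t'$ is not forced to be an allocated address, and by Lemma~\ref{lem:cell} ($s\models\Cell{\Sigma_\varphi}{t'}$ iff $s(t')\in\Dom(h)$ when $s,h\models\Sigma_\varphi$) this should force the list/dll part of $\Ls(t',u')*\psi$ to be empty in the models that matter, collapsing that succedent disjunct to $t'=u'\land\psi$. The subtlety is handling a model in which $s(t')$ nevertheless happens to be allocated: one then inspects which atom of $\Sigma_\varphi$ covers $s(t')$ and, using the side condition, reduces to the array case, where Lemma~\ref{lem:freshreplace} again lets one perturb that cell to a fresh value and rule out a non-empty list/dll. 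Making this argument precise --- in particular the exact interplay between $\Cell{\Sigma_\varphi}{t'}$, the side condition, and Lemma~\ref{lem:freshreplace} --- is where I expect the real effort to lie; the remaining rules are routine bookkeeping about splitting and recombining heaps.
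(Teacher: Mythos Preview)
For every rule except (LsElim) and (DllElim) your plan coincides with the paper's: the axioms and (UnsatR) are immediate, the EM rules are case splits, ($\Pto$Ls)/($\Pto$Dll) are one-step unfoldings made deterministic by the antecedent's $t\Pto(v,w)$ atom, and for ($\Arr$Ls)/($\Arr$Dll) the paper uses precisely your fresh-value perturbation via Lemma~\ref{lem:freshreplace}.

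The divergence is at (LsElim)/(DllElim). The paper's argument there is a single line: from the side condition it takes $s\not\models\Cell{\Sigma_\varphi}{t'}$, invokes Lemma~\ref{lem:cell} to get $s(t')\notin\Dom(h)$, and concludes the list sub-heap is empty---no case analysis on the covering atom, no Lemma~\ref{lem:freshreplace}. Your instinct that there is a subtlety is well placed, but your proposed remedy (``using the side condition, reduces to the array case'') does not go through with the side condition as printed: $\Pi_\varphi\not\models\Cell{\Sigma_\varphi}{t'}$ does \emph{not} prevent a particular $s\models\Pi_\varphi$ from satisfying $\Cell{\Sigma_\varphi}{t'}$, nor does it force the covering atom to be an array. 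Concretely, with $\varphi=(x{=}y\lor x{=}w)\land x\Pto(z,0)$ and succedent $\Ls(y,z),\,w\Pto(z,0)$, the side condition holds and the conclusion is valid, yet the premise $\varphi\vdash y{=}z\land\Emp,\,w\Pto(z,0)$ fails at $s(x){=}s(y){=}1$, $s(z){=}2$, $s(w){=}3$; the atom covering $s(y)$ here is a points-to, so no perturbation helps. The paper's one-line argument is really using the stronger reading $\Pi_\varphi\models\neg\Cell{\Sigma_\varphi}{t'}$, which is exactly what the search procedure guarantees when it selects (LsElim) (see the case analysis in Lemma~\ref{lem:progress}). Under that reading the subtle case you worry about simply does not arise, and you should follow the paper: apply Lemma~\ref{lem:cell} directly.
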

\begin{proof}
The claims of the rules 
(Start), (UnsatL) and (UnsatR) 
are immediately shown. 
The claims of the rules 
($\Pto$LsEM), ($\Pto$DllEM), (ArrListEM), ($\Pto$Ls) and ($\Pto$Dll)
are easily shown (without the side-conditions). 

Soundness of the rule (LsElim) is easily shown. 
We show local completeness of it. 
Assume $\varphi\models\Ls(t',u')*\psi,\Vec{\psi}$, 
the validity $s,h\models\varphi$ and $\Pi_\varphi\not\models\Cell{\Sigma_\varphi}{t'}$. 
We will show $s,h\models t'=u'\land\psi,\Vec{\psi}$. 
By the assumption, 
we have $s,h\models \Ls(t',u')*\psi$ or $s,h\models \Vec{\psi}$. 
If the latter case holds, then we have the claim. 
Otherwise there exist $h_1$ and $h_2$ such that 
$h = h_1+h_2$, $s,h_1\models\Ls(t',u')$ and $s,h_2\models\psi$. 
By the lemma~\ref{lem:cell}, we have $s(t')\not\in\Dom(h)$ 
since $s,h\models\Sigma_\varphi$ and $s\not\models\Cell{\Sigma_\varphi}{t'}$.  
Hence we obtain $h_1=\emptyset$, $s\models t'=u'$ and $s,h\models\psi$. 
Thus $s,h\models t'=u'\land\psi$ holds. 

Soundness of the rule (DllElim) is shown immediately. 
Local completeness of it can be shown in a similar way to 
the proof of local completeness of (LsElim). 

Soundness of the rule (ArrLs) is easily shown. 
We show local completeness of it. 
Assume $\Arr(t,v)*\varphi\models\Ls(t',u')*\psi,\Vec{\psi}$, 
the validity $s,h\models\Arr(t,v)*\varphi$ and $\Pi_\varphi\models t\le t'\le v$. 
We will show $s,h\models t'=u'\land\psi,\Vec{\psi}$. 
Let $(a,b)$ be $h(s(t'))$. 
Fix a fresh value $a'$ such that 
$a' \not\in \Dom(h)\cup\{s(t) \mid t \in\Term(\Ls(t',u')*\Sigma_\psi,\Sigma_{\Vec{\psi}})\}$. 
Define $h'$ by $h' \sim_{s(t')} h$ and $h'(s(t')) = (a',b)$. 
Then we have $s,h'\models\Arr(t,v)*\varphi$ since $s\models t\le t' \le v$ by the side condition. 
Hence $s,h'\models\Ls(t',u')*\psi,\Vec{\psi}$ holds. 
If $s,h'\models\Vec{\psi}$, then 
we have the claim
by Lemma~\ref{lem:freshreplace}~(2). 
Otherwise we have $s,h'\models\Ls(t',u')*\psi$. 
Hence there exist $h'_1$ and $h'_2$ such that 
$h' = h'_1+h'_2$, $s,h'_1\models\Ls(t',u')$ and $s,h'_2\models\psi$. 
Then we can show $h'_1 = \emptyset$ as follows: 
Suppose it does not hold, then 
$s(t')\in\Dom(h'_1)$ and $(a',b)\in\Ran(h'_1)$; 
hence we have a contradiction since 
$\Ls(t',u')$ is an array atomic formula by Lemma~\ref{lem:freshreplace}~(1). 
Therefore we have $s\models t'=u'$ and $s,h'\models\psi$. 
Finally we have $s,h\models t'=u' \land \psi$ by Lemma~\ref{lem:freshreplace}~(2). 

The claim of the rule (ArrDll) can be shown in a similar way to (ArrLs).
\end{proof}

\subsection{Proof Search Algorithm}\label{subsect:proofsearch}

In our decision procedure, 
we read each inference rule 
from the bottom (conclusion) to the top (assumptions). 
Let $\Entl$ be the set of entailments whose antecedents 
do not contain list predicates. 
For each inference rule $\mathcal{R}$ 
we define 
a partial function $\Apply{\R}$ from $\Entl$ to $\Pow(\Entl)$ as follows. 
$\Apply{\R}(J)$ is defined when $J$ is 
a conclusion of some instance of $\R$ (including its side condition). 
$\Apply{\R}(J)$ is the assumptions in some instance of $\R$ with the conclusion $J$ (non-deterministically chosen). 

Let $\tau$ be a derivation tree
\raisebox{-10pt}{
$\infer[\R]{J}{\tau_1&\ldots &\tau_k}$ 
}. 
We sometimes represent this tree by a tuple 
$\Deriv{\R}{J}{\tau'_1,\ldots,\tau'_k}$,
where $\tau'_1,\ldots,\tau'_k$ are representation of $\tau_1,\ldots,\tau_k$, respectively. 
Our proof search procedure $\Search$ is given in Figure~\ref{fig:search}.  
$\Search(J)$ returns a tuple for a derivation tree of $J$ or returns $\Fail$. 

\def\ln#1{\hbox{\tiny #1}}

\begin{figure}[t]
\begin{tabular}{l}
\FunctionK$\Search(J)$
\\
\qqOne\IfK{UnsatL is applicable to $J$} $\R\ :=\ {\rm UnsatL}$
\\
\qqTwo\ElifK{Start is applicable to $J$} $\R\ :=\ {\rm Start}$
\\
\qqThr\ElifK{UnsatR is applicable to $J$} $\R\ :=\ {\rm UnsatR}$
\\
\qqFor\ElifK{$\Pto$LsEM is applicable to $J$} $\R\ :=\ \Pto{\rm LsEM}$
\\
\qqFiv\ElifK{ArrListEM is applicable to $J$} $\R\ :=\ {\rm ArrListEM}$
\\
\qqSix\ElifK{LsElim is applicable to $J$} $\R\ :=\ {\rm LsElim}$
\\
\qqSev\ElifK{ArrLs is applicable to $J$} $\R\ :=\ {\rm ArrLs}$
\\
\qqEig\ElifK{$\Pto$Ls is applicable to $J$} $\R\ :=\ \Pto{\rm Ls}$
\\
\qqNin\ElifK{$\Pto$DllEm is applicable to $J$} $\R\ :=\ \Pto{\rm DllEM}$
\\
\qqTen\ElifK{DllElim is applicable to $J$} $\R\ :=\ {\rm DllElim}$
\\
\qqEle\ElifK{ArrDll is applicable to $J$} $\R:=\ {\rm ArrDll}$
\\
\qqTwe\ElifK{$\Pto$Dll is applicable to $J$} $\R\ :=\ \Pto{\rm Dll}$
\\
\qqTht\ElseK \ReturnK{\Fail}
\\
\qqOne$\{J_1,\ldots,J_k\}\ :=\ \Apply{\R}(J)$
\\
\qqOne\IfK{each of $\Search(J_i)$ returns a tuple}
\ReturnK{$\Deriv{\R}{J}{\Search(J_1),\ldots,\Search(J_k)}$}
\\
\qqOne\ElseK \ReturnK{$\Fail$}
\end{tabular}
\caption{Proof search algorithm}
\label{fig:search}
\end{figure}

We first show the termination property of $\Search$. 
In order to show this, we define some notations. 
\begin{defi}\rm
(1) $\numPto(\psi)$ is the number of $\Pto$ in $\psi$, 
$\numList(\psi)$ is the number of $\Ls$ and $\Dll$ in $\psi$.

(2) $\degUnfold{\psi}{\varphi}$ is defined by 
$\numPto(\varphi)-\numPto(\psi)$. 

(3) Let $L(t',\Vec{u'})$ be $\Ls(t',\Vec{u'})$ or $\Dll(t',\Vec{u'})$, 
where $\Vec{u'}$ has an appropriate length. 
$\Deg(\Pi,\sigma,\sigma')$ is defined as follows:

$\Deg(\Pi,t\Pto(\wild,\wild),L(t',\wild)) = 
\left\{
\begin{array}{l@{\quad}l}
1
&
\hbox{if $t=t'\land\Pi$ and $t\neq t'\land\Pi$ are satisfiable,}
\\
0
&
\hbox{otherwise.}
\end{array}
\right.$

$\Deg(\Pi,\Arr(t,u),L(t',\wild)) = 
\left\{\!\!
\begin{array}{l@{~~}l}
1
&
\hbox{if $t\le t' \le u \land \Pi$ and $(t'<t \vee u < t')\land\Pi$ are satisfiable,}
\\
0
&
\hbox{otherwise.}
\end{array}
\right.$

$\Deg(\Pi,\sigma,\sigma') = 0$ \qquad if $\sigma'$ is not a list predicate. 

Then we define 
$\degEM{\psi}{\varphi}$ 
by 
$\underset{\sigma \hbox{ in } \varphi,\sigma' \hbox{ in } \psi}{\sum}\Deg(\Pi_\varphi,\sigma,\sigma')$. 

(4) 
We define the degree $\Rank{\psi}{\varphi}$ 
of $\psi$ with respect to $\varphi$ by 
$(\numList(\psi),\degUnfold{\psi}{\varphi},\degEM{\psi}{\varphi})$. 
The order on degrees is given by the lexicographic order. 

(5)
Let $J$ be an entailment $\varphi\vdash \{\psi_i\}_{i\in I}$. 
We define the measure $\widetilde{J}$ of $J$ as 
the sequence of $\Rank{\psi_i}{\varphi}$ $(i\in I)$ 
sorted in decreasing order. 
Then we write $J_1 < J_2$ for
$\widetilde{J_1} <_{{\rm lex}} \widetilde{J_2}$. 
\end{defi}

$\degUnfold{\psi}{\varphi}$ gives an upper bound 
of unfolding of list predicates 
in $\psi$ under $\varphi$. 
During the proof search, 
a list predicate in $\psi$ is unfolded 
if a matched points-to atomic formula in $\varphi$ is found. 
So, if unfolding is done more than the upper bound, 
$\psi$ becomes unsatisfiable 
since some points-to atomic formula has to match more than once. 

Note that the relation $<$ on entailments is 
a well-founded preorder, that is, 
there is no infinite decreasing chain, 
since the lexicographic order $\le_{{\rm lex}}$ on measures is a well-order. 

\begin{lem}\label{lem:order}
The degrees and the preorder on entailments satisfy the following properties. 

(1) 
$\Vec{\psi_1} \subsetneq \Vec{\psi_2}$ 
implies 
$(\varphi\vdash\Vec{\psi_1}) < (\varphi\vdash\Vec{\psi_2})$. 

(2) 
$\Rank{\psi_1}{\varphi},\ldots,\Rank{\psi_k}{\varphi} < \Rank{\psi}{\varphi}$ 
implies 
$(\varphi\vdash\Vec{\psi},\psi_1,\ldots,\psi_k) < (\varphi\vdash\Vec{\psi},\psi)$. 

(3) 
If 
$\Pi\subseteq\Pi'$, 
then 
$\Rank{\psi}{\Pi'\land\varphi} \le \Rank{\psi}{\Pi\land\varphi}$. 

(4) 
Suppose that 
$t = t'\land \Pi_\varphi$ 
and 
$t \neq t' \land \Pi_\varphi$ 
are satisfiable, 
$\varphi = t\Pto(\wild,\wild) * \varphi'$ 
and 
$\psi = L(t',\wild) * \psi'$. 
Then 
$(t = u \land \varphi\vdash\Vec{\psi},\psi), 
(t \neq u \land \varphi\vdash\Vec{\psi},\psi)
< 
(\varphi\vdash\Vec{\psi},\psi)$. 

(5) 
Suppose that 
$t \le t' \le u \land \Pi_\varphi$ 
and 
$(t' < t \vee u < t') \land \Pi_\varphi$ 
are satisfiable, 
$\varphi = \Arr(t,u)*\varphi'$ 
and 
$\psi = L(t',\wild) * \psi'$. 
Then 
$(t \le t' \le u \land \varphi\vdash\Vec{\psi},\psi), 
(t' < t \land \varphi\vdash\Vec{\psi},\psi),
(u < t' \land \varphi\vdash\Vec{\psi},\psi)
< 
(\varphi\vdash\Vec{\psi},\psi)$. 
\end{lem}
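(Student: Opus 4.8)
The plan is to treat the relation $J_1 < J_2$ as the multiset extension of the lexicographic well-order on degrees: the measure $\widetilde{J}$ of $J = \varphi\vdash\{\psi_i\}_{i\in I}$ is nothing but the finite multiset $\{\Rank{\psi_i}{\varphi}\}_{i\in I}$ written as a decreasing sequence, and comparing such sequences lexicographically is one of the standard presentations of the Dershowitz--Manna order. I would first isolate three order-theoretic facts, each proved by inspecting the first position at which two decreasing sorted sequences disagree: (a) deleting one or more elements from a finite multiset strictly decreases it; (b) replacing one element $x$ by finitely many (possibly zero) elements, each strictly below $x$, strictly decreases it; and (c) if $M'$ arises from $M$ by a bijection replacing every element by one that is $\le$ it, with at least one replacement strict, then $M' < M$ --- indeed $M'$ then equals $(M\setminus X)\cup Y$ with $X$ the multiset of strictly-replaced elements and $Y$ that of their images, so (c) is a direct instance of the definition of the multiset order.

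With these in hand, (1) is exactly (a), since $\Vec{\psi_1}\subsetneq\Vec{\psi_2}$ makes $\widetilde{(\varphi\vdash\Vec{\psi_1})}$ a deletion of $\widetilde{(\varphi\vdash\Vec{\psi_2})}$; and (2) is exactly (b), replacing the single element $\Rank{\psi}{\varphi}$ by the strictly smaller elements $\Rank{\psi_1}{\varphi},\dots,\Rank{\psi_k}{\varphi}$. For (3) I would go through the three coordinates of $\Rank{\psi}{\varphi} = (\numList(\psi),\degUnfold{\psi}{\varphi},\degEM{\psi}{\varphi})$ separately: $\numList(\psi)$ does not involve $\varphi$ at all; $\degUnfold{\psi}{\varphi} = \numPto(\varphi)-\numPto(\psi)$ counts only points-to atoms in spatial parts, hence is unchanged when the pure part is strengthened; and each summand $\Deg(\Pi_\varphi,\sigma,\sigma')$ of $\degEM{\psi}{\varphi}$ equals $1$ only when a certain pair of satisfiability conditions on the pure part both hold, so enlarging the pure part from $\Pi\land\Pi_\varphi$ to $\Pi'\land\Pi_\varphi$ can only turn a $1$ into a $0$. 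Thus the first two coordinates are unchanged and the third does not increase, giving $\Rank{\psi}{\Pi'\land\varphi}\le\Rank{\psi}{\Pi\land\varphi}$ in the lexicographic order.

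Parts (4) and (5) share one argument. In every premise of the relevant rule the succedent is unchanged and the antecedent is obtained from $\varphi$ by adding to its pure part a single literal $\delta$ --- namely $t=t'$ or $t\neq t'$ for $\Pto$LsEM and $\Pto$DllEM, and $t\le t'\le u$, $t'<t$, or $u<t'$ for $\Arr$ListEM. By (3) every other succedent disjunct $\psi_i$ has $\Rank{\psi_i}{\delta\land\varphi}\le\Rank{\psi_i}{\varphi}$, so by (c) it suffices to exhibit a strict drop for the distinguished disjunct $\psi$, i.e.\ $\Rank{\psi}{\delta\land\varphi}<\Rank{\psi}{\varphi}$. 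Since the first two coordinates are untouched, this reduces to $\degEM{\psi}{\delta\land\varphi}<\degEM{\psi}{\varphi}$. The hypotheses of (4) (resp.\ (5)) say precisely that for the head atom $\sigma_0$ of $\varphi$ --- a points-to atom, resp.\ an array atom --- and the head atom $\sigma_0' = L(t',\wild)$ of $\psi$ one has $\Deg(\Pi_\varphi,\sigma_0,\sigma_0') = 1$; and a one-line check shows that conjoining $\delta$ makes one of the two witnessing conditions contradictory (e.g.\ $t\neq t'\land t=t'\land\Pi_\varphi$, or $t\le t'\le u\land t'<t\land\Pi_\varphi$, is unsatisfiable), so $\Deg(\delta\land\Pi_\varphi,\sigma_0,\sigma_0') = 0$, while every other summand is non-increasing by the argument of (3). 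Hence $\degEM{\psi}{\delta\land\varphi}<\degEM{\psi}{\varphi}$, which yields the strict decrease of $\Rank{\psi}{\delta\land\varphi}$ below $\Rank{\psi}{\varphi}$ and, via (c), the desired inequality between the two entailments.

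I expect the genuinely delicate point to be the case analysis at the end of (4) and (5): verifying that the literal $\delta$ introduced by each inference rule is exactly the one that annihilates the contribution of the matched head atoms to $\degEM{\psi}{\varphi}$, and that the rule's side condition --- that the ``both branches satisfiable'' test succeeds --- is precisely what guarantees that contribution was $1$ beforehand. Everything else, namely the first-position-of-disagreement arguments for (a)--(c) and the coordinate-by-coordinate bookkeeping, is routine.
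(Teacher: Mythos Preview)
Your proposal is correct and follows essentially the same route as the paper. The paper's proof is terser: it dismisses (1) and (2) as immediate from the lexicographic order on sorted sequences, proves (3) by observing that each summand $\Deg(\Pi_\varphi,\sigma,\sigma')$ can only drop when the pure part is strengthened, and handles (4)--(5) by exhibiting the strict drop $\Deg(\delta\land\Pi_\varphi,\sigma_0,\sigma_0') < \Deg(\Pi_\varphi,\sigma_0,\sigma_0')$ at the matched head pair while invoking (3) for every other disjunct in $\Vec\psi$. Your explicit identification of the measure with a multiset and your isolation of facts (a)--(c) about the Dershowitz--Manna order add useful scaffolding, but the content is the same; in particular your fact (c) is precisely the step the paper takes (somewhat elliptically) when it combines the strict inequality for $\psi$ with the weak inequality from (3) for the remaining $\psi_1\in\Vec\psi$ to conclude the entailment comparison.
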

\begin{proof}
(1) and (2) are immediately shown since 
the order $\le_{lex}$ on measures is a lexicographic order and 
the preorder $<$ on entailments is defined by $<_{lex}$. 

(3) Assume $\Pi\subseteq \Pi'$. 
Take arbitrary $\varphi$ and $\psi$. 
We have 
$\Deg(\Pi'\land\Pi_\varphi,\sigma,\sigma') \le \Deg(\Pi\land\Pi_\varphi,\sigma,\sigma')$ 
since 
$\Pi\land\Pi_\varphi$ is satisfiable if $\Pi'\land\Pi_\varphi$ is satisfiable. 
Hence $\degEM{\psi}{\Pi'\land\varphi} \le \degEM{\psi}{\Pi\land\varphi}$ holds. 
Therefore we have 
$\Rank{\psi}{\Pi'\land\varphi} \le \Rank{\psi}{\Pi\land\varphi}$. 

(4) Suppose the assumption of the claim (4). 
We have 
$\degEM{\psi}{t=t'\land\varphi} < \degEM{\psi}{\varphi}$ 
since 
$\Deg(t=t'\land\Pi_\varphi,t\Pto(\wild,\wild),L(t',\wild)) < \Deg(\Pi_\varphi,t\Pto(\wild,\wild),L(t',\wild))$. 
Hence $\Rank{\psi}{t=t'\land\varphi} < \Rank{\psi}{\varphi}$ holds. 
We also have 
$\Rank{\psi_1}{t=t'\land\varphi} < \Rank{\psi_1}{\varphi}$ 
by (3). 
Thus we obtain 
$(t=t'\land\varphi\vdash\Vec{\psi},\psi) < (\varphi\vdash\Vec{\psi},\psi)$. 
Similarly we can also show $(t\neq t'\land\varphi\vdash\Vec{\psi},\psi) < (\varphi\vdash\Vec{\psi},\psi)$. 

(5) is shown in a similar way to (4). 
\end{proof}

\begin{lem}\label{lem:key-for-termination}
Let $\R$ be an inference rule 
other than (Start) and (UnsatL). 
%(UnsatR), 
%($\Pto$LsEM), 
%($\Pto$DllEM), 
%(ArrListEM), 
%(LsElim),
%(DllElim),
%($\Pto$Ls),
%($\Pto$Dll),
%(ArrLs), or 
%(ArrDll). 
Then we have $J' < J$ for any $J' \in \Apply{\R}(J)$. 
\end{lem}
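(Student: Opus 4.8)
The plan is to prove the statement by case analysis on the inference rule $\R$, going through the ten rules of Figure~\ref{fig:inferencerules} that have premises (every rule except (Start) and (UnsatL)). For each such $\R$ and each $J' \in \Apply{\R}(J)$ I will show $\widetilde{J'} <_{\rm lex} \widetilde{J}$, i.e.\ $J' < J$; since $<$ is already known to be well-founded, this suffices. The bulk of the work is simply to match each rule with the right clause of Lemma~\ref{lem:order}, occasionally after a short computation of the degree components $\numList$, $\degUnfold{\cdot}{\cdot}$, and $\degEM{\cdot}{\cdot}$.

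The routine cases come first. For (UnsatR) the premise merely deletes one disjunct from the succedent, so $J' < J$ is immediate from Lemma~\ref{lem:order}(1). For ($\Pto$LsEM) and ($\Pto$DllEM) the antecedent has the form $t\Pto(\wild,\wild)*\varphi'$, the distinguished succedent disjunct has the form $L(t',\wild)*\psi'$, and the side condition asserts that $t=t'\land\Pi_\varphi$ and $t\neq t'\land\Pi_\varphi$ are both satisfiable; these are exactly the hypotheses of Lemma~\ref{lem:order}(4), so both premises lie strictly below $J$. Similarly, for ($\Arr$ListEM) the antecedent is $\Arr(t,v)*\varphi'$ and the side condition provides the hypotheses of Lemma~\ref{lem:order}(5), so all three premises are strictly below $J$. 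Finally, for (LsElim), (DllElim), ($\Arr$Ls) and ($\Arr$Dll) the unique premise replaces a succedent disjunct $L(t',\wild)*\psi'$ by a QF symbolic heap whose spatial part has dropped that list atom; hence $\numList$ of that disjunct strictly decreases, so the first component of its rank strictly decreases, and Lemma~\ref{lem:order}(2) yields $J' < J$.

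The only cases that need more than ``$\numList$ goes down'' are ($\Pto$Ls) and ($\Pto$Dll), and I expect these to be the main point of the proof. There the distinguished succedent disjunct $L(t',\wild)*\psi'$ is replaced by \emph{two} disjuncts: a pure strengthening (for which $\numList$ drops by one, as before), and a disjunct of the form $t'\Pto(v,w)*\Ls(v,u')*\psi'$ (respectively $t\Pto(v,w')*\Dll(v,u',v',t')*\psi'$) in which the number of list atoms is unchanged. For this second disjunct the first component of the rank is no help, but it contains exactly one more points-to atom than the original disjunct while the antecedent $\varphi$ is untouched, so $\degUnfold{\cdot}{\varphi}$ of that disjunct is smaller by one; thus its rank still strictly decreases, and Lemma~\ref{lem:order}(2) applies to both new disjuncts simultaneously, giving $J' < J$. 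This completes the case analysis, and hence the lemma.
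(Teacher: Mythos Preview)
Your proposal is correct and follows essentially the same case analysis as the paper's proof: (UnsatR) via Lemma~\ref{lem:order}(1); the EM rules via Lemma~\ref{lem:order}(4)/(5); (LsElim), (DllElim), ($\Arr$Ls), ($\Arr$Dll) via Lemma~\ref{lem:order}(2) because $\numList$ drops; and ($\Pto$Ls), ($\Pto$Dll) via Lemma~\ref{lem:order}(2) using that one new disjunct has smaller $\numList$ while the other keeps $\numList$ but has strictly smaller $\degUnfold{\cdot}{\varphi}$. Your citation of Lemma~\ref{lem:order}(5) for ($\Arr$ListEM) is in fact more accurate than the paper, which lumps it together with the points-to EM rules under clause~(4).
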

\begin{proof}
We show the claim for each rule of $\R$. 

The claim for (UnsatR) is shown by Lemma~\ref{lem:order}~(1). 

The claims for ($\Pto$LsEM), ($\Pto$DllEM) and (ArrListEM) are shown by Lemma~\ref{lem:order}~(4). 

The claims for (LsElim), (DllElim), (ArrLs) and (ArrDll) are shown by Lemma~\ref{lem:order}~(2) since the number of $\Ls$ and $\Dll$ is reduced. 

The claim for ($\Pto$Ls) is shown as follows: 
Let $J$ and $J'$ be 
$t\Pto(v,w)*\varphi\vdash\Ls(t',u')*\psi,\Vec{\psi}$ 
and 
$t\Pto(v,w)*\varphi\vdash t'=u'\land\psi,t\Pto(v,w)*\Ls(v,u')*\psi,\Vec{\psi}$, 
respectively. 
Then we have 
$\Rank{t'=u'\land\psi}{t\Pto(v,w)*\varphi} < \Rank{\Ls(t',u')*\psi}{t\Pto(v,w)*\varphi}$ 
since the number of $\Ls$ is reduced. 
We also have 
$\Rank{t\Pto(v,w)*\Ls(v,u')*\psi}{t\Pto(v,w)*\varphi} < \Rank{\Ls(t',u')*\psi}{t\Pto(v,w)*\varphi}$ 
since 
$\numList{(t\Pto(v,w)*\Ls(v,u')*\psi)} = \numList{(\Ls(t',u')*\psi)}$ 
and 
$\degUnfold{t\Pto(v,w)*\Ls(v,u')*\psi}{t\Pto(v,w)*\varphi} < \degUnfold{\Ls(t',u')*\psi}{t\Pto(v,w)*\varphi}$. 
Hence we have $J' < J$ by Lemma~\ref{lem:order} (2). 

The claim for ($\Pto$Dll) is shown similarly to the case ($\Pto$Ls). 
\end{proof}

By the above lemma, we can show termination of $\Search$. 

\begin{lem}[Termination]\label{lem:termination}
$\Search(J)$ terminates for any $J$. 
\end{lem}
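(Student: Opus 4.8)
The plan is to prove, by well-founded induction on the measure $\widetilde{J}$ with respect to the preorder $<$ on entailments, that $\Search(J)$ terminates for every $J$. This is legitimate because, as noted just before Lemma~\ref{lem:order}, $<$ is a well-founded preorder: it is defined from the lexicographic order $<_{{\rm lex}}$ on measures, which is a well-order, so there is no infinite $<$-decreasing chain.

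First I would observe that, ignoring its recursive calls, a single invocation $\Search(J)$ does only finitely much terminating work. Selecting the rule $\R$ consists of testing the rules in the fixed order given in Figure~\ref{fig:search}; each applicability test is either a Presburger (un)satisfiability question or, in the case of (Start), the question whether $\varphi\models\Vec\psi$ holds in $\SLAR$, and all of these are decidable --- the latter by Theorem~\ref{thm:decidability-slar}. If no rule applies, $\Search(J)$ returns $\Fail$ at once. Otherwise it computes $\{J_1,\ldots,J_k\} = \Apply{\R}(J)$, which is effective and yields a bounded number $k\le 3$ of premises, since $\Apply{\R}$ merely selects (nondeterministically) among the finitely many premise instances of $\R$ whose side conditions are satisfied. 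Hence the only way $\Search(J)$ could fail to terminate is through an infinite chain of recursive calls.

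Now for the recursive case: a rule $\R$ has been selected and $\{J_1,\ldots,J_k\} = \Apply{\R}(J)$. If $\R$ is (Start) or (UnsatL), then $\R$ is an axiom with no premises, so $k=0$, no recursive call is made, and $\Search(J)$ returns the tuple for the corresponding one-node derivation tree (the guard ``each of $\Search(J_i)$ returns a tuple'' holds vacuously); this invocation terminates. If $\R$ is any of the remaining rules, then by Lemma~\ref{lem:key-for-termination} we have $J_i < J$ for each $i$, so by the induction hypothesis every $\Search(J_i)$ terminates; therefore $\Search(J)$ performs finitely many terminating computations and then returns either $\Fail$ or $\Deriv{\R}{J}{\Search(J_1),\ldots,\Search(J_k)}$, and so it terminates. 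This closes the induction.

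The one point that needs care --- and the reason Lemma~\ref{lem:key-for-termination} explicitly excludes (Start) and (UnsatL) --- is precisely the handling of those two axioms: they can be selected even when $<$ does not decrease, but being axioms they spawn no recursive call, so every actual recursive call is on a strictly $<$-smaller entailment and well-foundedness finishes the argument. I do not anticipate any real difficulty here, since the substantive content (that all the other rules strictly decrease the measure under $\Apply{\R}$) has already been established in Lemma~\ref{lem:key-for-termination}.
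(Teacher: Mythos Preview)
Your proposal is correct and follows essentially the same approach as the paper: both rely on Lemma~\ref{lem:key-for-termination} and the well-foundedness of $<$ to rule out an infinite chain of recursive calls. The paper phrases this as a short proof by contradiction (an infinite call sequence would yield an infinite $<$-descending chain), whereas you give the equivalent direct well-founded induction and additionally spell out that the applicability tests themselves terminate and that the axiom rules (Start) and (UnsatL) spawn no recursive calls --- points the paper leaves implicit.
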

\begin{proof}
Suppose that $\Search$ does not terminate with an input $J_0$. 
Then there is an infinite sequence of recursive calls 
$\Search(J_0), \Search(J_1), \Search(J_2),\ldots$. 
By Lemma~\ref{lem:key-for-termination}, 
we have an infinite decreasing sequence $J_0 > J_1 > J_2 > \ldots$. 
This contradicts the well-foundedness of $<$. 
\end{proof}

\begin{lem}\label{lem:progress}
Let $J$ be a valid entailment. 
Then either of 
(UnsatL), 
(Start), 
(UnsatR), 
($\Pto$LsEM), 
(ArrListEM), 
(LsElim),
(ArrLs), 
($\Pto$Ls),
($\Pto$DllEM), 
(DllElim),
(ArrDll), or 
($\Pto$Dll)
is applicable to $J$. 
\end{lem}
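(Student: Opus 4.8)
The plan is to argue by cases on the form of $J \equiv \varphi \vdash \{\psi_i\}_{i\in I}$, roughly following the priority list used in $\Search$, exploiting throughout that when a pure part is satisfiable exactly one of three mutually exclusive situations holds for each spatial atom. First I would dispose of the two easy cases. If $\varphi$ is unsatisfiable then (UnsatL) applies. Otherwise $\varphi$ is satisfiable; and if every $\psi_i$ is list-free then $J$ is an entailment of $\SLAR$, so validity of $J$ makes (Start) applicable. Hence I may assume that $\varphi$ is satisfiable and that some $\psi_{i_0}$ contains a list predicate; using associativity and commutativity of $*$ I write $\psi_{i_0} \equiv \Pi_{i_0} \land (L(t',\Vec{u'}) * \Sigma')$, where $L(t',\Vec{u'})$ is $\Ls(t',u')$ or $\Dll(t',u_1',u_2',u_3')$. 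This is the only point where validity of $J$ enters, and (UnsatR) is never needed.

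The heart of the proof is a case analysis on the atoms of $\Sigma_\varphi$ relative to the first argument $t'$ of the chosen list predicate. Suppose first that $\Sigma_\varphi$ contains a points-to atom $t \Pto (v,w)$ with $t = t' \land \Pi_\varphi$ satisfiable; bringing this atom into head position, either $t \neq t' \land \Pi_\varphi$ is also satisfiable, so ($\Pto$LsEM) or ($\Pto$DllEM) applies, or else every model of the satisfiable $\Pi_\varphi$ validates $t = t'$, so $\Pi_\varphi \models t = t'$ and ($\Pto$Ls) or ($\Pto$Dll) applies. Symmetrically, suppose $\Sigma_\varphi$ contains an array atom $\Arr(t,v)$ with $t \le t' \le v \land \Pi_\varphi$ satisfiable; then either $(t' < t \vee v < t') \land \Pi_\varphi$ is satisfiable and (ArrListEM) applies, or $\Pi_\varphi \models t \le t' \le v$ and (ArrLs) or (ArrDll) applies (in each case taking the rule among the pair that matches $L$).

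In the remaining case no points-to atom $t\Pto(v,w)$ of $\Sigma_\varphi$ has $t = t' \land \Pi_\varphi$ satisfiable and no array atom $\Arr(t,v)$ of $\Sigma_\varphi$ has $t \le t' \le v \land \Pi_\varphi$ satisfiable, i.e.\ $\Pi_\varphi \models t \neq t'$ for every such points-to atom and $\Pi_\varphi \models \neg(t \le t' \le v)$ for every such array atom. Since $\Emp$ contributes the disjunct $\False$ to $\Cell{\cdot}{t'}$, a short induction on $\Sigma_\varphi$ following the defining clauses of $\Cell$ shows $\Pi_\varphi \models \neg \Cell{\Sigma_\varphi}{t'}$, whence $\Pi_\varphi \not\models \Cell{\Sigma_\varphi}{t'}$ because $\Pi_\varphi$ is satisfiable; so (LsElim) or (DllElim) applies. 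The three alternatives above --- entailed equality/inclusion, both sides satisfiable, entailed disequality/exclusion --- are jointly exhaustive precisely because $\Pi_\varphi$ is satisfiable, so the cases are complete.

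I expect the main obstacle to be the bookkeeping in the last step: making rigorous the passage from ``$\Pi_\varphi$ refutes each per-atom disjunct of $\Cell{\Sigma_\varphi}{t'}$'' to ``$\Pi_\varphi$ refutes $\Cell{\Sigma_\varphi}{t'}$'', given that $\Cell$ is defined by recursion on the leftmost atom (so this genuinely needs the induction and the associativity/unit conventions for $\Cell$), together with checking that it is legitimate in each branch to single out a points-to atom, an array atom, or the list atom $L(t',\Vec{u'})$ and bring it to head position so that the relevant rule instance literally matches $J$.
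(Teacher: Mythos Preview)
Your proof is correct and follows essentially the same case analysis as the paper's: pick a list predicate $L(t',\Vec{u'})$ in the succedent, then analyse the spatial atoms of $\Sigma_\varphi$ against $t'$ to find an applicable rule, with (LsElim)/(DllElim) as the fallback when every per-atom disjunct of $\Cell{\Sigma_\varphi}{t'}$ is refuted by $\Pi_\varphi$. Your observation that (UnsatR) is never needed for progress is a small but genuine simplification over the paper's proof, which inserts (UnsatR) as an early case and separately handles the situation where $\Sigma_\varphi$ contains no $\Pto$ or $\Arr$ atom (which you absorb into your case C).
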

\begin{proof}
If the antecedent of $J$ is unsatisfiable, then (UnsatL) is applicable to $J$.
If $J$ is list-free, then (Start) is applicable, since $J$ is valid. 
We consider the other cases. 

Let $(\varphi\vdash\Vec{\psi}) = J$. 
If there is $\varphi \in \Vec{\psi}$ 
such that $\varphi\land\psi$ is unsatisfiable, 
then (UnsatR) is applicable. 
In the following, we assume that $\varphi\land\psi$ is satisfiable 
for any $\psi\in\Vec{\psi}$. 
If $\varphi$ does not contain either $\Pto$ or $\Arr$, 
then (LsElim) or (DllElim) is applicable to $J$. 
Otherwise $\varphi$ contains $\Pto$ or $\Arr$, 
and $\Vec{\psi}$ contains list predicates. 

(1) 
The case that $\Vec{\psi}$ contains the $\Ls$ predicate. 
Fix $\Ls(t',u')$ in $\Vec{\psi}$. 
We consider the following subcases. 

(1.1) 
The case that there is $t\Pto(\wild,\wild)$ in $\varphi$ 
such that $t=t'\land\Pi_\varphi$ and $t\neq t'\land\Pi_\varphi$ 
are satisfiable. Then ($\Pto$EM) is applicable. 

(1.2) 
The case that there is $t\Pto(\wild,\wild)$ in $\varphi$ 
such that $t\neq t'\land\Pi_\varphi$ is unsatisfiable
(that is $\Pi_\varphi\models t=t'$). 
Then ($\Pto$Ls) is applicable. 

(1.3) 
The case that 
$t = t'\land\Pi_\varphi$ is unsatisfiable 
(that is $\Pi_\varphi\models t\neq t'$) 
for all $t\Pto(\wild,\wild)$ in $\varphi$. 
We consider the following subsubcases. 

(1.3.1) 
The case that there is $\Arr(t,u)$ in $\varphi$ 
such that 
$t\le t'\le u\land\Pi_\varphi$ 
and 
$(t'<t \vee u<t')\land\varphi_\varphi$ 
are satisfiable. 
Then (ArrLsEM) is applicable. 

(1.3.2) 
The case that there is $\Arr(t,u)$ in $\varphi$ 
such that 
$(t'<t \vee u<t')\land\Pi_\varphi$ 
is unsatisfiable
(that is $\Pi_\varphi\models t\le t' \le u$). 
Then (ArrLs) is applicable. 

(1.3.3)
The case that 
$t\le t'\le u\land\Pi_\varphi$ 
is unsatisfiable 
(that is $\Pi_\varphi\models t'<t \vee u < t'$) 
for all $\Arr(t,u)$ in $\varphi$. 
This case (LsElim) is applicable since $\Pi_\varphi\not\models\Cell{\Sigma_\varphi}{t'}$. 

(2)
The case that $\Vec{\psi}$ does not contain the $\Ls$ predicate. 
There exists $\Dll(t',u',v',w')$ in $\Vec{\psi}$. 
We can show the claim for this case in a similar way to (1). 
\end{proof}

By using the results of this section, we can show correctness of the proof search algorithm. 

\begin{prop}[Correctness]\label{prop:correctness}
$J$ is valid if and only if $\Search(J)$ returns a proof of $J$. 
\end{prop}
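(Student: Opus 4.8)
The plan is to prove the two implications separately, each by an induction that the lemmas of this section have already prepared: soundness (a returned tuple witnesses validity) by induction on the returned derivation tree, and completeness (validity forces $\Search$ to succeed) by well-founded induction on the preorder $<$ on entailments.

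First I would handle the direction ``$\Search(J)$ returns a proof implies $J$ is valid''. By construction of $\Search$ in Figure~\ref{fig:search}, the returned tuple encodes a finite derivation tree whose root is $J$, whose internal nodes are instances of the rules of Figure~\ref{fig:inferencerules}, and whose leaves are instances of (Start) or (UnsatL), the only premise-less rules. Arguing by induction on the structure of this tree: at a leaf, Proposition~\ref{prop:sound_complete} (soundness of (Start) and (UnsatL)) gives validity of the node outright, using in the (Start) case that a list-free entailment valid in $\SLAR$ is valid in $\SLLA$ since the semantics coincide on list-free formulas; at an internal node obtained by a rule $\R$ with premises $J_1,\ldots,J_k$, the induction hypothesis gives that each $J_i$ is valid, and soundness of $\R$ from Proposition~\ref{prop:sound_complete} yields validity of the conclusion.

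For the direction ``$J$ valid implies $\Search(J)$ returns a proof of $J$'', I would do well-founded induction on $J$ with respect to $<$, which is well-founded by the remark preceding Lemma~\ref{lem:order}. Fix a valid $J$ and assume the claim for all $J' < J$. By Lemma~\ref{lem:termination} the call $\Search(J)$ terminates, returning either a tuple or $\Fail$, so it suffices to rule out $\Fail$. Since $J$ is valid, Lemma~\ref{lem:progress} tells us one of the rules tried by $\Search$ is applicable to $J$; hence $\Search$ does not reach the final $\Fail$ at the bottom of its rule-selection cascade but instead sets $\R$ to the first applicable rule in its priority order. If $\R$ is (Start) or (UnsatL), then $\Apply{\R}(J)=\emptyset$ and $\Search(J)$ immediately returns a one-node tuple, a proof of $J$. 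Otherwise $\Search$ computes $\{J_1,\ldots,J_k\} = \Apply{\R}(J)$, the premises of an instance of $\R$ with conclusion $J$; by local completeness of $\R$ (Proposition~\ref{prop:sound_complete}) and validity of $J$ each $J_i$ is valid, and by Lemma~\ref{lem:key-for-termination} each $J_i < J$, so the induction hypothesis gives that each $\Search(J_i)$ returns a proof of $J_i$. Therefore $\Search(J)$ returns the tuple built from $\R$ and the $\Search(J_i)$, which is a proof of $J$.

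The two inductions are routine; what needs care is bookkeeping. The completeness induction must be anchored on the well-founded preorder $<$ rather than on recursion depth, and Lemma~\ref{lem:key-for-termination} must be available for every rule except the two premise-less axioms --- which is exactly why those are excluded from its statement. Moreover $\Apply{\R}$ is non-deterministic, but harmlessly so: local completeness and Lemma~\ref{lem:key-for-termination} hold for every instance of $\R$ with conclusion $J$, so any choice of $\Apply{\R}(J)$ yields valid, strictly smaller premises. I expect the only real (and mild) obstacle to be verifying that the control flow of $\Search$ matches this argument step by step --- in particular that the cascade of conditional tests in Figure~\ref{fig:search} realises exactly the case analysis of Lemma~\ref{lem:progress}, so that a valid input never falls through to $\Fail$.
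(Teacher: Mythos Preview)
Your proposal is correct and follows essentially the same approach as the paper: soundness for the if-part, and for the only-if part an induction that uses Lemma~\ref{lem:progress} to find an applicable rule and local completeness (Proposition~\ref{prop:sound_complete}) to push validity to the premises. The only cosmetic difference is that the paper phrases the completeness induction as ``induction on the computation of $\Search(J)$'' (justified by Lemma~\ref{lem:termination}) rather than your explicit well-founded induction on $<$ via Lemma~\ref{lem:key-for-termination}; these are the same argument, yours just unwinds one more layer.
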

\begin{proof}
The if-part is shown by soundness of the proof system (Proposition~\ref{prop:sound_complete}). 
We show the only-if part. 
Assume that $J$ is a valid entailment. 
By Lemma~\ref{lem:termination}, $\Search(J)$ terminates. 
We show the claim by induction on computation of $\Search(J)$. 
By Lemma~\ref{lem:progress}, some rule $\R$ is applicable to $J$. 
Then $J_1,\ldots,J_k$ are obtained by $\Apply{\R}(J)$. 
By local completeness of the proof system (Proposition~\ref{prop:sound_complete}), 
each $J_k$ is valid. 
Hence, by the induction hypothesis, 
$\Search(J_i)$ returns a tuple that represents a proof of $J_i$
for any $i\in\{1,\ldots,k\}$. 
Therefore $\Search(J)$ returns a tuple that represents a proof of $J$. 
\end{proof}

\section{Decidability of Entailment Problem for $\SLLA$}

This section shows the second theorem of this paper, 
namely the decidability of the entailment problem of $\SLLA$, 
by combining the four results of the previous sections. 

\begin{thm}[Decidability of $\SLLA$]\label{thm:decidability-slla}
Checking the validity of entailments in $\SLLA$ is decidable. 
\end{thm}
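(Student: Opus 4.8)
The plan is to realize the decision procedure for $\SLLA$ as a two-stage reduction: first eliminate the list predicates from the antecedent of a given entailment using the unroll-collapse properties of Section~\ref{sect:unroll}, and then eliminate the list predicates from the succedent by the proof-search procedure of Section~\ref{subsect:proofsearch}, whose leaves are entailments of $\SLAR$ decidable by Theorem~\ref{thm:decidability-slar}.

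For the first stage, let $J \equiv \varphi \vdash \{\varphi_i \mid i \in I\}$ be an $\SLLA$ entailment. If $\varphi$ is list-free, nothing is done. Otherwise, using commutativity of $*$, bring a list atom to the front and write $\varphi \equiv \Ls(t,u) * \phi$ or $\varphi \equiv \Dll(t,u,v,w) * \phi$, where $\phi$ has one fewer list atom than $\varphi$. By Proposition~\ref{prop:unroll-ls} (resp. Proposition~\ref{prop:unroll-dll}), $J$ is equivalent to the conjunction of the two (resp. three) entailments obtained by replacing this leading list atom with pure equalities and points-to atoms, taking the auxiliary variables $z,y_1,y_2$ (resp. $z$) fresh; the result is still a finite set of QF entailments of $\SLLA$ (the fresh variables just occur free). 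Each of these entailments has an antecedent with strictly fewer list atoms, since the leading list atom is replaced only by points-to atoms and pure formulas. Hence, by induction on the number of list atoms in the antecedent, $J$ is equivalent to the validity of a finite, effectively computable family $J_1,\dots,J_m$ of entailments, each with a list-free antecedent; their succedents may still contain list predicates.

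For the second stage, apply the proof-search procedure $\Search$ to each $J_\ell$; since each $J_\ell$ has a list-free antecedent, this is exactly its intended domain. By Lemma~\ref{lem:termination}, $\Search(J_\ell)$ terminates. Every step is effective: the side conditions of the inference rules of Figure~\ref{fig:inferencerules} are Presburger satisfiability/validity questions --- satisfiability of a list-free symbolic heap is Presburger-expressible, just as in the clause \textbf{(empty)} of the translation $P$, and $\Cell{\Sigma}{t}$ is a Presburger formula (Lemma~\ref{lem:cell}) --- and the leaf rule (Start) asks whether a list-free entailment holds in $\SLAR$, which is decidable by Theorem~\ref{thm:decidability-slar}. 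It matters here that $\SLLA$ entailments are quantifier-free and that the inference rules never introduce an existentially quantified variable into a succedent, so the array-size condition required by Theorem~\ref{thm:decidability-slar} is vacuously satisfied on every $\SLAR$ query raised by (Start). By Proposition~\ref{prop:correctness}, $J_\ell$ is valid iff $\Search(J_\ell)$ returns a proof, and since $\Search(J_\ell)$ always halts this is decidable. Therefore $J$ is valid iff $\Search(J_\ell)$ returns a proof for all $\ell \le m$, which establishes decidability.

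The step I expect to be the main obstacle is the bookkeeping of the first stage: one must verify that applying the right-hand side of Proposition~\ref{prop:unroll-ls} or~\ref{prop:unroll-dll} always removes a list atom from the antecedent and never creates a new one, so that the antecedent measure genuinely decreases and the recursion terminates with only finitely many entailments; and one must confirm that the fresh variables introduced there, together with the terms generated during $\Search$, never spoil the quantifier-freeness on which the invocation of Theorem~\ref{thm:decidability-slar} inside (Start) relies. Both points are routine once the antecedent measure is fixed, but they are exactly what makes the composition of the unroll-collapse properties, the termination and correctness of $\Search$, and the decidability of $\SLAR$ into a single algorithm legitimate.
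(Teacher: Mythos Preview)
Your proposal is correct and follows essentially the same two-stage approach as the paper: unroll collapse to eliminate list predicates from the antecedent, then $\Search$ to eliminate them from the succedent, with $\SLAR$ decidability at the leaves. Your write-up is in fact more careful than the paper's own proof on two points the paper leaves implicit --- the termination of the first stage via the strictly decreasing count of list atoms in the antecedent, and the observation that quantifier-freeness of $\SLLA$ succedents makes the array-size hypothesis of Theorem~\ref{thm:decidability-slar} vacuous at every (Start) leaf.
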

\begin{proof}
We first give the decision procedure for $\SLLA$. 
An entailment $J$ of $\SLLA$ is given as an input for the procedure. 
Then it performs as follows. 
(i) The decision procedure eliminates the list predicates that appear in the antecedent of $J$ 
by using the unroll collapse (Propositions~\ref{prop:unroll-ls} and \ref{prop:unroll-dll}). 
Then it obtains entailments $J_1,\ldots,J_k$ whose antecedents are list-free. 
(ii) It computes $\Search(J_i)$ ($i=1,\ldots,k$). 
It returns ``valid'' if each of $\Search(J_i)$ returns a tuple. 
Otherwise it returns ``invalid''. 

Termination property of the decision procedure can be obtained from
the termination property of $\Search$ (Proposition~\ref{lem:termination}). 
Correctness of the decision procedure is stated as follows: 
the decision procedure returns ``valid'' for an input $J$ if and only if 
$J$ is valid. We show this. 
By the unroll collapse, the validity of $J$ is equivalent to that of $J_1,\ldots,J_k$. 
Hence $J$ is valid if and only if 
$\Search(J_i)$ returns a proof of $J_i$ for all $i$ 
by the correctness property of $\Search$ (Proposition~\ref{prop:correctness}). 
This is equivalent to that the decision procedure returns ``valid''.
\end{proof}

\begin{exa}\label{ex:decision}\rm
We show how our decision procedure works with an example 
$\Arr(1,2) * 3 \Pto (10,0) * \Ls(10,20) \vdash \Arr(1,3) * \Ls(10,20)$. 

By using the unroll collapse (Proposition \ref{prop:unroll-ls})
taking $\Ls(t,u)$ to be $\Ls(10,20)$, $\phi$ to be $\Arr(1,2) * 3 \Pto (10,0)$, and 
$\Vec{\psi}$ to be $\Arr(1,3) * \Ls(10,20)$, 
we obtain the following entailments:
\begin{align*}
(J_{a})
\quad
&
10 = 20 \land \Arr(1,2) * 3\Pto(10,0) \vdash \Arr(1,3) * \Ls(10,20),
\\
(J_{b}) 
\quad
&
\Arr(1,2) * 3\Pto(10,0) * 10\Pto(z,y) * z\Pto(20,w) \vdash \Arr(1,3) * \Ls(10,20).
\end{align*}
Then $\Search(J_{a})$ and $\Search(J_{b})$ are performed. 
$\Search(J_{a})$ immediately returns a tuple $({\rm UnsatL}, J_{a})$, 
since the antecedent is unsatisfiable. 
Computation of $\Search(J_{b})$ is done as follows:

{\bf Begin $\Search(J_{b})$}: $\Search(J_{b})$ is called.
Then $\R$ is set by ($\Pto$Ls), since it is applicable to $J_b$. 
$\Apply{\R}{(J_{b})}$ is performed. 
An instance of ($\Pto$Ls) is non-deterministically chosen:
$t\Pto (v,w)$ is taken to be $10 \Pto (z,y)$,
$\varphi$ is taken to be $\Arr(1,2) * 3\Pto(10,0) * z\Pto(20,w)$, 
$\Ls(t',u')$ is taken to be $\Ls(10,20)$,
and $\psi$ is taken to be $\Arr(1,3)$. 
Then $\Apply{\R}{(J_{b})}$ produces the following one subgoal:
\begin{align*}
  (J_{b1})\quad
  &
  \Arr(1,2) * 3\Pto(10,0) * 10\Pto(z,y) * z\Pto(20,w)
  \\
  &
  \vdash 10=20\land\Arr(1,3), \Arr(1,3)*10\Pto(z,y)*\Ls(z,20).
\end{align*}

{\bf Begin $\Search(J_{b1})$}: $\Search(J_{b1})$ is called.
Then $\R$ is set by (UnsatR). 
$\Apply{\R}{(J_{b1})}$ is performed. 
An instance of (UnsatR) is chosen, where 
$\varphi$ is taken to be the antecedent of $J_{b1}$,
$\psi$ is taken to be $10=20\land\Arr(1,3)$, and 
$\Vec{\psi}$ is taken to be $\Arr(1,3)*10\Pto(z,y)*\Ls(z,20)$. 
Then $\Apply{\R}{(J_{b1})}$ produces the following one subgoal:
\begin{align*}
  (J_{b2})\qquad
  &
  \Arr(1,2) * 3\Pto(10,0) * 10\Pto(z,y) * z\Pto(20,w)
  \\
  &
  \vdash \Arr(1,3)*10\Pto(z,y)*\Ls(z,20).
\end{align*}

{\bf Begin $\Search(J_{b2})$}: $\Search(J_{b2})$ is called.
Then $\R$ is set by ($\Pto$Ls). 
$\Apply{\R}{(J_{b2})}$ is performed. 
An instance of ($\Pto$Ls) is chosen, where 
$t\Pto (v,w)$ is taken to be $z\Pto(20,w)$,
$\varphi$ is taken to be $\Arr(1,2) * 3\Pto(10,0) * 10\Pto(z,y)$, 
$\Ls(t',u')$ is taken to be $\Ls(z,20)$,
$\psi$ is taken to be $\Arr(1,3)*10\Pto(z,y)$, and
$\Vec{\psi}$ is taken to be empty. 
Then $\Apply{\R}{(J_{b2})}$ produces the following one subgoal:
\begin{align*}
  \hspace{-1cm}(J_{b3})\quad
  &
  \Arr(1,2) * 3\Pto(10,0) * 10\Pto(z,y) * z\Pto(20,w)
  \\
  &
  \vdash z = 20 \land \Arr(1,3) * 10\Pto(z,y), z \Pto (20,w) *\Ls(20,20) * \Arr(1,3) * 10\Pto(z,y).
\end{align*}

{\bf Begin $\Search(J_{b3})$}: $\Search(J_{b3})$ is called.
Then $\R$ is set by (UnsatR). 
$\Apply{\R}{(J_{b3})}$ is performed. 
An instance of (UnsatR) is chosen, where 
$\varphi$ is taken to be the antecedent of $J_{b3}$,
$\psi$ is taken to be $z = 20 \land \Arr(1,3) * 10\Pto(z,y)$, and
$\Vec{\psi}$ is taken to be $z \Pto (20,w) *\Ls(20,20) * \Arr(1,3) * 10\Pto(z,y)$. 
Now $\varphi \land \psi$ is unsatisfiable, 
since the sizes of the required heaps by these $\varphi$ and $\psi$ are different. 

Then $\Apply{\R}{(J_{b3})}$ produces the following one subgoal:
\begin{align*}
  (J_{b4})\quad
  &
  \Arr(1,2) * 3\Pto(10,0) * 10\Pto(z,y) * z\Pto(20,w)
  \\
  &
  \vdash z \Pto (20,w) *\Ls(20,20) * \Arr(1,3) * 10\Pto(z,y).
\end{align*}

{\bf Begin $\Search(J_{b4})$}: $\Search(J_{b4})$ is called.
Then $\R$ is set by ($\Pto$LsEM). 
$\Apply{\R}{(J_{b4})}$ is performed. 
An instance of ($\Pto$LsEM) is non-deterministically chosen, where
$t\Pto(v,w)$ is taken to be $z\Pto(20,w)$,
$\varphi$ is taken to be $\Arr(1,2) * 3\Pto(10,0) * 10\Pto(z,y)$, 
$\Ls(t',u')$ is taken to be $\Ls(20,20)$, 
$\psi$ is taken to be the succedent of $J_{b4}$, 
and $\Vec{\psi}$ is taken to be empty. 
Then $\Apply{\R}{(J_{b4})}$ produces the following two subgoals:
\begin{align*}
  (J_{b51})\quad
  z \neq 20 &\land \Arr(1,2) * 3\Pto(10,0) * 10\Pto(z,y) * z\Pto(20,w)
  \\
  \vdash\quad
  &
  z \Pto (20,w) *\Ls(20,20) * \Arr(1,3) * 10\Pto(z,y),   \hbox{ and }
  \\
  (J_{b52})\quad
  z = 20 &\land \Arr(1,2) * 3\Pto(10,0) * 10\Pto(z,y) * z\Pto(20,w)
  \\
  \vdash\quad
  &
  z \Pto (20,w) *\Ls(20,20) * \Arr(1,3) * 10\Pto(z,y).  
\end{align*}

{\bf Begin $\Search(J_{b51})$}: $\Search(J_{b51})$ is called.
Then $\R$ is set by (LsElim). 
$\Apply{\R}{(J_{b51})}$ is performed. 
An instance of (LsElim) is chosen, where
$\varphi$ is taken to be the antecedent of $J_{b51}$, 
$\Ls(t',u')$ is taken to be $\Ls(20,20)$, 
$\psi$ is taken to be $z \Pto (20,w) * \Arr(1,3) * 10\Pto(z,y)$, 
and $\Vec{\psi}$ is taken to be empty. 
Then $\Apply{\R}{(J_{b51})}$ produces the following one subgoal:
\begin{align*}
  (J_{b61})\quad
  &
  z \neq 20 \land \Arr(1,2) * 3\Pto(10,0) * 10\Pto(z,y) * z\Pto(20,w)
  \\
  &
  \vdash 20 = 20 \land z \Pto (20,w) * \Arr(1,3) * 10\Pto(z,y).
\end{align*}

{\bf Begin $\Search(J_{b61})$}: $\Search(J_{b61})$ is called.
Then $\R$ is set by (Start). 
$\Apply{\R}{(J_{b61})}$ returns the empty list,
since $J_{b61}$ is evaluated to be valid by the decision procedure for {\bf SLA}
given by Theorem~\ref{thm:decidability-slar}.

{\bf End $\Search(J_{b61})$}: $\Search(J_{b61})$ returns a tuple
$({\rm Start},J_{b61})$, which is written as $\T_{b61}$. 

{\bf End $\Search(J_{b51})$}: $\Search(J_{b51})$ returns 
$({\rm LsElim},J_{b51},\T_{b61})$, which is written as $\T_{b51}$. 

{\bf Begin $\Search(J_{b52})$}: $\Search(J_{b52})$ is called.
Then $\R$ is set by ($\Pto$Ls). 
$\Apply{\R}{(J_{b52})}$ is performed. 
An instance of ($\Pto$Ls) is chosen, where
$t\Pto (v,w)$ is taken to be $z\Pto(20,w)$,
$\varphi$ is taken to be $z = 20 \land \Arr(1,2)*3\Pto(10,0)*10\Pto(z,y)$, 
$\Ls(t',u')$ is taken to be $\Ls(20,20)$, and 
$\psi$ is taken to be $z \Pto (20,w) * \Arr(1,3) * 10\Pto(z,y)$, and
$\Vec{\psi}$ is taken to be empty. 
Then $\Apply{\R}{(J_{b52})}$ produces the following one subgoal:
\begin{align*}
  (J_{b62})\quad
  z = 20 &\land \Arr(1,2) * 3\Pto(10,0) * 10\Pto(z,y) * z\Pto(20,w)
  \\
  \vdash\quad
  &
  20=20 \land z \Pto (20,w) * \Arr(1,3) * 10\Pto(z,y),
  \\
  &
  z \Pto (20,w) * 20\Pto(20,w) * \Ls(20,20) * \Arr(1,3) * 10\Pto(z,y).  
\end{align*}

{\bf Begin $\Search(J_{b62})$}: $\Search(J_{b62})$ is called.
Then $\R$ is set by (UnsatR), since the second clause in
the succedent of $J_{b62}$ is unsatisfiable under the assumption $z=20$. 
Then $\Apply{\R}{(J_{b62})}$ produces the following one subgoal:
\begin{align*}
  (J_{b72})\quad
  z = 20 &\land \Arr(1,2) * 3\Pto(10,0) * 10\Pto(z,y) * z\Pto(20,w)
  \\
  &
  \vdash
  20=20 \land z \Pto (20,w) * \Arr(1,3) * 10\Pto(z,y). 
\end{align*}

{\bf Begin $\Search(J_{b72})$}: $\Search(J_{b72})$ is called.
It immediately terminates, since
$J_{b72}$ is evaluated to be valid by the decision procedure for {\bf SLA}. 

{\bf End $\Search(J_{b72})$}: $\Search(J_{b72})$ returns a tuple
$({\rm Start},J_{b72})$, which is written as $\T_{b72}$. 

{\bf End $\Search(J_{b62})$}: $\Search(J_{b62})$ returns 
$({\rm UnsatR},J_{b62},\T_{b72})$, which is written as $\T_{b62}$. 

{\bf End $\Search(J_{b52})$}: $\Search(J_{b52})$ returns 
$(\Pto{\rm Ls},J_{b52},\T_{b62})$, which is written as $\T_{b52}$. 

{\bf End $\Search(J_{b4})$}: $\Search(J_{b4})$ returns 
$(\Pto{\rm LsEM},J_{b4},\T_{b51},\T_{b52})$, which is written as~$\T_{b4}$. 

{\bf End $\Search(J_{b3})$}: $\Search(J_{b3})$ returns 
$({\rm UnsatR},J_{b3},\T_{b4})$, which is written as $\T_{b3}$. 

{\bf End $\Search(J_{b2})$}: $\Search(J_{b2})$ returns 
$(\Pto{\rm Ls},J_{b2},\T_{b3})$, which is written as $\T_{b2}$. 

{\bf End $\Search(J_{b1})$}: $\Search(J_{b1})$ returns 
$({\rm UnsatR},J_{b1},\T_{b2})$, which is written as $\T_{b1}$. 

{\bf End $\Search(J_{b})$}: $\Search(J_{b})$ returns 
$(\Pto{\rm Ls},J_{b},\T_{b1})$, which is written as $\T_{b}$. 

Finally our decision procedure answers ``Valid'', since
each of $\Search(J_{a})$ and $\Search(J_{b})$ returns a tuple. 
\end{exa}

%\fi
\section{Conclusion}

We have shown the decidability results for the validity checking problem
of entailments for $\SLAR$ and $\SLLA$. 
First
we have given the decision procedure for $\SLAR$ and proved its correctness under the
condition that the sizes of arrays in the succedent are not existentially quantified.
The key idea of the decision procedure is the notion of the sorted entailments. 
By using this idea, we have defined the translation $P$ of a sorted entailment into a formula in Presburger arithmetic. 
Secondly we have proved the decidability for $\SLLA$. 
The key idea of the decision procedure is to extend 
the unroll collapse technique given in~\cite{OHearn04}
to arithmetic and arrays
as well as doubly-linked list segments.
We have also given a proof system and showed correctness of the proof search algorithm for 
eliminating the list predicates in the succedent of an entailment. 

We require the condition in 
the decidability for $\SLAR$ (Theorem~\ref{thm:decidability-slar}) from a technical reason. 
It would be future work to show the decidability without this condition.

\subsection*{Acknowledgments}
This is partially supported by Core-to-Core Program (A. Advanced
Research Networks) of the Japan Society for the Promotion of Science.

%       Bibliography 
\def\lncs#1#2#3{volume #1 of {\em Lecture Notes in Computer Science}, pages #2, #3, Springer.}
\def\berdine{Josh Berdine}
\def\ohearn{Peter W. O'Hearn}
\def\calcagno{Cristiano Calcagno}
\def\brotherston{James Brotherston}
\def\gorogiannis{Nikos Gorogiannis}
\def\kanovich{Max Kanovich}
\def\yang{Hongseok Yang}
\def\distefano{Dino Distefano}
\def\enea{Constantin Enea}
\def\saveluc{Vlad Saveluc}
\def\sighireanu{Mihaela Sighireanu}
\def\iosif{Radu Iosif}
\def\rogalewicz{Adam Rogalewicz}
\def\vojnar{Tom\'{a}\v{s} Vojnar}

%%% 6 
% J. Berdine, C. Calcagno, and P. W. O’Hearn. 
% Smallfoot: Modular automatic assertion checking with separation logic. 
% In FMCO, 2005.
% SLを基としたツール．Symbolic heap. 特にメモリ検証を強調しているわけではない．

%%% 8
%
%

%%% 20
% B. Jacobs, J. Smans, P. Philippaerts, F. Vogels, W. Penninckx, and F. Piessens. 
% VeriFast: A Powerful, Sound, Predictable, Fast Verifier for C and Java.
% In NASA Formal Methods,

%%% 35

\end{document}